\newif\ifarxiv
\DeclareMathOperator*{\argmin}{arg\,min}
\newcommand{\opt}{\textsf{opt}}
\newcommand{\Q}{\mathcal{Q}}
\newcommand{\T}{\mathcal{T}}
\newcommand{\eps}{\varepsilon}
\renewcommand{\epsilon}{\varepsilon}
\renewcommand{\Re}{\mathbb{R}}
\newcommand{\allattr}{{\mathbf A}}
\newcommand{\allrel}{{\mathbf R}}
\newcommand{\dom}{{\texttt{dom}}}
\newcommand{\poly}{{\textsf{poly}}}
\newcommand{\fhw}{}
\newcommand{\query}{\mathcal{Q}}
\newcommand{\I}{\mathbf{I}}
\newcommand{\dist}{\mathsf{dist}}
\newcommand{\probvector}{\sigma}
\newcommand{\canonical}{\mathcal{C}}
\newcommand{\tree}{\mathcal{T}}
\newcommand{\rangetree}{\mathcal{G}}
\newcommand{\stavros}[1]{{\color{red}Stavros says: #1}}
\newcounter{lpcounter}
\newcommand{\lpproblemtext}{%
  \makebox[0pt][r]{\text{($\mathsf{LP}$\arabic{lpcounter})}\quad}%
}
\newcommand{\lpproblemlabel}[1]{%
  \refstepcounter{lpcounter}%
  \label{lp:#1}%
}
\newcounter{fpcounter}
\newcommand{\prob}{\mathsf{CSO}}
\newcommand{\gprob}{\mathsf{GCSO}}
\newcommand{\relclusterone}{\mathsf{RCRO}}
\newcommand{\relclustertwo}{\mathsf{RCTO}}
\newcommand{\expon}{2.38}
\newcommand{\setrects}{\mathcal{R}}
\renewcommand{\O}{\tilde{O}}
\begin{document}

\lpproblemlabel{1}

\title{Clustering with Set Outliers and Applications in Relational Clustering}

\author{Vaishali Surianarayanan}
\orcid{0000-0003-3091-38238}
\affiliation{%
  \institution{
  University of California, Santa Cruz}
  \city{Santa Cruz}
  \country{USA}
}
\email{vaishalisurianarayanan@gmail.com}

\author{Neeraj Kumar}
\orcid{0000-0001-9356-526X}
\affiliation{%
  \institution{Meta Platforms}
  \city{Menlo Park}
  \country{USA}
}
\email{17.neeraj.s@gmail.com}

\author{Stavros Sintos}
\orcid{0000-0002-2114-8886}
\affiliation{%
  \institution{
  University of Illinois Chicago}
  \city{Chicago}
  \country{USA}
}
\email{stavros@uic.edu}

\renewcommand{\shortauthors}{Vaishali Surianarayanan, Neeraj Kumar, \& Stavros Sintos}

\begin{CCSXML}
<ccs2012>
   <concept>
       <concept_id>10003752.10010070.10010111</concept_id>
       <concept_desc>Theory of computation~Database theory</concept_desc>
       <concept_significance>500</concept_significance>
       </concept>
   <concept>
       <concept_id>10003752.10010070.10010111.10011710</concept_id>
       <concept_desc>Theory of computation~Data structures and algorithms for data management</concept_desc>
       <concept_significance>500</concept_significance>
       </concept>
 </ccs2012>
\end{CCSXML}

\ccsdesc[500]{Theory of computation~Database theory}
\ccsdesc[500]{Theory of computation~Data structures and algorithms for data management}

\keywords{clustering, $k$-center, outliers, relational clustering, relational data}

\begin{abstract}
We introduce and study the $k$-center clustering problem with \emph{set outliers}, a natural and practical generalization of the classical $k$-center clustering with outliers. Instead of removing individual data points, our model allows discarding up to $z$ subsets from a given family of candidate outlier sets $\mathcal{H}$. More formally, given a metric space $(P,\dist)$, where $P$ is a set of elements and $\dist$ a distance metric, a family of sets $\mathcal{H}\subseteq 2^P$, and parameters $k, z$, the goal is to compute a set of $k$ centers $S\subseteq P$ and a family of $z$ sets $H\subseteq \mathcal{H}$ to minimize $\max_{p\in P\setminus(\bigcup_{h\in H} h)} \min_{s\in S}\dist(p,s)$ (clustering cost). This abstraction captures structured noise common in database applications, such as faulty data sources or corrupted records in data integration and sensor systems. 

We present the first approximation algorithms for this problem in both general and geometric settings. Our methods provide tri-criteria approximations: selecting up to $2k$ centers and $2f z$ outlier sets (where $f$ is the maximum number of sets that a point belongs to), while achieving constant-factor approximation in clustering cost. In geometric settings, we leverage range and BBD trees to achieve near-linear time algorithms. In many real applications $f=1$. In this case we further improve the running time of our algorithms by constructing small \emph{coresets}. We also provide a hardness result for the general problem showing that it is unlikely to get any sublinear approximation on the clustering cost selecting less than $f\cdot z$ outlier sets.

We demonstrate that this model naturally captures relational clustering with outliers. We define and study two new formulations: one where outliers are result tuples in a join, and another where outliers are input tuples whose removal affects the join output. We provide approximation algorithms for both, establishing a tight connection between robust clustering and relational query evaluation.
\end{abstract}

\maketitle
\section{Introduction}
\label{sec:intro}

Clustering is a fundamental technique in computer science, widely applied in domains such as social network analysis, search result grouping, medical imaging, and image segmentation. By partitioning data into coherent groups, clustering helps uncover hidden structures and relationships that inform downstream tasks like classification, anomaly detection, and recommendation.

However, real-world data is often noisy due to human error or data integration issues, and many classical clustering objectives are not robust to such noise. For instance, in $k$-center clustering, where the goal is to minimize the maximum distance of any point to its closest center, a few outlier points can drastically alter the resulting clusters, potentially leading to misleading interpretations.
To address this, prior work has studied robust versions of clustering, notably $k$-center clustering with $z$ outliers. Given a metric space $(P,\dist)$, where $P$ is a set of elements and $\dist$ a distance metric, the objective is to select $k$ centers and remove $z$ outliers so as to minimize the maximum distance from any remaining point to its closest center. This problem is known to be $\mathsf{NP}$-complete, but efficient constant-factor approximation algorithms exist~\cite{charikar1999constant, charikar2003better}.

In this work, we introduce a novel and natural generalization of this problem: \emph{$k$-center clustering with $z$ set outliers}. Instead of removing individual outlier points, the goal is to remove $z$ subsets from a given family $\mathcal{H} \subseteq 2^P$ of candidate outlier sets. This formulation captures practical scenarios in data cleaning, fault-tolerant systems, and federated data analysis. For example, each subset might represent data from a single source in a data integration task, or a group of measurements from a single sensor in a sensor network. The goal is to discard entire sources that may be faulty or unreliable, rather than identifying individual points.

Consider the following example showing an instance of our problem in the geometric case where sets are represented as hyper-rectangles.
Transactions or user activities are embedded as points with features like (price, quantity, time, store location). 
Hyper-rectangles naturally model ranges of values, e.g., Price range: $[10,20]$, Time window: $[1\textsf{pm},3\textsf{pm}]$. So a rectangle corresponds to all transactions in a spatio-temporal range. Using an ML classifier a data analyst can identify a set of potential fraudulent patterns or promotion windows represented as hyper-rectangles. For example, all transactions with price $[0,1]$ that took place between $[1\textsf{am}-2\textsf{am}]$ are likely spam or test transactions. 
The goal is to exclude up to $z$ suspicious rectangles to reveal meaningful transaction patterns/clusters. 

We present the first approximation algorithms for this problem, which we formalize under multiple settings: general metric spaces with arbitrary subsets, and geometric settings where points lie in $\mathbb{R}^d$ and outlier sets correspond to axis-aligned hyper-rectangles. Our algorithms offer tri-criteria approximations: we return at most $2k$ centers and up to $2fz$ outlier sets from $\mathcal{H}$, while achieving a constant-factor approximation in clustering cost, where $f$ is the maximum number of sets an element belongs to. While $f$ can be large in theory, it is often small in practical settings.
We also show that if the unique game conjecture is true, there is no hope to design a polynomial time algorithm that returns $k$ centers, $(f-\zeta)z$ outlier sets, having any sublinear approximation on the clustering cost, where $\zeta>0$ is any positive number.
Our algorithms run in near-quadratic time for general metrics and near-linear time in the geometric setting. These results offer robust and scalable methods for clustering with structured noise.

Finally, we show how this model can capture relational clustering problems. In relational databases, data is stored across multiple tables, and complete information is only revealed through joins.
This is the standard formulation in real database systems ~\cite{link1, link:relational}.
We define two versions of relational $k$-center clustering with outliers: one where outliers are tuples in the join result, and another where outliers are input tuples whose removal affects the join. The former reduces to the relatively well-studied problem of $k$-center clustering with outliers 
whereas the latter reduces to geometric $k$-center clustering with set outliers problem introduced in this paper.
We provide new approximation algorithms for both these variations, thus establishing a bridge between robust clustering and relational data management.

To further motivate relational clustering with outliers, consider a crowdsourcing scenario where data is aggregated from multiple online sources and stored in a relational database.
A data analyst may want to cluster the results of a join query to construct a classifier (unsupervised learning). However, even a handful of erroneous tuples (outliers) in one table can lead to arbitrarily large errors in k-center clustering on the join result. This problem is particularly relevant in crowdsourcing, where identifying and mitigating erroneous data via cleaning techniques is a central challenge~\cite{chu2015katara}. Efficient algorithms for k-center clustering with outliers in the relational setting are therefore essential.

\vspace{-0.5em}
\subsection{Notation and Problem Definition}
Let $(P,\dist)$ be a metric space where $P$ is a set of elements, $\dist$ is a metric function and $\dist(p,q)$ denotes the distance between elements $p,q\in P$.  
For a set of elements $P$ and a subset of centers $C\subseteq P$, let $\rho(C,P)=\max_{p\in P}\min_{c\in C}\dist(p,c)$ be the \emph{clustering cost} defined as maximum distance of an element in $P$ from its closest center in $C$. The \emph{$k$-center clustering} problem is a well-studied problem where given $(P$, $\dist)$, and a parameter $k$ the goal is to compute a set $C\subseteq P$ with $|C|\leq k$ such that the cost $\rho(C,P)$ is minimized. 
Prior work has also studied the \emph{$k$-center clustering with outliers} problem where given $P$, $\dist$, and two parameters $k, z$ the goal is to compute a set of centers $C\subseteq P$ with $|C|\leq k$ and a set of outliers $H\subseteq P$ with $|H|\leq z$, such that 
$\rho(C,P\setminus H)$ is minimized.
Let $\rho^*_{k,z}(P)$ be the cost of an optimal $k$-center clustering of $P$ with $z$ outliers. 

In the paper, we introduce some natural extensions of
$k$-center clustering with outliers problem. We first define the \emph{$k$-center Clustering problem with Set Outliers}, or $\prob$ in short, assuming that $\dist$ is any general metric function.
\begin{definition}[$\prob(P,\mathcal{H},k,z)$]
    Given a set $P$ of $n$ elements, a family $\mathcal{H}$ of $m$ subsets of $P$, where
    every element in $P$ belongs to at least one set in $\mathcal{H}$, and two integer positive parameters $k, z$, the goal is to compute a set of centers $C\subseteq P$ with $|C|\leq k$ and a family $H\subseteq \mathcal{H}$ of outlier sets with $|H|\leq z$ and $C\cap (\bigcup_{h\in H}h)=\emptyset$ such that the cost excluding the points covered by outliers $\rho(C,P\setminus (\bigcup_{h\in H} h))$ is minimized.
\end{definition}
Let $\rho^*_{k,z}(P,\mathcal{H})$ be the cost of an optimal solution of $\prob(P,\mathcal{H},k,z)$.
Throughout the paper we use the notation $f$ to denote the maximum number of sets that a point belongs to, i.e., $f=\max_{p\in P}|\{h\in \mathcal{H}\mid p\in H\}|$.
When we say that a valid solution $C, H$ is computed for $\prob(P,\mathcal{H},k,z)$ we mean that $C\subseteq P$, $H\subseteq \mathcal{H}$ and, $C\cap (\bigcup_{h\in H}h)=\emptyset$.
An algorithm returns a $(\mu_1, \mu_2, \mu_3)$-approximation for the $\prob(P,\mathcal{H},k,z)$ problem if it returns a valid solution $C, H$ with $|C|\leq \mu_1\cdot k$, $|H|\leq \mu_2\cdot z$, and $\rho(C,P\setminus(\bigcup_{h\in H}h))\leq \mu_3\cdot \rho^*_{k,z}(P,\mathcal{H})$.

We consider different versions of the $\prob$ problem. Recall that in the most general one, $P$ is a set of elements in a general metric space and every set in $\mathcal{H}$ is an arbitrary subset of $P$. We distinguish into cases where every element in $P$ lies in one or multiple sets in $\mathcal{H}$, i.e. $f=1$ or $f>1$. Furthermore, we consider the geometric case where $P$ is a set of points in $\Re^d$, $\dist$ is the Euclidean distance, and $\mathcal{H}$ is a set of hyper-rectangles in $\Re^d$. In this case every rectangle $h\in H$ defines a subset $h\cap P$. More formally, we define the \emph{Geometric $k$-center Clustering with rectangular Outliers}, or $\gprob$ in short.
\newcommand{\rects}{R}
\newcommand{\rec}{\square}
\begin{definition}[$\gprob(P,\setrects,k,z)$]
    Given a set $P$ of $n$ points in $\Re^d$, a set $\setrects$ of $m$ hyper-rectangles in $\Re^d$, where $d$ is a constant and every point in $P$ lies in at least one hyper-rectangle in $\setrects$, and two integer positive parameters $k, z$, the goal is to compute a set of centers $C\subseteq P$ with $|C|\leq k$ and a set $\rects\subseteq \setrects$ of hyper-rectangles as outliers with $|\rects|\leq z$ and  $C\cap (\bigcup_{\rec\in \rects}\rec\cap P)=\emptyset$, such that the cost excluding points covered by outliers $\rho(C,P\setminus (\bigcup_{\rec\in \rects} \rec\cap P))$ is minimized.
\end{definition}
Let $\rho^*_{k,z}(P,\setrects)$ be the cost of an optimum solution of $\gprob(P,\mathcal{H},k,z)$.
When we say that a valid solution $C, \rects$ is computed for $\gprob(P,\setrects,k,z)$ we mean that $C\subseteq P$, $R\subseteq \mathcal{R}$ and, $C\cap (\bigcup_{\rec\in \rects}\rec \cap P)=\emptyset$.
An algorithm returns a $(\mu_1, \mu_2, \mu_3)$-approximation for the $\gprob(P,\setrects,k,z)$ problem if it returns a valid solution $C, \rects$ with $|C|\leq \mu_1\cdot k$, $|\rects|\leq \mu_2\cdot z$, and $\rho(C,P\setminus(\bigcup_{\rec\in \rects}\rec\cap P))\leq \mu_3\cdot \rho^*_{k,z}(P,\setrects)$.

If $k,z$ are clear from the context, we write $\prob(P,\mathcal{H})$, $\gprob(P,\setrects)$.
Interestingly, as we will see later the $\gprob$ problem has a strong connection with the relational $k$-center problem with outliers. 
Next, we give the basic definitions in the relational setting, following~\cite{esmailpour2024improved}, and then give the formal definition of the relational $k$-center clustering with outliers.

\vspace{0.5em}
\noindent{\bf Join Queries.}
 Let $\allrel$ denote a database schema with $g$ relations $R_1,\ldots, R_g$ and let $\allattr$ denote the set of $d$ attributes. Each relation $R_i$ has a subset of attributes $\allattr_i \subseteq \allattr$.
 Let $\dom(A)$ denote the domain of attribute $A \in \allrel$.
We assume that all attributes have the domain $\Re$ of reals.
 A database instance $\I$ consists of the set $\{R_i^\I\}$ of relational instances, where each $R_i^\I$ is a set of tuples over the domain $\Re^{|\allattr_i|}$.
 When the context is clear, we will drop the superscript $\I$, and simply refer to relation instance $R_i^\I$ as relation $R_i$. Let $|\I|=N$. We study all our problems in the data complexity, i.e., $g, d=O(1)$, while $N$ is large.
For a set of tuples $X$ over a subset of attributes $A_X\subseteq\allattr$ and a set of attributes $Y\subseteq \allattr$, the operator $\pi_Y(X)$ returns the projection of every tuple $t\in X$ on the attributes $Y\cap A_X$. Notice that the projection is defined as a set so $\pi_Y(X)$ does not contain duplicates.
We consider \emph{join queries}:
     $\query := R_1 \Join  R_2 \Join \cdots \Join  R_g$.
 The result of $\query$ over database instance $\I$ is defined as
 $\query(\I) = \bigl\{
     t \in \Re^d \bigm| \forall R_i\in \allrel, \pi_{\allattr_i}(t)\in R_i
\bigr\}$.
Note that each tuple in $\query(\I)$ is essentially a point in $\Re^{d}$ and it might be the case that $|\Q(\I)|=O(N^{\mathbbm{p} (\Q)})\gg N$, where $\mathbbm{p}(\Q)$ is the fractional edge cover of $\Q$~\cite{atserias2013size}.
 A join query is acyclic \cite{beeri1983desirability, fagin1983degrees} if there exists a tree $\T$, called a {\em join tree} of $\Q$, where
the nodes of $\T$ are $R_1, \ldots, R_g$, and for each attribute $A \in \cup_{i\in[m]} \allattr_i$, the set of nodes whose attributes contain $A$ form an edge-connected subtree of $\T$.


\paragraph{Relational $k$-center clustering with outliers}
The standard relational $k$-center clustering (without outliers) has been defined in~\cite{agarwal2024computing, chen2022coresets}.
Given a database instance $\I$, a join query $\Q$, and a parameter $k$ the goal is to compute a set $C\subseteq \Q(\I)$ with $|C|\leq k$ such that $\rho(C,\Q(\I))$ is minimized.
In this paper, we define the problem of relational $k$-center clustering with outliers. There are two meaningful and practical definitions of this problem. In the first one outliers are tuples in the join results, while in the second one outliers are tuples from the input database instance.
For every problem in the relational setting we assume that $\dist$ is the Euclidean distance.

We define the \emph{$k$-Center Clustering with Tuple-Outliers} problem, or $\relclustertwo$ in short. In this case, outliers are tuples from the input database $\I$.
\begin{definition}[$\relclustertwo(\Q,\I, k, z)$]
Given a database schema $\allrel$, a database instance $\I$ of size $O(N)$, a join query $\Q$, and two integer positive parameters $k, z$, the goal is to compute a set $S \subseteq \Q(\I)$ with $|S|\leq k$ and a set $T\subseteq \I$ with $|T|\leq z$ and $S\subseteq \Q(\I\setminus T)$, such that $\rho(S,\Q(\I\setminus T))$ is minimized.
\end{definition}
\vspace{-0.7em}
Let $\hat{\rho}^*_{k,z}(\Q(\I))$ denote the cost of the optimum solution for the $\relclustertwo$ problem.
When we say that a valid solution $S, T$ is computed for $\relclustertwo(\Q,\I,k,z)$ we mean that $S\subseteq\Q(\I)$, $T\subseteq \I$ and $S\subseteq \Q(\I\setminus T)$.
An algorithm returns a $(\mu_1, \mu_2, \mu_3)$-approximation for the $\relclustertwo(\Q,\I,k,z)$ problem if it returns a valid solution $S, T$ with $|S|\leq \mu_1\cdot k$, $|T|\leq \mu_2\cdot z$, and $\rho(S,\Q(\I\setminus T))\leq \mu_3\cdot \hat{\rho}^*_{k,z}(\Q(\I))$.

\newcommand{\relclusterthree}{\mathsf{RCTO1}}
We also study a variation of $\relclustertwo(\Q,\I,k,z)$ where we only allow outliers from the tuples of one specific relation. Without loss of generality, assume that $T\subseteq R_1$.
This is common in practice, as we might have collected data in $R_1$ from an untrustworthy source, while the remaining tables in $\allrel$ come from trustworthy sources.
Let $\relclusterthree(\Q,\I,k,z)$ be the $\relclustertwo(\Q,\I,k,z)$ problem where $T$ should be a subset of $R_1$.
Let
$\hat{\rho}^{*}_{k,z,1}(\Q(\I))$ denote the cost of the optimum solution for the $\relclusterthree$ problem.
When we say that a valid solution $S, T$ is computed for $\relclusterthree(\Q,\I,k,z)$ we mean that $S\subseteq\Q(\I)$, $T\subseteq R_1$ and $S\subseteq \Q(\I\setminus T)$.
An algorithm returns a $(\mu_1, \mu_2, \mu_3)$-approximation for the $\relclusterthree(\Q,\I,k,z)$ problem if it returns a valid solution $S, T$ with $|S|\leq \mu_1\cdot k$, $|T|\leq \mu_2\cdot z$, and $\rho(S,\Q(\I\setminus T))\leq \mu_3\cdot \hat{\rho}^*_{k,z,1}(\Q(\I))$.

As we show later, an instance of the $\relclusterthree$ problem can be mapped to an instance of the $\gprob$ problem.
This connection is conceptually interesting: a problem defined as clustering with tuple outliers in the relational setting can be naturally reformulated as a clustering problem with set outliers in the standard computational setting.

Finally, we define the \emph{$k$-Center Clustering with Result-Outliers problem}, or $\relclusterone$ in short. In this case the outliers are tuples from the join results.
\begin{definition}[$\relclusterone(\Q,\I, k, z)$]
Given a database schema $\allrel$, a database instance $\I$ of size $O(N)$, a join query $\Q$, and two integer positive parameters $k, z$, the goal is to compute a set $S \subseteq \Q(\I)$ with $|S|\leq k$ and a set $T\subseteq \Q(\I)$ with $|T|\leq z$ and $S\cap T=\emptyset$, such that $\rho(S,\Q(\I)\setminus T)$ is minimized.
\end{definition}
The cost of the optimum solution of the $\relclusterone$ problem is $\rho^*_{k,z}(\Q(\I))$.
When we say that a valid solution $S, T$ is computed for $\relclusterone(\Q,\I,k,z)$ we mean that $S\subseteq\Q(\I)$, $T\subseteq \Q(\I)$ and $S\cap T=\emptyset$.
An algorithm returns a $(\mu_1, \mu_2, \mu_3)$-approximation for the $\relclusterone(\Q,\I,k,z)$ problem if it returns a valid solution $S, T$ with $|S|\leq \mu_1\cdot k$, $|T|\leq \mu_2\cdot z$, and $\rho(S,\Q(\I)\setminus T)\leq \mu_3\cdot \rho^*_{k,z}(\Q(\I))$.

\subsection{Related work}

Clustering is a fundamental problem in computer science with applications in various domains such as community detection, recommendation systems, and dimensionality reduction~\cite{rokach2005clustering}.
There are multiple clustering objective functions such as $k$-center (minimize the maximum distance from a center) $k$-median (minimize the sum of distances from every point to its closest center), and $k$-means (minimize the sum of square distances from every point to its closest center). While the problems are $\mathsf{NP}$-hard even in the geometric setting (in $2$-dimensions) there are efficient constant approximation factors that run in polynomial time~\cite{har2003coresets, charikar1999constant, krishnaswamy2018constant}.

Clustering with outliers is a well studied problem~\cite{chen2008constant, agrawal2023clustering}. For $k$-center clustering with outliers Charikar et al.~\cite{charikar2001algorithms} described a $3$-approximation algorithm that runs in polynomial time. Later, Charikar et al.~\cite{charikar2003better} showed a sampling-based method for the $k$-center clustering with outliers that runs in the streaming setting. 
Interestingly, they also show that running the more expensive algorithm from~\cite{charikar2001algorithms} on a sufficiently large number of samples from the input set, returns a constant approximation to the $k$-center clustering with outliers.
The problem of $k$-center clustering with outliers has been studied under various settings and models, such as MPC, streaming, distributed, dynamic, sliding window, and fairness models~\cite{de2023k, malkomes2015fast, ceccarello2019solving, chan2022fully, de2021k, amagata2024fair}.
While clustering with outliers is well-studied, it is unclear how to extend these algorithms to $k$-center clustering with set outliers. 

Recently, clustering problems have been studied in the relational setting. 
Curtin et al.~\cite{curtin2020rk} gave a $O(1)$-approximation algorithm for the relational $k$-means clustering problem on acyclic join queries in $\O(k^m\cdot N+\mathsf{Alg}_M(k^m))$ time, where
$\mathsf{Alg}_M(k^m)$ is the running time of a known $O(1)$-approximation $k$-means algorithm in the standard computational setting\footnote{Similarly to~\cite{curtin2020rk, moseley2021relational}, in the standard computational setting we assume that all data lie in one table.}. Moseley et al.~\cite{moseley2021relational} improved the running time to $\O(k^4N+\mathsf{Alg}_M(k))$. Esmailpour et al.~\cite{esmailpour2024improved} developed a $O(1)$-approximation for the relational $k$-means clustering in $\O(k^2N + k^4 + \mathsf{Alg}_M(k^2))$ time which also works for the relational $k$-median problem. For the relational $k$-center clustering, Agarwal et al.~\cite{agarwal2024computing} proposed a $(2+\eps)$-approximation algorithm in $\O(\min\{k^2N,kN+k^{d/2}\})$ time. Finally, Chen et al.~\cite{chen2022coresets} proposed additive approximation algorithms for different versions of relational clustering.

\subsection{Our contributions}
In this paper, we introduce and make non-trivial progress on several variations of \emph{$k$-center clustering with set outliers} which is a natural generalization of the well-known $k$-center clustering with outliers problem. Our contributions can be organized as follows. Please also check Table~\ref{tab:3col8row}.

\begin{itemize}[leftmargin=*]
    \item \textbf{Near Optimal Approximation for Low Frequency Set Systems.} For the most general $\prob(P,\mathcal{H})$ problem, we design an $\mathsf{LP}$-based $(2,2f,2)$-approximation algorithm (Section~\ref{subsec:GeneralIS}) that runs in $O((n+m)^{\expon}\log n)$ time. Our algorithm first solves a linear program which takes $O((n+m)^{\expon})$ time using~\cite{jiang2021faster} and then rounds its solution. We also show that our approximation is near optimal for low frequency set systems. In particular, we show that assuming Unique games conjecture~\cite{khot2002power, klarreich2011approximately} is true, there exists no $(1,f-\zeta,\gamma)$-approximation algorithm for $\prob$ in polynomial time for all $f = o(\log n)$. This shows that our approximation bounds are near optimal when each element belongs to $o(\log n)$ outlier sets (Section~\ref{sec:hardness}).
    
    \item \textbf{Faster Algorithms for Practical Use Cases.} Motivated by practical applications in data cleaning and relational clustering, we study several specializations of $\prob(P,\mathcal{H})$ problem, namely when the outliers sets are disjoint ($f=1$) and the geometric version $\gprob$ defined before. In both these cases, the idea is to use techniques such as coresets and geometric structures to expedite the $\mathsf{LP}$-solving step of $\prob(P,\mathcal{H})$. Specifically:

\begin{table}[t!]
\vspace{-0.5em}
\centering
\begin{tabular}{|c|c|c|c|}
\hline
\textbf{Problem} & \textbf{Approximation} & \textbf{Running time} & \textbf{Section}\\
\hline
$\prob$ (Lower bound) & $(1,f-\zeta,\gamma)$ & $\omega(\poly(n,m))$ & \ref{sec:hardness}\\
\hline
$\prob$, $f>1$ & $(2,2f,2)$ & $\O((n+m)^{\expon})$ & \ref{subsec:GeneralIS}\\
\hline
$\prob$, $f=1$ & $(2,2,O(1))$ & $\O(nk+k^2m+\beta_1^2+(\beta_2+\beta_3)^{\expon})$ & \ref{subsec:GeneralDS}\\
\hline
$\gprob$, $f>1$ & $(2+\eps,2f,2+\eps)$ & $\O((k+z)(n+m))$ & \ref{subsec:GeometricIS}\\
\hline
$\gprob$, $f=1$ & $(2+\eps,2,O(1))$ & $\O(n+km+(k+z)(\beta_2'+\beta_3))$ & \ref{subsec:GeometricDS}\\
\hline
$\relclusterthree$ & $(2+\eps,2,O(1))$ & $\O(k^2N^2)$ & \ref{subsubsec:TupleOutliersOneRel}\\
\hline
$\relclustertwo$ & $(1,g,O(1))$ & $\O(2^{g\cdot k+z}(z+k^d)N)$& \ref{subsubsec:TupleOutliersMultRels} \\
\hline
$\relclusterone$ & $(1,1+\eps,3+\eps)$ & $\O(N +k^2/\delta)$ & Apdx~\ref{subsec:RelClustResOutliers}\\
\hline
\end{tabular}
\caption{Summary of our main results. $\gamma=o(m)$ is any sublinear approximation factor, $f$ is the maximum number of sets that an element belongs to, $g=O(1)$ is the number of relations in $\allrel$, $d=|\allattr|$, $\delta=z/|\Q(\I)|$, $\beta_1=\min\{n,km\}, \beta_2=\{n,k^z,km\},\beta_3=\min\{m,kz\}, \beta_2'=\min\{n,kz\}$, $\eps$ is any constant in $(0,1)$, $\zeta$ is any positive real number, $\poly(n,m)$ is any polynomial function on $n,m$, and $\O(\cdot)$ skips $\log^{O(1)} n$ or $\log^{O(1)} N$ factors, where $n=|P|$, $N=|\I|$.}
\label{tab:3col8row}
\vspace{-1.5em}
\end{table}

    \begin{itemize}[leftmargin=*]
        \item  For the general disjoint case ($\prob$, $f=1$), we design a $(2,2,O(1))$-approximation algorithm (Section~\ref{subsec:GeneralDS}) that runs in $O\left(\left(nk+k^2m+\beta_1^2+(\beta_2+\beta_3)^{\expon}\right)\cdot \log \beta_1\right)$ time, where $\beta_1=\min\{n,km\}$, $\beta_2=\min\{n,k^2z, km\}$, and $\beta_3=\min\{m,kz\}$. The main idea is to construct a small subset $P'\subseteq P$ and a small subset of $\mathcal{H}'\subseteq \mathcal{H}$ such that solving the $\mathsf{LP}$-based algorithm from Section~\ref{subsec:GeneralIS} on $P', \mathcal{H}'$ is sufficient to get a good approximation for the original $\prob_{k,z}(P,\mathcal{H})$ problem.

        \item For the general $\gprob$ problem ($f>1$), instead of using an $\mathsf{LP}$ solver to solve the linear program, we use the \emph{Multiplicative Weight Update} (MWU) method to approximate its solution. A straightforward implementation of the MWU approach on our problem would lead to super-quadratic running time. Instead, in Section~\ref{subsec:GeometricIS} we use geometric data structures (range trees and BBD trees) to implement the MWU method in near-linear time with respect to $n$ and $m$ . For any small constant $\eps\in (0,1)$, we design a $(2+\eps,2f,2+\eps)$-approximation algorithm for the $\gprob(P,\setrects)$ problem in $O((k+z)(n+m)\log^{d+2}n)$ time.
        
        \item For the disjoint $\gprob$ ($f=1$), we use geometric computing and data structures to accelerate the coreset construction from Section~\ref{subsec:GeneralDS}. For any small constant $\eps\in (0,1)$, we design a $(2+\eps,2,O(1))$-approximation algorithm for the $\gprob(P,\setrects, k, z)$ problem in $O(((n+m)\log^{d-1}n +km\log n+\beta_1\log(\beta_1)+(k+z)(\beta_2+\beta_3)\log^{d+2}(\beta_2))\cdot \log n)$ time , where $\beta_1=\min\{n,km\}$, $\beta_2=\min\{n,kz\}$, and $\beta_3=\min\{m,kz\}$.
    \end{itemize}

    \item \textbf{New Algorithms for Relational $k$-center Clustering.}
    In the relational setting, we design approximation algorithms for the  $\relclustertwo$ with tuple outliers, $\relclusterthree$ with tuple outliers of one specific relation, and $\relclusterone$ problem with result outliers. Specifically:
    \begin{itemize}[leftmargin=*]
       
        \item In Section~\ref{subsubsec:TupleOutliersOneRel}, we focus on the $\relclusterthree(\Q,\I,k,z)$ problem for an acyclic join query $\Q$ and show that the $\relclusterthree$ problem can be reduced to $\gprob$ problem over $\Q(\I)$ and a set of $O(N)$ non-intersecting hyper-rectangles. By designing a relational implementation of our coreset construction for $\gprob$ $f=1$ case, we obtain an $(2+\eps,2,O(1))$-approximation algorithm for the $\relclusterthree(\Q,\I,k,z)$ problem for any small constant $\eps\in(0,1)$, in $O(k^2N^2\log N)$ time.


    \item In Section~\ref{subsubsec:TupleOutliersMultRels} we consider the $\relclustertwo$ problem.
    The $\relclustertwo$ problem can be mapped to the $\gprob$ problem with $f=N$, hence a potential relational implementation of our algorithm for $\gprob$ would lead to a bad approximation on the number of outliers.
    Instead, for any acyclic join query $\Q$, we design a randomized fixed-parameter tractable (FPT) algorithm in $k$ and $z$. In particular, we design an $(1,g,O(1))$-approximation algorithm for the $\relclustertwo(\Q,\I,k,z)$ problem, with probability at least $1-\frac{1}{N}$, with running time $O(2^{g\cdot k+z}\cdot(z+k^d)\cdot N\cdot \log^3 N)$.

     \item Finally, we show that $\relclusterone(\Q,\I,k,z)$ can be mapped to the standard $k$-center clustering with outliers on $\Q(\I)$.
     For an acyclic join query $\Q$, we design a relational implementation of the algorithms in~\cite{charikar2003better, charikar1999constant} and for any small constant $\eps\in(0,1)$, we obtain an $(1,1+\eps,3+\eps)$-approximation algorithm for the $\relclusterone(\Q,\I,k,z)$ problem, with probability at least $1-\frac{1}{N}$, in $O(N\log N +\frac{k^2}{\delta}\log^3 N)$ time, where $\delta=\frac{z}{|\Q(\I)|}$. The details can be found in Appendix~\ref{subsec:RelClustResOutliers}.

    \end{itemize}
Our relational algorithms can be extended to cyclic join queries as briefly discussed in Section~\ref{subsec:cyclic}.
\end{itemize}

\vspace{-1em}
\section{Clustering with set outliers}
We begin our discussion by noticing an interesting connection between the $\prob$ and the classic \emph{set cover} problem. Let $D$ be the sorted list of all pairwise distances between elements in $P$. Observe that the optimal clustering cost $\rho_{k,z}^*(P, \mathcal{H})$ is one of the values in $D$, so we can `guess' this
optimal value by binary searching over all $r \in D$. For a given $r \in D$, suppose we construct a collection $\mathcal{C}$ of $n$ sets, one for each $p_i \in P$ as center and consisting of all $p_j$ such that $\dist(p_i, p_j) \leq r$. Therefore, we now have two collections of sets: $\mathcal{C}$ and $\mathcal{H}$ and the $\prob$ problem is essentially the same as selecting at most $k$ sets from $\mathcal{C}$ and at most $z$ sets from $\mathcal{H}$ that cover all elements in $P$.
One approach to solve $\prob$ is by extending traditional approximation algorithms for set cover in this setting but that will likely result in a $\log n$ approximation on both number of centers and outliers~\cite{dehghankar2024fair}.
Interestingly, as shown next, the set-cover problem can also be reduced to $\prob$ demonstrating that the difficulty of $\prob$ comes from the structure of outlier sets and there is no hope for efficient approximation algorithms in the general case.
We describe our hardness construction below and design efficient (and in some cases near-optimal) approximation algorithms in subsequent sections.




\subsection{Hardness}
\label{sec:hardness}
We show an inapproximability result for the $\prob(P,\mathcal{H}, k,z)$ problem: There is no polynomial time algorithm that returns a valid solution $C, H$ with a sublinear approximation on the cost of the objective function, where $|H|\leq (f-\zeta)z$, for any $\zeta>0$, if the unique games conjecture is true.

Let $\mathsf{SC}(X,\mathcal{Y})$ be an instance of the set-cover problem where $X$ is a set of $n'$ elements and $\mathcal{Y}$ is a family of $m'$ subsets of $X$. Every element in $X$ belongs to at least one set in $\mathcal{Y}$. In low-frequency systems, we assume that every element in $X$ belongs to at most $f=o(\log (n'))$ sets in $\mathcal{Y}$.
A family of sets $Y'\subseteq \mathcal{Y}$ is called a cover if for every $x\in X$, $x\in\bigcup_{y\in Y'}y$. 
An algorithm is a $\lambda$-approximation for the $\mathsf{SC}(X,\mathcal{Y})$ problem if it returns a cover $Y\subseteq \mathcal{Y}$ with $|Y|\leq \lambda\cdot \mathsf{opt}$, where $\mathsf{opt}$ is the cardinality of the minimum cover.
If the unique games conjecture is true there is no $(f-\zeta)$-approximation for the $\mathsf{SC}(X,\mathcal{Y})$ problem, for any $\zeta>0$~\cite{khot2008vertex, dinur2003new}.
In Appendix~\ref{appndx:hardness} we show a polynomial time reduction from the $\mathsf{SC}$ problem to $\prob$ that leads to the proof of the next lemma.

\renewcommand{\H}{\mathcal{H}}

\begin{lemma}
\label{lem:hardness}
   If $\mathsf{Alg}$ is a polynomial time $(1,f-\zeta,\gamma)$-approximation algorithm for the $\prob$ problem, where $\gamma=o(|\mathcal{H}|)$ is any sublinear value, then there exists a polynomial time $(f-\zeta)$-approximation algorithm for the $\mathsf{SC}$ problem.
\end{lemma}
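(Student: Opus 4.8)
The plan is to reduce $\mathsf{SC}$ to $\prob$ in polynomial time so that any $(1,f-\zeta,\gamma)$-approximation for $\prob$ yields an $(f-\zeta)$-approximation for $\mathsf{SC}$; combined with the UGC-based inapproximability of $\mathsf{SC}$ in low-frequency systems this proves the lemma. Given an instance $\mathsf{SC}(X,\mathcal{Y})$ with $|X|=n'$, $|\mathcal{Y}|=m'$ and maximum frequency $f=o(\log n')$, I would build the following $\prob$ instance: let $P=X\cup\{v\}$ for a fresh dummy element $v$, equip $P$ with the uniform metric (so $\dist(p,q)=1$ whenever $p\neq q$), let $\mathcal{H}=\mathcal{Y}\cup\{\{v\}\}$ with each set of $\mathcal{Y}$ regarded as a subset of $P$, and set $k=1$, leaving $z$ as a parameter to be guessed. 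Every element of $P$ lies in at least one set of $\mathcal{H}$, the maximum frequency is still $f$ (the dummy $v$ lies only in the singleton $\{v\}$), and $|P|=n'+1$, $|\mathcal{H}|=m'+1$ are linear in the input size.

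The core of the argument is the equivalence $\rho^*_{1,z}(P,\mathcal{H})=0 \iff z\geq \mathsf{opt}$, where $\mathsf{opt}$ is the minimum cover size of $\mathsf{SC}(X,\mathcal{Y})$ (and $\rho^*_{1,z}(P,\mathcal{H})=1$ otherwise, since these are the only distances present). For one direction, a cover $Y'\subseteq\mathcal{Y}$ of size $z$ used as outlier sets removes all of $X$, leaving only $v$ to be the single center, with clustering cost $0$. For the other direction, in the uniform metric a cost-$0$ solution can leave at most $k=1$ non-removed point: if it is $v$, then all of $X$ is covered by at most $z$ sets of $\mathcal{Y}$; if it is some $x^\ast\in X$, then $v$ must have been removed via $\{v\}$, so at most $z-1$ sets of $\mathcal{Y}$ cover $X\setminus\{x^\ast\}$ and adding one set through $x^\ast$ gives a cover of size at most $z$. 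Either way $\mathsf{opt}\le z$. This is exactly where the dummy element $v$ earns its keep: parking the lone center on an uncovered point of $X$ costs one outlier slot, which we can repay on that same point, so the reduction loses nothing in the ratio.

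Given the equivalence, the $\mathsf{SC}$ algorithm runs $\mathsf{Alg}$ on this $\prob$ instance for every $z\in\{1,\dots,m'\}$. When $z\geq\mathsf{opt}$ we have $\rho^*_{1,z}(P,\mathcal{H})=0$, so $\mathsf{Alg}$ must return a valid solution of clustering cost at most $\gamma\cdot 0=0$ (any finite $\gamma$ suffices, in particular a sublinear one); when $z<\mathsf{opt}$ no valid solution of cost $0$ exists, so $\mathsf{Alg}$ cannot return one. Hence the smallest $z$ for which $\mathsf{Alg}$ returns a cost-$0$ solution is exactly $\mathsf{opt}$. For that $z$, the returned solution $C,H$ satisfies $|H|\le(f-\zeta)\mathsf{opt}$ and leaves at most one point of $P$ uncovered; repeating the case analysis I extract a set cover of $X$ of size at most $(f-\zeta)\mathsf{opt}$ (drop $\{v\}$ from $H$ if it is there, and if some $x^\ast\in X$ remains uncovered add one set of $\mathcal{Y}$ through $x^\ast$). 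This is a polynomial-time $(f-\zeta)$-approximation for $\mathsf{SC}$, contradicting the UGC-based lower bound and establishing the lemma.

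I expect the main obstacle to be making the reduction \emph{tight}: not merely reducing $\mathsf{SC}$ to $\prob$ but doing so with no additive or multiplicative slack in the number of selected outlier sets, so that the extracted cover has size at most $(f-\zeta)\mathsf{opt}$ rather than $(f-\zeta)\mathsf{opt}+O(1)$ or $(f-\zeta')\mathsf{opt}$ with $\zeta'<\zeta$; the single dummy element together with the two-valued uniform metric is precisely what removes this slack. A related subtlety worth spelling out is that in the hard instances the clustering cost is always $0$ or $1$, so the proof only ever needs to detect cost $0$ and never uses the magnitude of $\gamma$ — it merely needs $\gamma$ finite, which $\gamma=o(|\mathcal{H}|)$ guarantees.
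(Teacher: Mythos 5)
Your reduction is essentially the paper's: the paper places $X$ on the real line, adds $k$ far-away dummy points each wrapped in its own singleton outlier set, and then searches for the smallest $z$ at which $\mathsf{Alg}$ achieves cost $0$; your uniform-metric instance with $k=1$ and a single dummy $\{v\}$ is the same construction specialized, and your extraction of a cover of size at most $(f-\zeta)z'$ is correct. One small inaccuracy: your claim that for $z<\mathsf{opt}$ ``no valid solution of cost $0$ exists, so $\mathsf{Alg}$ cannot return one'' (hence that the first successful $z$ equals $\mathsf{opt}$ exactly) is wrong, because $\mathsf{Alg}$ is permitted to output up to $(f-\zeta)z$ outlier sets and may therefore find a cost-$0$ solution already when $(f-\zeta)z\geq\mathsf{opt}>z$; the argument survives because all you actually need is $z'\leq\mathsf{opt}$, which follows from feasibility at $z=\mathsf{opt}$, exactly as the paper states it.
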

\begin{proof}[Sketch of the proof]
    The full proof is shown in Appendix~\ref{appndx:hardness}. 
    Given $X,\mathcal{Y}$, we create a set $P$ and a family of sets $\H$ as the input for out $\prob$ problem.
    For each $x_i\in X$ we create a point $p_i$ on the real line with coordinate $i$.
    Let $\hat{P}$ be these points on the real line. For an arbitrary value $k=o(|X|)$, let $\hat{Q}=\{q_1,\ldots, q_k\}$ be $k$ points on the real line such that each point $q_j\in \hat{Q}$ is a point on the real line with coordinate $2|X|+j$. The set $P$ is defined as $P=\hat{P}\cup\hat{Q}$. For every set $y\in \mathcal{Y}$, we create a set $\hat{y}=\{p_i\mid x_i\in y\}$ and let $\hat{\mathcal{Y}}=\{\hat{y}\mid y\in \mathcal{Y}\}$. Finally, for every $q_j\in \hat{Q}$ we create a set $\hat{h}_j=\{q_j\}$ and let $\hat{H}=\{\hat{h}_j\mid q_j\in \hat{Q}\}$. We define $\H=\hat{\mathcal{Y}}\cup \hat{H}$.
    We show that if $z\geq \opt$, then $\rho^*_{k,z}(P,\H)=0$.
    Since $\gamma$ is a relative approximation factor, the solution $C_z, H_z$ returned by $\mathsf{Alg}$ should also satisfy $\rho(C_z,P\setminus(\bigcup_{h\in H_z}h))=0$.
    Furthermore, for $z\geq \opt$, without loss of generality, $C_z= \hat{Q}$ and $H_z\subseteq \hat{\mathcal{Y}}$, i.e., $C_z$ contains all points in $\hat{Q}$ and sets in $H_z$ are sets in $\hat{\mathcal{Y}}$ that cover al points in $\hat{P}\subseteq P$.

    For each $z=\{1,\ldots, |\mathcal{Y}|\}$, we run $\mathsf{Alg}$ for $\prob(P,\H,k,z)$ until we find the smallest $z$, say $z'$, such that
    $\rho(C_{z'},P\setminus (\bigcup_{h\in H_z}h))=0$.
    We have $z'\leq \opt$. By definition, $|C_{z'}|\leq k$, and $|H_{z'}|\leq (f-\zeta)\opt$. For every set in $H_{z'}$ there exists a corresponding set in $\mathcal{Y}$ and the union of them cover all points in $X$. Hence, we return a $(f-\zeta)$-approximation for the $\mathsf{SC}$ problem.
\end{proof}

\vspace{-1em}
\begin{theorem}
    If the Unique games conjecture is true, there exists no $(1,f-\zeta,\gamma)$-approximation algorithm that runs in polynomial time for the $\prob$ problem,  where $f = o(\log n)$, $\gamma=o(|\mathcal{H}|)$ is any sublinear value and $\zeta$ any positive real number.
\end{theorem}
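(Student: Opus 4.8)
The plan is to obtain the theorem directly from Lemma~\ref{lem:hardness} combined with the known inapproximability of low-frequency set cover under the Unique Games Conjecture. Assume, for contradiction, that for some fixed $\zeta>0$ and some sublinear $\gamma=o(|\mathcal{H}|)$ there is a polynomial-time $(1,f-\zeta,\gamma)$-approximation algorithm $\mathsf{Alg}$ for $\prob$ on instances with $f=o(\log n)$. By Lemma~\ref{lem:hardness}, $\mathsf{Alg}$ yields a polynomial-time $(f-\zeta)$-approximation algorithm for $\mathsf{SC}(X,\mathcal{Y})$.

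Next I would check that the parameter regimes line up across the reduction used to prove Lemma~\ref{lem:hardness}. There the ground set is $P=\hat P\cup\hat Q$ with $|\hat P|=n'=|X|$ and $|\hat Q|=k=o(|X|)$, so $n=|P|=(1+o(1))\,n'$ and hence $\log n=\Theta(\log n')$; moreover $\mathcal{H}=\hat{\mathcal{Y}}\cup\hat H$ adds only the $k$ singletons $\hat h_j=\{q_j\}$, each touching a single $q_j\notin\hat P$, so the maximum frequency of an element of $P$ equals the maximum frequency of an element of $X$ in the set-cover instance. Thus an $\mathsf{SC}$ instance with frequency $f=o(\log n')$ maps to a $\prob$ instance with $f=o(\log n)$, exactly the regime in which $\mathsf{Alg}$ is assumed to run, and $\gamma=o(|\mathcal{H}|)=o(m'+k)$ remains a sublinear approximation factor as required by the lemma.

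Finally I would invoke the hardness of set cover in low-frequency systems: assuming the Unique Games Conjecture, there is no polynomial-time $(f-\zeta)$-approximation for $\mathsf{SC}(X,\mathcal{Y})$ when every element lies in at most $f=o(\log n')$ sets, for any constant $\zeta>0$~\cite{khot2008vertex, dinur2003new}. This contradicts the algorithm produced in the first step, so no such $\mathsf{Alg}$ can exist, which is the claim.

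Since the argument is a straightforward chaining of Lemma~\ref{lem:hardness} with a known UGC-hardness result, the only step needing genuine care — and the one most likely to hide a subtlety — is the bookkeeping in the second paragraph: confirming that the ``$+k$'' padding by singleton sets neither raises the frequency seen by the original elements nor pushes $n$ outside the $o(\log n)$-frequency window relative to $n'$, and that the order of quantifiers (``for any $\zeta>0$'', ``for any sublinear $\gamma$'') is preserved when passing between the two problems.
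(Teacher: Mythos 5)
Your proposal is correct and follows exactly the route the paper intends: the theorem is an immediate consequence of Lemma~\ref{lem:hardness} chained with the cited UGC-hardness of $(f-\zeta)$-approximating low-frequency set cover, and your bookkeeping (that $n=n'+k$ with $k=o(n')$ keeps $f=o(\log n')$ inside the $o(\log n)$ window, and that the singleton sets $\hat h_j$ do not raise any element's frequency) confirms the parameter regimes align. The paper leaves this chaining implicit, so your explicit verification of the frequency and quantifier details is a faithful, slightly more careful rendering of the same argument.
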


\subsection{$\mathsf{LP}$-based Approximation Algorithm for General $\prob$}
\label{subsec:GeneralIS}
We first start with some useful notation. For an element $p_0\in P$ and radius $r$, let $B(p_0,r)=\{p\in P\mid \dist(p,p_0)\leq r\}$.
For an element $p_i\in P$, let $L_i=\{h_j\in \mathcal{H}\mid p_i\in h_j\}$.

\paragraph{Algorithm}
Let $D$ be the sorted list of all pairwise distances. We run a binary search on $D$. Let $r$ be the current distance in the binary search.
We formulate the following integer program (IP).
Let $x_1,\ldots, x_n$ be $n$ variables such that $x_i=1$ if $p_i\in P$ is selected as one of the centers, and $x_i=0$, otherwise.
 Let $y_1,\ldots,y_m$ be $m$ variables such that $y_j=1$ if $h_j\in \mathcal{H}$ is selected as an outlier, and $h_j=0$, otherwise. The third type of constraints, requires that every ball with center a point $p_i$ with radius $r$ must contain a point in the solution of $\prob$ otherwise $p_i$ must belong to at least one outlier set.
 We define the linear program ($\mathsf{LP}$\ref{lp:1}) by replacing the last constraints with $x_i,y_j\in[0,1]$ for every $p_i\in P$ and every $h_j\in \mathcal{H}$.
 

\begin{minipage}[t]{0.48\textwidth}
\begin{align*}
\mathsf{(IP)~~~~} \sum_{i\in[n]}x_i &\leq k \\
\sum_{j\in[m]}y_j &\leq z \\
\sum_{h_j\in L_i}y_j + \sum_{p_\ell\in B(p_i,r)}x_\ell &\geq 1, \quad \forall p_i \in P \\
x_i, y_j \in \{0,1\}, &\quad \forall p_i \in P,\, h_j \in \mathcal{H}
\end{align*}
\end{minipage}
\hfill
\begin{minipage}[t]{0.48\textwidth}
\begin{align}
\lpproblemtext \sum_{i\in[n]}x_i &\leq k \label{eq1} \\
\sum_{j\in[m]}y_j &\leq z \label{eq2} \\
\sum_{h_j\in L_i}y_j + \sum_{p_\ell\in B(p_i,r)}x_\ell \geq 1, &\quad \forall p_i \in P \label{eq3} \\
x_i, y_j \in [0,1], \quad &\forall p_i \in P,\, h_j \in \mathcal{H} \label{eq4}
\end{align}
\vspace{0.5em}
\end{minipage}

If the LP is infeasible, we stop the execution with the current $r$, and continue the binary search with distances larger than $r$. 
If the LP is feasible, we continue as follows.
For every $i\in [n]$, let $x_i^*$ be the value of the variable $x_i$, and for every $j\in[m]$, let $y_j^*$ be the value of the variable $y_j$, in the solution of the LP.
We get the family of sets $H=\{h_j\mid y_j^*\geq \frac{1}{2f}\}$.
Let $\mathsf{Act}=P\setminus (\bigcup_{h\in H}h)$ be the set of active elements after removing the elements that belong to outlier sets in $H$.
Let $C=\emptyset$.
As long as $\mathsf{Act}$ is not empty, we repeat:
Let $p_i$ be an element in $\mathsf{Act}$. We add $p_i$ in $C$, i.e., $C=C\cup\{p_i\}$.
We remove from $\mathsf{Act}$ all points in $B(p_i,2r)$, i.e., $\mathsf{Act}=\mathsf{Act}\setminus B(p_i,2r)$.
We continue the binary search with distances smaller than $r$.
In the end, we return the set of centers $C$ and the family of outliers $H$ derived after the last feasible LP we encountered.

\paragraph{Correctness}
\begin{lemma}
    \label{lem:master}
   It holds that: i) If $r\geq \rho^*_{k,z}(P,\mathcal{H})$ then ($\mathsf{LP}$\ref{lp:1})\ is feasible. ii) $|H|\leq 2fz$. iii)  If ($\mathsf{LP}$\ref{lp:1})\ is feasible for a distance $r$, then  $\rho(C,P\setminus(\bigcup_{h_j\in H}h_j))\leq 2r$. iv) $|C|\leq 2k$.
\end{lemma}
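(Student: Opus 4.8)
The plan is to verify the four claims essentially independently, since each corresponds to a distinct part of the algorithm (LP feasibility, rounding the $y$-variables, the greedy ball-covering step, and bounding the number of centers). For part (i), I would take an optimal solution $(C^*, H^*)$ of $\prob(P,\mathcal{H},k,z)$ achieving cost $\rho^*_{k,z}(P,\mathcal{H}) \le r$ and turn it into an integral feasible point of the IP (hence of \LPone): set $x_i = 1$ iff $p_i \in C^*$ and $y_j = 1$ iff $h_j \in H^*$. Constraints~\eqref{eq1} and~\eqref{eq2} hold because $|C^*|\le k$ and $|H^*|\le z$. For~\eqref{eq3}, fix $p_i \in P$: either $p_i \in \bigcup_{h\in H^*} h$, so some $h_j \in L_i \cap H^*$ gives $y_j = 1$; or $p_i$ is clustered, so its closest center $c \in C^*$ satisfies $\dist(p_i,c)\le \rho^*_{k,z}(P,\mathcal{H}) \le r$, i.e. $c \in B(p_i,r)$ and the corresponding $x_\ell = 1$. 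Either way the left side is $\ge 1$.

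For part (ii), this is the standard LP-rounding-for-low-frequency argument. Since every element $p_i$ lies in at most $f$ sets, each set $h_j$ contributes a term to at most... no — rather, the bound comes from constraint~\eqref{eq2}: $\sum_j y_j^* \le z$, and we put $h_j$ into $H$ only when $y_j^* \ge \tfrac{1}{2f}$; hence $|H| \le 2f \sum_j y_j^* \le 2fz$. (The frequency bound $f$ is what makes the threshold $\tfrac{1}{2f}$ meaningful in part~(iii), not here.) For part (iii), I would argue that every active element gets covered within $2r$ and that the greedy choice achieves radius $2r$. First, for any $p_i \in \mathsf{Act}$ we have $\sum_{h_j \in L_i} y_j^* < |L_i|\cdot \tfrac{1}{2f} \le f \cdot \tfrac{1}{2f} = \tfrac12$, since none of the sets containing $p_i$ crossed the threshold; combined with~\eqref{eq3} this forces $\sum_{p_\ell \in B(p_i,r)} x_\ell^* \ge \tfrac12 > 0$, so $B(p_i,r)$ contains at least one fractionally open center. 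Now in the greedy loop, when we pick $p_i \in \mathsf{Act}$, every point $p_\ell$ still in $\mathsf{Act}$ at the end is within $2r$ of the center chosen for it; and the point that "caused" $p_\ell$ to be removed is some earlier-chosen $p_i$ with $p_\ell \in B(p_i,2r)$, so $\dist(p_\ell, p_i)\le 2r$. Hence $\rho(C, P\setminus \bigcup_{h\in H} h) \le 2r$.

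For part (iv), the bound $|C|\le 2k$ is the standard packing argument, and this is the step I expect to require the most care. The centers in $C$ are pairwise at distance $> 2r$ (each newly chosen center survived the removal of all $B(\cdot,2r)$-balls of previous centers), so the balls $B(p_i, r)$ for $p_i \in C$ are pairwise disjoint. By the argument in part~(iii), each such ball has $\sum_{p_\ell \in B(p_i,r)} x_\ell^* \ge \tfrac12$. Summing over the at-most-$|C|$ disjoint balls and using~\eqref{eq1}, $\tfrac{|C|}{2} \le \sum_{p_i \in C}\sum_{p_\ell \in B(p_i,r)} x_\ell^* \le \sum_{\ell\in[n]} x_\ell^* \le k$, giving $|C| \le 2k$. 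The main obstacle — and where I would be most careful — is making sure the disjointness of the balls $B(p_i,r)$ is airtight: it relies on the centers being $>2r$ apart, which in turn relies on the removal step using radius $2r$ (not $r$), and on the metric (triangle inequality) to conclude $B(p_i,r)\cap B(p_j,r)=\emptyset$ when $\dist(p_i,p_j)>2r$. I would state this packing sub-claim explicitly before the summation.
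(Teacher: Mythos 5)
Your proposal is correct and follows essentially the same route as the paper's proof: part (i) via converting the optimal solution into an integral feasible point of the IP, part (ii) via the threshold $\tfrac{1}{2f}$ against constraint~\eqref{eq2}, part (iii) via the greedy $2r$-removal, and part (iv) via the disjoint-balls packing argument using $\sum_{p_\ell\in B(p_i,r)}x_\ell^*\geq \tfrac12$ and constraint~\eqref{eq1}. No gaps; the disjointness sub-claim you flag is exactly the point the paper also relies on.
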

\vspace{-0.5em}
\begin{proof}

We first show i).
Equivalently, we prove that if $r\geq \rho^*_{k,z}(P,\mathcal{H})$ then the corresponding IP is feasible. Since $r\geq \rho^*_{k,z}(P,\mathcal{H})$, there exists the optimum set of $k$ centers $C^*\subset P$ and a family of $z$ sets $H^*\subseteq \mathcal{H}$ such that $\rho(C^*,P\setminus  \bigcup_{h\in H^*}h)=\rho^*_{k,z}(P,\mathcal{H})$.
    Based on this optimum solution, we construct a solution to the IP as follows. If $p_i\in C^*$ then we set $\bar{x}_i=1$, otherwise we set $\bar{x}_i=0$. If $h_j\in H^*$ we set $\bar{y}_j=1$, otherwise $\bar{y}_j=0$. Since $|C^*|=k$ and $|H^*|=z$, the first two constraints are satisfied in the IP. Next, we focus on the third family of constraints in the IP. If $p_i\in \bigcup_{h_j\in H^*}h_j$ then there exists a set $h_{j_i}\in L_i$ such that $h_{j_i}\in H^*$ and $\bar{y}_{j_i}=1$. Hence $\sum_{h_j\in L_i}\bar{y_j}\geq 1$ and the inequality is satisfied. If $p_i\notin \bigcup_{h_j\in H^*}h_j$, then $p_i$ must be covered within distance $r$ from a center in $C^*$. Hence there exists an element $p_{i^*}\in C^*$ such that $\dist(p_i,p_{i^*})\leq r$ where $\bar{x}_{i^*}=1$, so the inequality is satisfied.

Then, we show ii).
Assume that $|H|>2fz$. Then there would be more than $2fz$ variables $y_j$ with value $y_j^*\geq\frac{1}{2f}$, so $\sum_{j\in[m]}y_j^*>z$, which is a contradiction.

Next, we prove iii).
    By definition, the algorithm only finishes when no elements in $\mathsf{Act}$ remain. Every time we add an element $p_i$ in $C$ we remove all points in (current) $\mathsf{Act}$ within distance $2r$ from $p_i$. So all elements in $P\setminus(\bigcup_{h_j\in H}h_j)$ are within distance $2r$ from $C$.

Finally, we prove iv).
    Let $p_i$ be an element that does not belong in any outlier set in $H$. By definition, $\sum_{h_j\in L_i}y_j^*\leq f\frac{1}{2f}=\frac{1}{2}$ so from the third family of inequalities in the LP, we have $\sum_{p_\ell\in B(p_i,r)}x_\ell^*\geq \frac{1}{2}$.
We use an assignment argument. Every time we add a point $p$ in $C$ we assign the weight of all elements within distance $r$ to $p$.
Assume that an element $p_i$ is added in $C$ in one of the iterations of the algorithm.
The weight $w_{p_i}=\sum_{p_\ell\in B(p_i,r)}x_\ell^*$ is assigned to $p_i$. We have that $w_{p_i}\geq \frac{1}{2}$ while $\sum_{i\in[n]}x_i^*\leq k$.
Furthermore, for two different elements $p_i, p_j\in C$, we have that $B(p_i,r)\cap B(p_j,r)=\emptyset$ because $\dist(p_i,p_j)>2r$ since the algorithm removes all active elements within distance $2r$ from a selected element in $C$.
Hence, each time that we add a new element $p_i$ in $C$ a weight $w_{p_i}\geq 1/2$ is assigned to $p_i$. The total amount of weight among all elements is at most $k$, and no fraction of weight is assigned to two different elements in $C$.
Hence $\sum_{p\in C}w_p\leq k$ and $\sum_{p\in C}w_p\geq |C|\cdot \frac{1}{2}$ leading to $|C|\frac{1}{2}\leq k\Leftrightarrow |C|\leq 2k$.
\end{proof}
\vspace{-0.5em}

\paragraph{Running time}
We sort all possible distances in $O(n^2\log n)$ time.
For every element $p_i\in P$, we need $O(m)$ time to identify the family $L_i$.
Then, for each distance $r$ we consider from the binary search, we construct the LP in $O(n^2)$ time computing $B(p_i,r)\cap P$ for every $p_i\in P$. The LP has $O(n+m)$ variables and $O(m+n)$ constraints.
Using~\cite{jiang2021faster}, we solve the LP in roughly $O((m+n)^{\expon})$ time.
Hence the asymptotic running time of the algorithm is $O((n+m)^{\expon}\log n)$.

\begin{theorem}
\label{thm:generalIS}
  Given a set of $n$ elements $P$, a family of $m$ subsets $\mathcal{H}$ over $P$, and positive integer parameters $k, z$, there exists 
  a $(2,2f,2)$-approximation algorithm for the $\prob(P,\mathcal{H},k,z)$ problem that runs in
  $O((n+m)^{\expon}\log n)$ time.
  \end{theorem}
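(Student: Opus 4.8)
The plan is to recognize that Theorem~\ref{thm:generalIS} is essentially the packaging of Lemma~\ref{lem:master} together with a routine running-time accounting of the algorithm described above. First I would argue that the binary search over the sorted distance list $D$ correctly isolates the optimal radius. The key structural fact is that feasibility of \LPone\ is monotone in $r$: enlarging $r$ only enlarges each ball $B(p_i,r)$, hence only relaxes the covering constraints~\eqref{eq3} while leaving~\eqref{eq1} and~\eqref{eq2} untouched, so if the LP is feasible for some $r$ it is feasible for every larger $r$. Combined with Lemma~\ref{lem:master}(i), which guarantees feasibility whenever $r\ge\rho^*_{k,z}(P,\mathcal{H})$, the binary search terminates at the smallest element $r^\star$ of $D$ admitting a feasible LP, and $r^\star\le\rho^*_{k,z}(P,\mathcal{H})$. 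Moreover $\rho^*_{k,z}(P,\mathcal{H})$ is itself a pairwise distance between two elements of $P$ (it is attained as the distance from some surviving point to its assigned center), hence it lies in $D$; this lets me conclude $r^\star=\rho^*_{k,z}(P,\mathcal{H})$, although for the approximation bound it already suffices to have $r^\star\le\rho^*_{k,z}(P,\mathcal{H})$.

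Next I would assemble the tri-criteria guarantee from the remaining parts of Lemma~\ref{lem:master}. The rounding step is applied to the LP solution at $r=r^\star$: by part~(ii) the returned family satisfies $|H|\le 2fz$, by part~(iv) the returned center set satisfies $|C|\le 2k$, and by part~(iii) the clustering cost obeys $\rho\bigl(C,P\setminus(\bigcup_{h\in H}h)\bigr)\le 2r^\star\le 2\rho^*_{k,z}(P,\mathcal{H})$. Validity of the output, namely $C\cap(\bigcup_{h\in H}h)=\emptyset$, is immediate because the greedy center-selection loop draws every center from the set $\mathsf{Act}=P\setminus(\bigcup_{h\in H}h)$ of active elements. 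Together these give exactly a $(2,2f,2)$-approximation.

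Finally I would tally the running time. Sorting all $\binom{n}{2}$ pairwise distances costs $O(n^2\log n)$, and precomputing each list $L_i=\{h_j\in\mathcal{H}\mid p_i\in h_j\}$ costs $O(nm)$ total. Inside each of the $O(\log n)$ binary-search iterations, forming the balls $B(p_i,r)\cap P$ and writing down \LPone\ takes $O(n^2)$ time; the LP has $O(n+m)$ variables and $O(n+m)$ constraints and is solved in $O((n+m)^{\expon})$ time using the solver of~\cite{jiang2021faster}; and the rounding plus greedy covering step is dominated by these costs. Summing over iterations yields the claimed $O((n+m)^{\expon}\log n)$ bound. The only non-routine ingredient here is the black-box LP solver; everything else is elementary, since the real work — the IP/LP feasibility argument and the charging argument bounding $|C|$ — has already been carried out in Lemma~\ref{lem:master}. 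Thus the genuine difficulty lies in that lemma rather than in this theorem, whose proof is a short synthesis of the pieces above.
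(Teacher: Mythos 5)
Your proposal is correct and follows essentially the same route as the paper: the theorem is obtained by combining the four parts of Lemma~\ref{lem:master} with the running-time accounting (sorting the $O(n^2)$ distances, building each LP in $O(n^2)$ time, and solving it in $O((n+m)^{\expon})$ time per binary-search step). The one ingredient you make explicit that the paper leaves implicit --- monotonicity of \LPone\ feasibility in $r$, which is what justifies the binary search, together with the observation that $\rho^*_{k,z}(P,\mathcal{H})$ is itself a pairwise distance in $D$ --- is correct and a welcome clarification rather than a deviation.
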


\subsection{Approximation Algorithm for Disjoint $\prob$}
\label{subsec:GeneralDS}
Next, we assume that all pairwise sets in $\mathcal{H}$ are disjoint. In this case notice that $f=1$. Taking advantage of the independent sets, we apply additional optimizations to improve the running time.

For every distance $r$ in the binary search, before we construct the LP, we compute a \emph{coreset} for the $\prob$ problem.
At a high level, we compute a small set $P'\subseteq P$ a small subset $\mathcal{H}'\subseteq{H}$ and parameters $k',z'$ such that a $(O(1),O(1),O(1))$-approximation on $\prob(P',\mathcal{H}',k,z)$ can be used to  derive a $(O(1),O(1),O(1))$-approximation solution on the original $\prob(P,\mathcal{H},k,z)$ problem, efficiently.


\paragraph{Coreset construction}
The construction of our coreset works in two phases. In the first phase we compute a set $\hat{P}\subseteq P$ and a family of sets $\bar{\mathcal{H}}\subseteq \mathcal{H}$ by running a $k$-center algorithm independently to each set $h\in\mathcal{H}$ and keeping only centers that can cover all points in the set within distance $2r$.
In the second phase we continue by removing additional elements where a sufficiently large ball around them intersects elements from sufficiently many different sets. If a set becomes empty in the process is removed from $\mathcal{H}$. In the end of the second phase we get a set $P'\subseteq \bar{P}$ and a family of set $\mathcal{H}'\subseteq \bar{\mathcal{H}}$, which is given as input to the algorithm from Section~\ref{subsec:GeneralIS}.

For every set $h_j\in \mathcal{H}$, we run the Gonzalez algorithm~\cite{gonzalez1985clustering} for $k$-center clustering (without outliers) on the elements in $h_j$. Let $C_j$ be the set of $k$ centers returned by the Gonzalez algorithm.
If $\rho(C_j,h_j)>4r$, then we conclude that $h_j$ cannot be covered with $k$ balls of radius $r$ so $h_j$ must be an outlier. Hence $h_j$ is removed from $\mathcal{H}$ and we set $z\leftarrow z-1$.
Let $H_0$ be the family of removed sets from $\mathcal{H}$.
If $\rho(C_j,h_j)\leq 4r$ then we only keep the elements $C_j$ from the set $h_j$ and we remove all elements in $h_j\setminus C_j$ from $P$, i.e., $P\leftarrow P\setminus (h_j\setminus C_j)$.
Then if two elements in $C_j$ have distance less than $4r$ we keep only one of them in $C_j$. We can do it by visiting each (remaining) element $c\in C_j$ one by one. For every element $c'\in C_j\setminus\{c\}$ if $\dist(c,c')\leq 4r$ then we remove $c'$ from $C_j$.
Let $\bar{P}$ be the remaining set of elements, $\bar{\mathcal{H}}$ be the remaining family of sets and $\bar{z}$ be the updated value of $z$.
If $\bar{z}<0$ then we continue the binary search with larger values of $r$.
If $\beta_1=\min\{n,k\cdot m\}$ we note that $|\bar{P}|=O(\beta_1)$.
Notice that $\rho(C_j,h_j)\leq 8r$ and if 
$r\geq \rho^*_{k,z}(P,\mathcal{H})$ then
$\rho^*_{k,\bar{z}}(\bar{P}, \bar{\mathcal{H}})\leq 9r$.
For a given radius $r$, Gonzalez algorithm runs in $O(n\cdot k)$ time. The procedure removing elements in $C_j$ within distance $4r$ runs in $O(k^2)$ time, so $O(k^2m)$ time in total.

Let $X=\emptyset$ be an empty family of sets.
For every element $p_i\in \bar{P}$ we compute $B_i=B(p_i,18r)\cap \bar{P}$ and $B_i'=B(p_i,36r)\cap \bar{P}$. If $B_i$ contains elements from more than $\bar{z}$ sets, then we add the set $B_i'$ in $X$ and we remove $B_i'$ from $\bar{P}$, i.e., $X=X\cup \{ B_i'\}$ and $\bar{P}=\bar{P}\setminus B_i'$. Furthermore, we update $k\leftarrow k-1$. 
We continue this process until for every $p_\ell\in \bar{P}$, the ball $B_\ell$ contains elements from at most $\bar{z}$ sets, or until $\bar{P}=\emptyset$.
In the end of this process, if a set $h_j\in \bar{\mathcal{H}}$ became empty, then we remove $h_j$ from $\bar{\mathcal{H}}$.
This procedure is executed in $O(\beta_1^2)$ time because we go through $O(\beta_1)$ points in $\bar{P}$ and we compute each set $B_i$ and $B_i'$ in $O(\beta_1)$ time.

Let $P'\subseteq \bar{P}$ be the remaining set of elements, $\mathcal{H}'\subseteq \bar{\mathcal{H}}$ be the new family of remaining set, $z'=\bar{z}$ and $k'$ be the updated number of centers.
If $|\mathcal{H}'|>\min\{m,2kz\}$ then skip the execution and continue the binary search with larger values of $r$.
\vspace{-0.2em}
\begin{lemma}
\label{helper}
If $r\geq \rho^*_{k,z}(P,\mathcal{H})$, then
$|P'|=O(\min\{n,k^2z, km\})$, $|\mathcal{H}'|=O(\min\{m,kz\})$, and $\rho^*_{k',z'}(P',\mathcal{H}')\leq 18r$.
\end{lemma}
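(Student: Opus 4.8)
The plan is to bound the three quantities separately, tracking the effect of each of the two phases of the coreset construction. For the size bound $|P'| = O(\min\{n, k^2 z, km\})$: after Phase~1 we have $|\bar P| = O(\beta_1) = O(\min\{n, km\})$, since from each of the $m$ surviving sets $h_j$ we keep at most $k$ representative centers $C_j$ (and trivially $|\bar P| \le |P| = n$). The extra factor must come from Phase~2. In Phase~2, each time we extract a ball $B_i' = B(p_i, 15r)\cap \bar P$ into $X$, we decrement $k$, so this happens at most $k$ times; I would argue that each such ball contains $O(\min\{kz \text{ (ish)}, \ldots\})$ points — more precisely, since we only extract $B_i'$ when $B_i = B(p_i,10r)\cap\bar P$ touches more than $\bar z$ sets, but in the end every surviving point has its $10r$-ball touching at most $\bar z$ sets, so a single surviving set contributes few points to any small region. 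The cleaner route: after Phase~2, $P' = \bar P \setminus (\text{extracted balls})$, so $|P'| \le |\bar P| = O(km)$ and $|P'|\le n$ trivially; the $k^2 z$ bound I would get by observing that in the \emph{final} $P'$, the optimal (relative to $\bar P$) solution of $\rho^*_{k',z'}$ uses $k'\le k$ balls of radius $\le 10r$, each of which — because every point's $10r$-ball meets $\le \bar z \le z$ sets and each set contributes $\le k$ points to $P'$ (survivors of Gonzalez) — contains $O(kz)$ points, hence $|P'| = O(k\cdot kz) = O(k^2 z)$. Combining gives $|P'| = O(\min\{n, k^2 z, km\})$.

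For $|\mathcal{H}'| = O(\min\{m, kz\})$: trivially $|\mathcal{H}'|\le|\bar{\mathcal H}|\le m$. For the $kz$ bound, note every set $h_j\in\mathcal H'$ is nonempty, so pick any $p\in h_j\cap P'$; this $p$ survived Phase~2, so $B(p,10r)\cap\bar P$ meets at most $\bar z\le z$ sets, and in particular $h_j$ is one of at most $z$ sets meeting that ball. Now associate $h_j$ with a center of the $\rho^*_{k',z'}$ (or the Gonzalez-based) clustering of $P'$ within distance... — more carefully, I would use the structural fact that $P'$ can be covered by $O(k)$ balls of radius $O(r)$ (the $k'\le k$ centers certifying $\rho^*_{k',z'}(P',\mathcal H')\le 10r$ cover all of $P'$ except points in $z'$ outlier sets, but those outlier sets are themselves small), and each such ball, having radius $O(r)$, meets only $O(z)$ sets by the Phase~2 stopping condition (every $10r$-ball meets $\le z$ sets, and an $O(r)$-ball is contained in an $O(1)$ union of such). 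Hence $|\mathcal H'| = O(k\cdot z)$.

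For the cost bound $\rho^*_{k',z'}(P',\mathcal H')\le 10r$: this is where I would be most careful, and I expect it to be the main obstacle, since it requires exhibiting a feasible solution using exactly the \emph{updated} budgets $k', z'$. Start from the fact established just before the lemma statement that $\rho^*_{k,\bar z}(\bar P,\bar{\mathcal H})\le 5r$: there exist $\le k$ centers and $\le\bar z$ outlier sets in $\bar{\mathcal H}$ covering all of $\bar P$ within $5r$. Phase~2 removed, for each of the $k-k'$ decrements, a ball $B_i'$ of radius $15r$ that was extracted because $B_i=B(p_i,10r)\cap\bar P$ met more than $\bar z$ sets; the key claim is that in the optimal $\bar P$-solution, $p_i$ \emph{must} have been served by a genuine center (not an outlier set), because its $10r$-ball meeting $> \bar z$ sets means the $\bar z$ available outlier sets cannot cover it — actually one must argue about the $5r$ ball around $p_i$. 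I would show: each extracted ball $B_i'$ "uses up" at least one center of the optimal $\bar P$-solution that is no longer needed for the remaining $P'$, so $k - k' = z'$... no — so the restriction of the optimal $\bar P$-solution to $P'$ needs at most $k'$ centers, and still at most $z' = \bar z$ outlier sets; sets that became empty are simply dropped. The radius inflates from $5r$ to at most $10r$ because a center serving a point $q\in P'\subseteq\bar P$ in the optimal solution lies within $5r$ of $q$, hence within $\bar P$; if that center was removed with some $B_i'$, then $q\in B_i'$ puts $q$ within $15r$ of $p_i$ — so I would instead charge $q$ to $p_i$ and need the cost guarantee there, giving the stated $10r$ after a triangle-inequality accounting (likely the constants are chosen precisely so that $5r + 5r = 10r$ works when the charging is done against the point that replaces a removed center). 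The delicate point throughout is ensuring the \emph{same} outlier set budget $z' = \bar z$ still suffices after points are deleted — which holds because deleting points only makes sets smaller and can only drop sets from $\mathcal H'$, never forcing new ones into the solution.
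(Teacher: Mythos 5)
Your proposal follows the paper's proof almost step for step: the size bounds come from the same counting (each ball of radius $10r$ around a surviving center of the $P'$-solution meets at most $\bar z\le z$ sets by the Phase-2 stopping rule, each surviving set holds at most $k$ Gonzalez centers, and the $z'$ outlier sets of the solution contribute the remaining $O(kz)$), and the cost bound comes from the same plan: restrict the $5r$-solution for $(\bar P,\bar{\mathcal H})$ to $P'$, discard one center per extracted ball to match the decrement of $k$, replace any stranded center by a point of $P'$ that it served, and pay $5r+5r=10r$ by the triangle inequality.

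The one step where your justification as written would fail is the claim that licenses the decrement of $k$. You assert that $p_i$ itself ``must have been served by a genuine center'' because $B_i=B(p_i,10r)\cap\bar P$ meets more than $\bar z$ sets; that inference is wrong, since $p_i$ could perfectly well lie in one of the $\bar z$ chosen outlier sets. The correct statement --- and the reason the extracted ball has radius $15r$ rather than $10r$ --- is that since $B_i$ meets more than $\bar z$ sets, at least one point $q\in B_i$ lies outside every outlier set and is therefore within $5r$ of some center $c$ of the $\bar P$-solution; then $\dist(c,p_i)\le \dist(c,q)+\dist(q,p_i)\le 5r+10r=15r$, so $c\in B_i'$ and is deleted together with $B_i'$. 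Your aside that ``one must argue about the $5r$ ball around $p_i$'' points in the wrong direction: the argument is about the $10r$ ball and an arbitrary non-outlier point inside it, not about $p_i$. With that single fix, your outline coincides with the paper's proof.
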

\vspace{-1em}
\begin{proof}
    We first show that if 
 $r\geq \rho^*_{k,z}(P,\mathcal{H})$, then 
 $\rho^*_{k',z'}(P',\mathcal{H}')\leq 18r$. 
 We have that $\rho^*_{k,\bar{z}}(\bar{P},\bar{\mathcal{H}})\leq 9r$. Notice that $P'\subseteq \bar{P}$ and $\mathcal{H}'\subseteq \bar{\mathcal{H}}$. Let $\bar{C}\subseteq \bar{P}$ and $\bar{H}\subseteq \bar{\mathcal{H}}$ with $|\bar{C}|\leq k$ and $|\bar{H}|\leq \bar{z}$, such that $\rho(\bar{C},\bar{P}\setminus\bigcup_{h\in \bar{H}})\leq 9r$. For each $h\in\bar{H}$, if $h\notin \mathcal{H}'$ then we remove it from $\bar{H}$. Since $P'\subseteq \bar{P}$ and $\mathcal{H}'\subseteq \bar{\mathcal{H}}$ we have that $\rho(\bar{C},P'\setminus\bigcup_{h\in \bar{H}})\leq 9r$. However $\bar{C}$ is not a valid solution for $\prob_{k',z'}(P',\mathcal{H}')$ because it is not necessary that $\bar{C}\subseteq P'$ and $|\bar{C}|$ might be larger than $k'$. If $X=\emptyset$, then the result follows because $\bar{P}=P'$ and $z=z'$. If $X$ is not empty then it contains some balls of radius $36r$. 
For every $B_\ell'\in X$, notice that the subset $B_\ell\subseteq B_\ell'$ must contain at least one point $q_\ell\in B_\ell$ such that there exists a center $c_\ell\in \bar{C}$ with $\dist(c_\ell,q_\ell)\leq 9r$. This is because the set $B_\ell$ intersects points from more than $\bar{z}$ sets so it is not possible that all of these sets are selected as outliers.
Hence, every $B_\ell'\in X$ fully contains at least one cluster (all points assigned to center $c_\ell$).
Let $\bar{C}\leftarrow\bar{C}\setminus\bigcup_{B_\ell'\in X}c_\ell$. 
Notice that now, $|\bar{C}|=k'$. If there is still a center $c$ from $\bar{C}$ in a ball $B_\ell'\in X$ then we replace $c$ with one point in $P'\setminus \bigcup_{h\in \bar{H}} h$ outside of every ball in $X$ that had $c$ as the closest center in $\bar{C}$ (if such a point exists). By definition, $\bar{C}\subseteq P'$, it contains at most $k'$ points and every point in $P'\setminus \bigcup_{h\in \bar{H}}h$ is within distance $2\cdot 9\cdot r$ from the closest center in $\bar{C}$. Hence, $\rho^*_{k',z'}(P',\mathcal{H}')\leq 18r$.

Let $C'\subseteq P'$ and $H'\subseteq \mathcal{H}'$ are selected such that $|C'|\leq k'$, $|H'|\leq z'$, and $\rho(C',P'\setminus \bigcup_{h\in H'}h)\leq 18r$.
Since $\bigcup_{c'\in C'}(B(c',18r)\cap P')=P'\setminus \bigcup_{h\in H'}h$
and every set $B(c',18r)\cap P'$ contains points from at most $z$ sets, we have that $|\mathcal{H}'|\leq \min\{m,zk'+|H'|\}=O(\min\{m,kz\})$. Every set $h\in \mathcal{H}'$ contains at most $k$ points so $|P'|=O(\min\{n,k^2z,km\})$.
\end{proof}

\vspace{-1em}
\paragraph{Algorithm on coreset}
Next, we show that if we run our algorithm from Section~\ref{subsec:GeneralIS} solving an instance of $\prob(P',\mathcal{H}', k',z')$, we can get a good approximation for the $\prob(P,\mathcal{H}, k,z)$.
We slightly modify the third family of constraints in ($\mathsf{LP}$\ref{lp:1}) to explicitly express the fact that the optimum cost of $\prob(P',\mathcal{H}')$ is at most $18r$. The new LP is shown in ($\mathsf{LP}$\ref{lp:2}).
\vspace{-1em}

\lpproblemlabel{2}
\begin{align*}
\lpproblemtext \sum_{p_i\in P'}x_i &\leq k' \\
\sum_{h_j\in\mathcal{H}'}y_j &\leq z' \\
\sum_{j:\exists h_j \text{ s.t. } p_i\in h_j}y_j + \sum_{p_\ell\in B(p_i,18r)\cap P'}x_\ell &\geq 1, \forall p_i\in P' \\
x_i, y_j \in [0,1], \quad \forall p_i&\in P',\; h_j\in \mathcal{H}'
\end{align*}

Notice that the structure of ($\mathsf{LP}$\ref{lp:2}) is equivalent to ($\mathsf{LP}$\ref{lp:1}). Hence, we run the algorithm from Subsection~\ref{subsec:GeneralIS}.
The only difference is that when we add a point $p_i$ from $\mathsf{Act}$ in $C$ we remove all points within distance $36r$ from $p_i$, instead of $2r$.
Let $\hat{C}$ be these selected set of centers from our algorithm and $\hat{H}$ be the selected outliers. We set $H=\hat{H}\cup H_0$ and $C=\hat{C}\bigcup_{B_{\ell}'}p_\ell'$, where $p_\ell'$ is a point in $B_\ell'$ that does not belong in any set in $H$. If such a point does not exist then $p_\ell'=\emptyset$.
We return the sets $C$ and $H$ we derived after the last feasible LP we encountered. The proof of the next Theorem is shown in Appendix~\ref{appndx:generalDS}.


\vspace{-0.1em}
  \begin{theorem}
  \label{thm:resDS}
  Given a set of $n$ elements $P$, a family of $m$ disjoint subsets $\mathcal{H}$ over $P$, and positive integer parameters $k,z$, there exists a
  $(2,2,O(1))$-approximation algorithm for the $\prob(P,\mathcal{H},k,z)$ problem that runs in $O\left(\left(nk+k^2m+\beta_1^2+(\beta_2+\beta_3)^{\expon}\right)\cdot \log \beta_1\right)$ time, where $\beta_1=\min\{n,km\}$, $\beta_2=\min\{n,k^2z, km\}$, and $\beta_3=\min\{m,kz\}$.
  \end{theorem}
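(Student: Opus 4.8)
The plan is to prove, for each candidate radius $r$ examined by the binary search, that the coreset-then-round routine outputs a valid solution of the promised quality whenever $r\ge\rho^*_{k,z}(P,\mathcal{H})$, and then to charge the running time per radius. The starting point is Lemma~\ref{helper}: it bounds $|P'|=O(\beta_2)$, $|\mathcal{H}'|=O(\beta_3)$ and, crucially, $\rho^*_{k',z'}(P',\mathcal{H}')\le 10r$; in particular none of the ``skip'' branches ($\bar z<0$, $|\mathcal{H}'|>\min\{m,2kz\}$) fires. Feeding an optimal coreset solution into the integer program behind \LPtwo then satisfies every constraint $\sum_{h_j\ni p_i}y_j+\sum_{p_\ell\in B(p_i,10r)\cap P'}x_\ell\ge 1$, so \LPtwo is feasible — this is the analogue of part~i) of Lemma~\ref{lem:master}. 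Rounding exactly as in Section~\ref{subsec:GeneralIS}, with threshold $\tfrac1{2f}=\tfrac12$ (here $f=1$) and deletion radius $20r$, yields $\hat C\subseteq P'$ and $\hat H\subseteq\mathcal{H}'$ with $|\hat H|\le 2z'$, $|\hat C|\le 2k'$ and $\rho(\hat C,P'\setminus\bigcup_{h\in\hat H}h)\le 20r$, by the arguments of parts~ii)--iv) of Lemma~\ref{lem:master} transcribed to \LPtwo.

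Next I would analyse the lifted solution $H=\hat H\cup H_0$ and $C=\hat C\cup\{p_\ell'\mid B_\ell'\in X\}$. Phase~1 decrements $z$ exactly $|H_0|$ times while leaving $k$ fixed (so $z'=\bar z=z-|H_0|$) and Phase~2 decrements $k$ exactly $|X|$ times (so $k'=k-|X|$); hence $|H|\le 2z'+|H_0|=z+\bar z\le 2z$ and $|C|\le 2k'+|X|=2(k-|X|)+|X|\le 2k$. Validity ($C\subseteq P$, $H\subseteq\mathcal{H}$, $C\cap\bigcup_{h\in H}h=\emptyset$) is immediate: $\hat C\subseteq P'$ already avoids $\bigcup_{h\in\hat H}h$, each $p_\ell'$ is chosen outside every set of $H$, and the sets of $H_0$ contribute no surviving point of $\bar P$.

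The heart of the proof is the cost bound, and this is where I expect the real work. Fix any $p\in P\setminus\bigcup_{h\in H}h$; by disjointness $p$ lies in a unique $h_j$ with $h_j\notin H_0$ and $h_j\notin\hat H$. Since $h_j$ survived Phase~1, $\rho(C_j,h_j)\le 2r$, and after the within-$2r$ de-duplication of $C_j$ some kept representative $c\in C_j\cap\bar P$ satisfies $\dist(p,c)\le 4r$. If $c\in P'$, then because $c\in h_j\notin\hat H$ and $c$ lies in no other set, $c$ is covered by $\hat C$ within $20r$, so $\dist(p,C)\le 24r$. If instead $c$ was deleted into some $B_\ell'=B(p_\ell,15r)\cap\bar P\in X$, then $\dist(c,p_\ell)\le 15r$, and since $c$ itself is a point of $B_\ell'$ lying in no set of $H$ we get $p_\ell'\ne\emptyset$ and $\dist(p_\ell,p_\ell')\le 15r$, whence $\dist(p,p_\ell')\le 34r$. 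In both cases $\rho(C,P\setminus\bigcup_{h\in H}h)\le 34r$. Finally, since $\rho^*_{k,z}(P,\mathcal{H})$ is $0$ or within a constant factor of one of the candidate radii, the smallest radius $r^\dagger$ for which the routine succeeds satisfies $r^\dagger=O(\rho^*_{k,z}(P,\mathcal{H}))$, so the reported cost is $O(1)\cdot\rho^*_{k,z}(P,\mathcal{H})$, giving the $(2,2,O(1))$ guarantee.

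For the running time, each binary-search iteration costs $O(nk)$ for the per-set Gonzalez runs~\cite{gonzalez1985clustering} (using $\sum_j|h_j|=n$ in the disjoint case) plus $O(k^2m)$ for the de-duplications in Phase~1, $O(\beta_1^2)$ for scanning $O(\beta_1)$ points of $\bar P$ at $O(\beta_1)$ cost each in Phase~2, and $O(\beta_2^2)+O((\beta_2+\beta_3)^{\expon})=O(\beta_1^2)+O((\beta_2+\beta_3)^{\expon})$ to build and solve \LPtwo once on the coreset via~\cite{jiang2021faster} — only a single LP solve plus the linear-time rounding, not a re-run of the full routine of Section~\ref{subsec:GeneralIS}, so no extra logarithmic factor enters here. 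Running the binary search over an appropriately chosen candidate set of $\poly(\beta_1)$ radii — the coreset construction already tolerates constant-factor slack in $r$, so a coarse set assembled from the per-set Gonzalez thresholds suffices — yields $O(\log\beta_1)$ iterations and the stated bound. Besides the triangle-inequality chain above, the one genuinely non-routine point is pinning down this candidate-radius set so that $O(\log\beta_1)$ iterations and a constant-factor-optimal $r^\dagger$ are guaranteed simultaneously.
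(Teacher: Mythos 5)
Your proposal is correct and follows essentially the same route as the paper: per-radius coreset construction via Lemma~\ref{helper}, feasibility and rounding of \LPtwo{} as in Lemma~\ref{lem:master}, the counting identities $|H|\le 2z'+|H_0|\le 2z$ and $|C|\le 2k'+|X|\le 2k$, a triangle-inequality chain for the cost (your explicit handling of Phase-1-removed representatives gives $34r$ where the paper claims $30r$ — both $O(1)$), and a binary search over $O(\beta_1^2)$ candidate radii built from the per-set Gonzalez centers, which is exactly the paper's $km<n$ trick. The one point you flag as open — pinning down the candidate set — is resolved in the paper by taking all pairwise distances among $\bigcup_j C_j$ (or all of $P$ when $n\le km$), matching your sketch.
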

  \vspace{-0.5em}

\vspace{-0.2em}
\section{Geometric clustering with rectangular outliers}
In this subsection we consider the case where $P\subset \Re^d$ and $\setrects$ is a set of hyper-rectangles in $\Re^d$.
We show near linear time algorithms with respect to both $n, m$.
Again, we consider two separate cases, one where $f=1$, and one where $f>1$. We note that in this section each element $\rec\in \setrects$ is a hyper-rectangle, not a subset of $P$, so we write $\rec\cap P$ to denote the subset of $P$ that lies in $\rec$.
Let $B(p,r)=\{x\in \Re^d\mid \dist(p,x)\leq r\}$ the ball with center $p\in \Re^d$ and radius $r\in \Re$. We begin our discussion by revisiting some relevant background on the techniques used in this section.

\vspace{-0.3em}
\subsection{Relevant Background for $\gprob$}
\label{sec:prelim}

\newcommand{\var}{\psi}
\paragraph{Multiplicative Weight Update (MWU) method}
The MWU method~\cite{arora2012multiplicative} is used to solve the following linear feasibility problem.
\vspace{-1em}
\begin{equation}
    \label{feasibility}
    \exists \var\in \mathcal{P}: A\var\geq b,
\end{equation}
where $A\in \Re^{m'\times n'}$, $\var\in \Re^{n'}$, $b\in \Re^{m'}$, $Ax\geq 0$, $b\geq 0$, and $\mathcal{P}$ is a convex set in $\Re^{n'}$.
Arora et al.~\cite{arora2012multiplicative} describe an iterative algorithm using a simple \textsf{ORACLE}.
Let \textsf{ORACLE} be a black-box procedure that solves the following single linear constraint for a probability vector $\probvector\in \Re^{m'}$.
\begin{equation}
\label{eq:oracle}
\exists \var\in \mathcal{P}: \probvector^\top A\var\geq \probvector^\top b.    
\end{equation}

The \textsf{ORACLE} decides if there exists an $\var$ that satisfies the single linear constraint. Otherwise, it returns that there is no feasible solution.
A $\xi$-\textsf{ORACLE} is an \textsf{ORACLE} such that whenever \textsf{ORACLE} manages to find a feasible solution $\hat{\var}$ to problem~\eqref{eq:oracle}, then $A_i\hat{\var}-b_i\in [-1,\xi]$ for each constraint $i\in[m']$, where $A_i$ is the $i$-th row of $A$, and $b_i$ is the $i$-th element of vector $b$.

The algorithm starts by initializing $\probvector$ to a uniform probability vector with value $1/m'$. In each iteration the algorithm solves Equation~\eqref{eq:oracle}. If \eqref{eq:oracle} is infeasible, we return that the original feasibility problem in Equation~\eqref{feasibility} is infeasible. Let $\var^{(t)}$ be the solution of the problem in Equation~\eqref{eq:oracle} in the $t$-th iteration of the algorithm.
Let $\delta_i=\frac{1}{\xi}(A_i\var^{(t)}-b_i)$. We update $\probvector[i]=(1-\delta_i\cdot\eps/4)\probvector[i]$, where $\probvector[i]$ is the $i$-th element of vector $\probvector$. We continue to the next iteration solving \eqref{eq:oracle} for the new probability vector $\probvector$. After $T=O(\xi\log(m')/\eps^2)$ iterations, if every oracle was feasible, they return $\var^*=\frac{1}{T}\sum_{t=1}^T\var^{(t)}$.
Otherwise, if an oracle was infeasible, they argue that the initial problem is infeasible.
Overall, every algorithm using the MWU method to solve a problem in the form of Equation~\eqref{feasibility} should implement two procedures: $\mathsf{Oracle}(\cdot)$ that implements a $\xi$-\textsf{ORACLE} and $\mathsf{Update}(\cdot)$ that updates the probability vector $\probvector$.
In~\cite{arora2012multiplicative} they prove the following theorem.
\begin{theorem}[\cite{arora2012multiplicative}]
\label{thm:mutli-weights}
Given a feasibility problem as defined above, a parameter $\eps$, a $\xi$-\textsf{ORACLE} implemented in procedure $\mathsf{Oracle}(\cdot)$, and an update procedure $\mathsf{Update}(\cdot)$, there is an algorithm which either finds an $\var$ such that $\forall i$, $A_i\var_i\geq b_i-\eps$ or correctly concludes that the system is infeasible. The algorithm makes $O(\xi\log(m')/\eps^2)$ calls to procedures $\mathsf{Oracle}(\cdot)$ and $\mathsf{Update}(\cdot)$.
\end{theorem}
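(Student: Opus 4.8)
The plan is to instantiate the standard multiplicative-weights potential argument, viewing the $m'$ rows of the system as ``experts''. Fix a step size $\eta=\Theta(\eps)$ and a horizon $T=\Theta(\xi\log(m')/\eps^2)$, maintain unnormalized weights $w_i^{(t)}$ initialized to $w_i^{(1)}=1$, and let $\Phi^{(t)}=\sum_{i}w_i^{(t)}$ with $\probvector^{(t)}_i=w_i^{(t)}/\Phi^{(t)}$. In round $t$ we call $\mathsf{Oracle}(\cdot)$ on $\probvector^{(t)}$: if it reports that $\probvector^{(t)\top}A\var\ge\probvector^{(t)\top}b$ has no solution in $\mathcal{P}$, we halt and declare~\eqref{feasibility} infeasible, which is sound because any $\var\in\mathcal{P}$ with $A\var\ge b$ satisfies $\probvector^\top A\var\ge\probvector^\top b$ for every probability vector $\probvector\ge 0$. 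Otherwise $\mathsf{Oracle}(\cdot)$ returns $\var^{(t)}\in\mathcal{P}$ with $\probvector^{(t)\top}(A\var^{(t)}-b)\ge 0$, and by the $\xi$-\textsf{ORACLE} guarantee the normalized gains $c_i^{(t)}:=\tfrac{1}{\xi}(A_i\var^{(t)}-b_i)$ satisfy $c_i^{(t)}\in[-\tfrac{1}{\xi},1]$ for all $i$; we assume w.l.o.g.\ $\xi\ge 1$, so $|c_i^{(t)}|\le 1$. Then $\mathsf{Update}(\cdot)$ reweights multiplicatively, $w_i^{(t+1)}=w_i^{(t)}(1-\eta c_i^{(t)})$; for the analysis it is cleanest to use the equivalent exponential (Hedge) form $w_i^{(t+1)}=w_i^{(t)}e^{-\eta c_i^{(t)}}$. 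If every round succeeds, we output $\bar\var:=\tfrac{1}{T}\sum_{t=1}^T\var^{(t)}$, which lies in $\mathcal{P}$ by convexity.

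The heart of the argument is a two-sided estimate of $\Phi^{(T+1)}$. On the upper side, using $e^{-x}\le 1-x+x^2$ for $|x|\le 1$ coordinatewise we get the one-round contraction $\Phi^{(t+1)}\le\Phi^{(t)}\exp\!\bigl(-\eta\langle c^{(t)},\probvector^{(t)}\rangle+\eta^2\langle (c^{(t)})^2,\probvector^{(t)}\rangle\bigr)$, where $\langle (c^{(t)})^2,\probvector^{(t)}\rangle:=\sum_i\probvector^{(t)}_i(c_i^{(t)})^2$; since the oracle certifies $\langle c^{(t)},\probvector^{(t)}\rangle=\tfrac{1}{\xi}(\probvector^{(t)\top}A\var^{(t)}-\probvector^{(t)\top}b)\ge 0$, telescoping over all rounds yields $\Phi^{(T+1)}\le m'\exp\!\bigl(-\eta\sum_t\langle c^{(t)},\probvector^{(t)}\rangle+\eta^2\sum_t\langle (c^{(t)})^2,\probvector^{(t)}\rangle\bigr)$. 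On the lower side, restricting to a single row $j$, $\Phi^{(T+1)}\ge w_j^{(T+1)}=\exp\!\bigl(-\eta\sum_t c_j^{(t)}\bigr)$, which is \emph{exact} for the Hedge update, so the comparator contributes no quadratic error term. Taking logarithms and combining gives, for every row $j$, the regret inequality $\sum_t\langle c^{(t)},\probvector^{(t)}\rangle\le\tfrac{\ln m'}{\eta}+\sum_t c_j^{(t)}+\eta\sum_t\langle (c^{(t)})^2,\probvector^{(t)}\rangle$.

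The decisive step, which I expect to be the main obstacle, is controlling the quadratic term so that the iteration count is \emph{linear} in $\xi$ rather than quadratic --- and this is precisely where the one-sided width promise $c_i^{(t)}\ge-\tfrac{1}{\xi}$ of the $\xi$-\textsf{ORACLE} enters. Since $|c_i^{(t)}|\le 1$ we have $(c_i^{(t)})^2\le|c_i^{(t)}|$, and since $c_i^{(t)}\ge-\tfrac{1}{\xi}$ we have $|c_i^{(t)}|\le c_i^{(t)}+\tfrac{2}{\xi}$; hence $\sum_t\langle (c^{(t)})^2,\probvector^{(t)}\rangle\le\sum_t\langle c^{(t)},\probvector^{(t)}\rangle+\tfrac{2T}{\xi}$. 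Plugging this into the regret inequality and moving the $\eta\sum_t\langle c^{(t)},\probvector^{(t)}\rangle$ term to the left gives $(1-\eta)\sum_t\langle c^{(t)},\probvector^{(t)}\rangle\le\tfrac{\ln m'}{\eta}+\sum_t c_j^{(t)}+\tfrac{2\eta T}{\xi}$; the left side is nonnegative (oracle feasibility and $\eta<1$), so $\tfrac{1}{T}\sum_t c_j^{(t)}\ge-\tfrac{\ln m'}{\eta T}-\tfrac{2\eta}{\xi}$ for every $j$. Multiplying by $\xi$ and recalling $\tfrac{1}{T}\sum_t(A_j\var^{(t)}-b_j)=A_j\bar\var-b_j$ yields $A_j\bar\var-b_j\ge-\tfrac{\xi\ln m'}{\eta T}-2\eta$; choosing $\eta=\eps/4$ and $T=\Theta(\xi\log(m')/\eps^2)$ makes each term at most $\eps/2$, so $A_j\bar\var\ge b_j-\eps$ for all $j$ simultaneously, while the algorithm has made exactly $T=O(\xi\log(m')/\eps^2)$ calls to $\mathsf{Oracle}(\cdot)$ and $\mathsf{Update}(\cdot)$, as claimed. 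The only calculations needing care are the one-round potential contraction and the elementary bounds $e^{-x}\le 1-x+x^2$ and $1+x\le e^x$; everything else is bookkeeping.
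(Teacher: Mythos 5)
Your proof is correct: the two-sided potential estimate, the use of the one-sided width bound $c_i^{(t)}\ge -1/\xi$ to convert the quadratic term into a linear one (which is exactly what makes the iteration count linear rather than quadratic in $\xi$), and the final choice $\eta=\eps/4$, $T=\Theta(\xi\log(m')/\eps^2)$ all check out. The paper does not prove this theorem itself but imports it from Arora, Hazan, and Kale, and your argument is essentially the standard potential-function proof given there, so there is nothing further to reconcile.
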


\paragraph{WSPD}
Using a quadtree~\cite{finkel1974quad}, one can compute a \emph{Well-Separated Pair Decomposition} (WSPD)~\cite{callahan1995decomposition, har2005fast} for a set $P$ of $n$ points in $\Re^d$. Specifically, we can construct a list $\mathcal{L} = \{L_1, \ldots, L_z\}$ of $O(\varepsilon^{-d} n)$ distances such that for every pair $p, q \in P$, there exists some $L_j \in \mathcal{L}$ satisfying $(1 - \varepsilon)\,\dist(p, q) \leq L_j \leq (1 + \varepsilon)\,\dist(p, q)$.
The construction time for $\mathcal{L}$ is $O(\varepsilon^{-d} n \log n)$.


\paragraph{BBD tree}
BBD tree~\cite{arya1998optimal, arya2000approximate} is tree-based data structure that is used for counting, reporting and other aggregation queries over balls and other convex query regions. Given a set $P$ of $n$ points in $\Re^d$ the BBD tree $\tree$ over $P$ is constructed in $O(n\log n)$ time and has space $O(n)$. The height of the tree is $O(\log n)$. Every node $u$ of $\tree$ is associated with a region $\square_u$, which is either a hyper-rectangle or a hyper-rectangle with a hole. A hole is defined by another hyper-rectangle. Given a query ball $B(x,r)$ over a point $x\in \Re^d$ and a distance $r$, and a parameter $\eps\in (0,1)$, a query on $\tree$ returns a set of canonical nodes $\canonical(B(x,r))$ such that $B(x,r)\cap P\subseteq \bigcup_{u\in \canonical(B(x,r))}\square_u\cap P\subseteq B(x,(1+\eps)r)\cap P$. Furthermore, for any pair of nodes $u,v\in\canonical(B(x,r))$, we have that $\square_u\cap \square_v=\emptyset$.
The query procedure runs in $O(\eps^{-d+1}+\log n)$ time and $|\canonical(B(x,r))|=O(\eps^{-d+1}+\log n)$.
If for each node $u$ we store $u.c=|P \cap \square_u|$ or $u.P=P\cap \square_u$, notice that the BBD tree can be used to count or report the points in a ball $B(x,r)$. According to its definition we report (or count) all points in $B(x,r)\cap P$, but we might also report (or count) points of $P$ slightly outside the ball $B(x,r)$.

\paragraph{Range tree}
While BBD tree is used for approximate query processing over a ball, a range tree~\cite{bentley1979decomposable, de2000computational} is used for exact reporting or counting queries over a hyper-rectangle. Given a set $P$ of $n$ points in $\Re^d$, a range tree $\rangetree$ is constructed over $P$ in $O(n\log^{d-1}n)$ time with $O(n\log^{d-1}n)$ space.
Each node $u$ of $\rangetree$ is associated with a hyper-rectangle $\square_u$.
Given a query hyper-rectangle $\rec$ in $\Re^d$, the query procedure on $\rangetree$ returns a set of canonical nodes $\canonical(\rec)$ such that $\rec\cap P=\bigcup_{u\in \canonical(\rec)}(\square_u\cap P)$.
For any pair of nodes $u,v\in\canonical(\square)$, we have that $\square_u\cap \square_v=\emptyset$.
The query procedure runs in $O(\log^d n)$ time and $|\canonical(\square)|=O(\log^d n)$.
If for each node $u$ we store $u.c=|P \cap \square_u|$ or $u.P=P\cap \square_u$, notice that the range tree can be used to count or report the points in $P$ that lie in a hyper-rectangle $\rec$.

\vspace{-0.7em}
\subsection{Approximation Algorithm for General $\gprob$}
\label{subsec:GeometricIS}
\paragraph{High-level ideas}
We design a new algorithm that uses the MWU approach to solve ($\mathsf{LP}$\ref{lp:1}).
While the MWU approach can work directly on ($\mathsf{LP}$\ref{lp:1}), it takes $\Omega((n+m)^2)$ time to run. Instead, we define a new linear feasibility problem, called ($\mathsf{LP}$\ref{lp:3}) and we use the MWU method to approximately solve it in near-linear time. Finally, we round its fractional solution to return a valid solution for the $\gprob(P,\setrects)$ problem in near-linear time using advanced geometric data structures.

\vspace{-1em}
\lpproblemlabel{3}
\begin{align}
\lpproblemtext \sum_{i\in[n]}x_i &\leq k \label{neq1} \\
\sum_{j\in[m]}y_j &\leq z \label{neq2} \\
\sum_{\rec_j\in L_i}y_j + \sum_{p_\ell\in S_{p_i}^\eps}x_\ell &\geq 1, \quad \forall p_i \in P \label{neq3} \\
x_i, y_j \in [0,1], \quad &\forall p_i \in P,\; \rec_j \in \setrects \label{nsimple}
\end{align}

Assume that $r$ is a pairwise distance among the items in $P$.
Instead of solving ($\mathsf{LP}$\ref{lp:1}), we define a slightly different linear feasibility problem 
($\mathsf{LP}$\ref{lp:3}). The Constraints~\eqref{eq1} and Constraints~\eqref{eq2} from ($\mathsf{LP}$\ref{lp:1}) remain the same.
However, we slightly modify Constraints~\eqref{eq3}.
For a point $p_i$ we define a set $S_{p_i}^\eps\subseteq P$ denoting its ``neighboring'' points, with a definition of neighboring which is convenient for the data structure we use. The set $S_{p_i}^\eps$ contains all points within distance $r$ from $p_i$, might contain some points within distance at most $(1+\eps)r$, and no point with distance more than $(1+\eps)r$.
The properties of the BBD tree are used to formally define $S_{p_i}^{\eps}$.
We define $S_{p_i}^{\eps}=\{p\in \square_{u}\cap P\mid u\in \canonical(B(p_i,r))\}$, i.e., the set of points in the canonical nodes returned by the query ball $B(p_i,r)$.
We replace Constraints~\eqref{eq3} with
$\sum_{\rec_j\in L_i}y_{j}+\sum_{p_\ell\in S_{p_i}^\eps}x_\ell\geq 1$.
Overall, the new feasibility problem is given in ($\mathsf{LP}$\ref{lp:3})




Recall that the MWU method solves feasibility problems in the form of Equation~\eqref{feasibility}, $\exists \var\in \mathcal{P}:Ax\geq b$. Next, we show that ($\mathsf{LP}$\ref{lp:3}) can be written in this form by defining $\mathcal{P}$, $A$, and $b$.
Instead of considering that the trivial constraints $\mathcal{P}$ contain only the inequalities~\eqref{nsimple}, we assume that Constraints~\eqref{neq1}  and~\eqref{neq2} are also and contained in $\mathcal{P}$. This will allow us later to design a $(k+z)$-\textsf{ORACLE}. The set $\mathcal{P}$ is convex because it is defined as the intersection of $O(m+n)$ halfspaces $\Re^{m+n}$. Hence, it is valid to use the MWU method.
The new Constraints~\eqref{neq3} define the binary matrix $A$, having one row for every point $p_i\in P$ and one column for every point in $P$ and every hyper-rectangle in $\setrects$. The value $A[i,\ell]=1$ if $p_\ell\in S_{p_i}^\eps$, otherwise it is $0$.
The value $A[i,n+j]=1$ if $\rec_j\in L_i$, otherwise it is $0$.
Notice that each variable $x_\ell$ corresponds to $\var_\ell$, and each variable $y_j$ to $\var_{n+j}$.
Finally, $b$ is defined as a vector in $\Re^d$ with all elements being $1$. From Theorem~\ref{thm:mutli-weights}, we know that the MWU method returns a solution that satisfies the constraints in $\mathcal{P}$ (Constraints~\eqref{neq1} and~\eqref{neq2}) exactly, while the Constraints~\eqref{neq3} are satisfied with an $\eps$ additive error.

A straightforward implementation of the MWU method over ($\mathsf{LP}$\ref{lp:3}) would take super-quadratic time; even the computation of $A$ takes $O(n^2+n\cdot m)$ time. Our new algorithm does not construct $A$ explicitly. Likewise, it does not construct the sets $S_p^\eps$ explicitly. Instead, we use geometric data structures to implicitly represent $A$, $S_p^\eps$ and execute our algorithm in near-linear time.

Before presenting our method, we note that several works in computational geometry have applied the MWU method to implicitly solve special classes of LPs~\cite{chan2020faster, chekuri2020fast, clarkson2005improved}. 
For example,~\cite{chekuri2020fast} presents MWU-based algorithms with runtime linear in the number of non-zeros in matrix $A$. In our case, $A$ may contain $O(n(n+m))$ non-zero entries, ruling out near-linear time guarantees. Moreover,~\cite{chan2020faster, chekuri2020fast} give constant-factor approximations for problems such as geometric set cover or maximum independent set of disks, but there are key differences: (i) their algorithms apply only in $\Re^2$ or $\Re^3$;
ii) the optimization problem in $\prob$ is fundamentally different. Lastly, this section follows the structure of~\cite{kurkure2024faster}, which uses geometric data structures to speed up MWU. However, while they only manage one type of variable ($x$) and use a BBD tree, our setting involves two variable types ($x, y$), requiring a combination of a range tree and a BBD tree.

\paragraph{Algorithm}
We show the pseudocode in Algorithm~\ref{alg:alg0}. First, it computes a sorted array $\Gamma$ of (candidates of) pairwise distances in $P$. Then it runs a binary search on $\Gamma$. Lines 2, 4, 5, 13, 14, 16 are all trivial executing the binary search on $\Gamma$. Each time we find a feasible (infeasible) solution for the optimization problem in the MWU method we try smaller (larger) values of $r$. For a particular 
\begin{wrapfigure}{r}{0.45\textwidth}  
\vspace{-1em}  
\begin{minipage}{0.45\textwidth}
\begin{algorithm}[H]
    \caption{$MWU(P,\setrects,\eps, k, z)$}
    \label{alg:alg0}
    \small
    $\Gamma\!\!\gets\!\!$ Sorted array of pairwise distances in\hspace{0.04em}$P$\;
    $M_l\gets 0$,\hspace{0.25em} $M_u\gets |\Gamma|-1$\;
    $T\gets O(\eps^{-2}\xi\log n)$\;
    \While{$M_l\neq M_u$}{
        $M\gets\lceil (M_l+M_u)/2 \rceil$, \hspace{0.25em}
        $r\gets \Gamma[M]$\;
        $\hat{\var}\gets(0,\ldots,0)^\top\in \Re^{n+m}$\;
        $\probvector\gets (\frac{1}{n}, \ldots, \frac{1}{n})^\top\in \Re^{n}$\;
        \For{$1,\ldots, T$}{
            $\bar{\var}\gets$\textsf{Oracle}$(P,\setrects,r,\eps, k,z)$\;
            \If{$\bar{\var}\neq\emptyset$}{
                $\hat{\var}\gets \hat{\var}+\bar{\var}$\;
                $\probvector\gets$\textsf{Update}$(P, \setrects, \bar{\var},r,\eps)$\;
            }\Else{
                $M_u\gets M-1$,
                Go to Line 4\;
            }
        }
        $\hat{\var}\gets \hat{\var}/T$\;
        $M_l\gets M$\;
    }
    $(C,\rects)\gets$\textsf{Round}$(P, \setrects,\eps, \hat{\var}, k, z)$\;
    \Return $(C,\rects)$\;
\end{algorithm}
\end{minipage}
\vspace{-1.5em}  
\end{wrapfigure}
$r\in \Gamma$, in lines 4--16 we use the MWU method to solve the LP. The algorithm follows the MWU method as described in Section~\ref{sec:prelim}.
In particular, for at most $T=O(\eps^{-2}\xi\log n)$ iterations, it calls
$\mathsf{Oracle}(\cdot)$ in line 9 to decide whether there exists $\var$ such that $\probvector^\top A\var\leq \probvector^\top b$ and $\var\in \mathcal{P}$. In our case we notice that $b\in\Re^{n}$ and $b=\{1,\ldots, 1\}$, so it is sufficient to decide whether there exists $\var$ such that $\probvector^\top A\var\geq \sum_{\ell}\probvector[\ell]\Leftrightarrow \probvector^\top A\var\geq 1$ and $\var\in \mathcal{P}$. If $\mathsf{Oracle}()$ returns a feasible solution $\bar{\var}$, in line 11 it updates the final solution $\hat{\var}$ and uses the $\mathsf{Update}(\cdot)$ procedure to update the vector $\probvector$ based on $\bar{\var}$. If all $T$ iterations return feasible solutions, in line 15 it computes the solution of LP for a given $r$.
In the end, in line 17 we
call a rounding procedure to compute a set of centers $C$ and a set of outliers $\rects$, using the (approximate) solution $\bar{\var}$ from the LP.

If $Q_{\Gamma}, Q_O, Q_U, Q_R$ is the running time to compute $\Gamma$, run $\mathsf{Oracle}(\cdot)$ $\mathsf{Update}(\cdot)$, and $\mathsf{Round}(\cdot)$, respectively, then the overall running time of Algorithm~\ref{alg:alg0} is
$O(Q_{\Gamma} + (\eps^{-2}\xi \log n)(Q_O+Q_U+n)\log |\Gamma| + Q_R)$.
Using a straightforward implementation of these procedures we have $Q_{\Gamma}, Q_O, Q_U, Q_R = \Omega(n^2)$.
Next, we use geometric tools to show how we can find a set $\Gamma$ and run all the procedures in near-linear time with respect to $n$ and $m$. We also show that $\xi=k+z$.

\paragraph{The $\mathsf{Oracle}(\cdot)$ procedure}
We design a $(k+z)$-\textsf{ORACLE} procedure as defined in Section~\ref{sec:prelim}. The goal is to decide whether there exists $\var$ such that $\probvector^\top A\var\leq 1$ and $\var\in \mathcal{P}$. We note that $\mathsf{Oracle}(\cdot)$ does not compute the matrix $A$ explicitly.
In fact, $\probvector^\top A\var$ can be written as $\sum_{p_\ell\in P}\lambda_\ell \var_\ell + \sum_{\rec_j\in \setrects}\mu_{j}\var_{n+j}$, for some real coefficients $\lambda_\ell, \mu_j$.
Intuitively, our goal is to find $k$ points from $P$ and $z$ hyper-rectangles from $\setrects$ having the largest coefficients $\lambda_\ell, \mu_j$. Our algorithm first finds all coefficients $\lambda_\ell, \mu_j$ and then it chooses the $k$ largest $\lambda_\ell$ coefficients from $P$ and the $z$ largest $\mu_j$ coefficients from $\setrects$. In that way, we find the solution $\bar{\var}$ that maximizes $\probvector^\top A\bar{\var}$ for $\bar{\var}\in\mathcal{P}$. Finally, we check whether $\probvector^\top A\bar{\var}\geq 1$.

We are given a probability vector $\probvector\in \Re^n$. Each value $\probvector[i]\in \probvector$ corresponds to the \emph{weight} of the $i$-th row of matrix $A$. In other words each point $p_i\in P$ is associated with a weight $\probvector[i]$.
Let $\tree$ be the BBD tree constructed as described in Section~\ref{sec:prelim} over the set of points $P$.
For every node $u$ of $\tree$, we initialize a weight $u_s=0$.
Let $\rangetree$ be the tree constructed as described in Section~\ref{sec:prelim} over the set of weighted points $P$.
By construction, every node $v$ 
of $\rangetree$, initializes a weight $v_s$ which is equal to the sum of weights of points that lie in leaf nodes of the subtree rooted at $v$.
The data structure $\tree$ has $O(n)$ space and can be constructed in $O(n\log n)$ time. The data structure $\rangetree$ has $O(n\log^{d-1}n)$ space and can be constructed in $O(n\log^{d-1} n)$ time.


\newcommand{\canonicalrange}{\mathcal{U}}

Using $\tree$ and $\rangetree$, 
we show how to check whether there exists $\bar{\var}$ such that $\probvector^\top A\bar{\var}\geq 1$ and $\bar{\var}\in \mathcal{P}$.
For each $p_\ell\in P$ we run the query $\tree(p_\ell, r)$ and we get the set of canonical nodes $\canonical(B(p_\ell,r))$. For each node $u\in\canonical(B(p_\ell,r))$, we update $u_s\leftarrow u_s+\probvector[\ell]$.
After we traverse all points in $P$, 
we revisit each point $p_\ell\in P$ and continue as follows: We initialize a weight $w_\ell=0$. We start from the leaf node of $\tree$ that contains $p_\ell$ and we traverse $\tree$ bottom up until we reach the root of the BBD tree. Let $v$ be a node we traverse; we update $w_\ell=w_\ell+v_s$.
Intuitively in this process we get the sum of weight of all points $p_i$ such that $p_\ell \in S_{p_i}^\eps$.
After computing all values $w_\ell$ for $\ell\in[n]$,
we find the $k$ points from $P$ with the largest weights $w_\ell$. Let $P_L$ be these points. For each $p_\ell\in P_L$ we set $\bar{\var}_\ell=1$. Otherwise, if $p_\ell\notin P_L$ we set $\bar{\var}_\ell=0$.
Next, we process the hyper-rectangles.
For each $\rec_j\in \setrects$, we run a query on $\rangetree$ and we get the set of canonical nodes $\canonicalrange(\rec_j)$.
We set $\tau_j=\sum_{u\in\canonicalrange(\rec_j)}u.s$.
Let $\setrects_L\subseteq \setrects$ be the $z$ hyper-rectangles with the largest weights $\tau_j$.
For each $\rec_j\in \setrects_L$ we set $\bar{\var}_{n+j}=1$. Otherwise, if $\rec_j\notin \setrects_L$ we set $\bar{\var}_{n+j}=0$.
If $\sum_{p_\ell\in P}w_\ell\bar{\var}_\ell +\sum_{\rec_j\in \setrects}\tau_j\bar{\var}_{n+j}\geq 1$ the oracle returns $\bar{\var}$ as a feasible solution. Otherwise, it returns that there is no feasible solution.

\vspace{-0.5em}
\paragraph{Correctness}
We show that the \textsf{ORACLE} we design is correct.
We first show that $\lambda_\ell=w_\ell$ for $\ell\in[n]$ and $\mu_j=\tau_j$ for $j\in[n]$,
so $\probvector^\top A\var = \sum_{p_\ell\in P}w_\ell \var_\ell + \sum_{\rec_j\in \setrects}\tau_j \var_{n+j}$.
Recall that $\probvector^\top A\var=\sum_{p_\ell\in P}\lambda_\ell \var_\ell + \sum_{\rec_j\in \setrects}\mu_j \var_{n+j}$, for the real coefficients $\lambda_\ell$ and $\tau_j$.

By definition, each $\lambda_\ell$ is defined as $\lambda_\ell=\sum_{p_i\in P}\probvector[i]\cdot \mathcal{I}(p_\ell\in S_{p_i}^\eps)$, where $\mathcal{I}(p_\ell\in S_{p_i}^\eps)=1$ if $p_\ell\in S_{p_i}^\eps$ and $0$ otherwise.
If $p_\ell\in S_{p_i}^\eps$ then by definition $\lambda_\ell$ contains a term $\probvector[i]$ in the sum. 
There exists also a node $u\in \canonical(B(p_i,r))$
such that $p_\ell\in \square_u\cap P$, i.e.,  $p_\ell$ lies in a leaf node of the subtree rooted at $u$.
Starting from the leaf containing $p_\ell$, our algorithm will always visit the node $u$, and since $u\in \canonical(B(p_i,r))$ we have that $\probvector[i]$ is a term in the sum $u_s$ so by updating $w_\ell=w_\ell+u_s$ we include the term $\probvector[\ell]$ in the weight $w_\ell$.
Overall, we have that $w_\ell=\lambda_\ell$ for $\ell\in[n]$.
Similarly, each $\mu_j$ is defined as $\mu_j=\sum_{p_i\in P\cap \rec_j}\probvector[i]$. By definition 
$\bigcup_{v\in\canonicalrange(\rec_j)}(\square_v\cap P) = \rec_j\cap P$ and
$\sum_{v\in\canonicalrange(\rec_j)}v.s=\sum_{p_i\in P\cap \rec_j}\probvector[i]$ so $\tau_j=\mu_j$ for $j\in[n]$.
We conclude that our algorithm finds all the correct coefficients in the linear function $\probvector^\top Ax$.

Then we focus on maximizing the sum $\sum_{p_\ell\in P}w_\ell \var_\ell +\sum_{\rec_j\in\setrects}\tau_j \var_{n+j}$ satisfying $\var\in \mathcal{P}$. The constraints in $\mathcal{P}$ suggests that $\sum_{\ell\in[n]}\var_\ell=k$ and $\sum_{j\in[m]}\var_{n+j}=z$. Our algorithm sets the variables $\var_\ell$ that are multiplied by the largest $k$ coefficients to $1$ and sets the variables $\var_{n+j}$ that are multiplied by the largest $z$ coefficients to $1$. Hence, by definition
$\sum_{p_\ell\in P}w_\ell\bar{\var}_\ell +\sum_{\rec_j\in \setrects}\tau_j\bar{\var}_{n+j}$ maximizes the sum $\sum_{p_\ell\in P}w_\ell \var_\ell +\sum_{\rec_j\in\setrects}\tau_j \var_{n+j}$ for $x\in\mathcal{P}$.

Let $\bar{\var}$ be the feasible solution returned by $\mathsf{Oracle}(\cdot)$.
Notice that by definition, $\bar{\var}$ sets $k+z$ variables to $1$. Hence, for each Constraint~\eqref{neq3}, it holds that $A_i\bar{\var}-b_i\leq k-1$ and $A_i\bar{\var}-b_i\geq -1$, where $A_i$ is the $i$-th row of $A$. We conclude that our $\mathsf{Oracle}$ procedure computes a $(k+z)$-\textsf{ORACLE} as defined in~\cite{arora2012multiplicative}, so $\xi=k+z$.

\vspace{-0.5em}
\paragraph{Running time.}
For each new probability vector $\probvector$ we construct $\tree$ in $O(n\log n)$ time and $\rangetree$ in $O(n\log^{d-1}n)$ time. For each point $p_i$ we compute $\canonical(B(p_i,r))$ in $O(\log n + \eps^{-d+1})$ time. Furthermore, the height of $\tree$ is $O(\log n)$ so for each point $p_i$ we need additional $O(\log n)$ time to compute $w_i$.
For each hyper-rectangle $\rec_j$ we compute $\canonicalrange(\rec_j)$ in $O(\log^d n)$ time, so $\tau_j$ is computed in $O(\log^d n)$ time.
After computing the weights, we find the largest $k$ weights $w_i$ and the largest $z$ weights $\tau_j$ in $O(n+m)$ time. Overall, $Q_O=O\left(n\log^{d-1} n +m\log^d n + n\log n + n\eps^{-d+1}\right)=O\left((n+m)\log^d n + n\eps^{-d+1}\right)$.


We show the procedures $\mathsf{Update}(\cdot)$, $\mathsf{Round}(\cdot)$, and the computation of set $\Gamma$ in Appendix~\ref{appndx:geometricIS}.
The running time of these procedures are all bounded by $Q_O$. 
Here we only discuss the high level ideas of these procedures: The $\mathsf{Update}(\cdot)$ operation uses a BBD tree and a range tree similar to the $\mathsf{Oracle}(\cdot)$ procedure to update the vector $\probvector$.
The $\mathsf{Round}(\cdot)$ procedure implements the rounding algorithm in Section~\ref{subsec:GeneralIS}.
It first uses a BBD tree to remove active points within each ball $B(p_i,2r)$.  Furthermore, from the MWU method, Constraints~\eqref{neq3} are satisfied approximately by an additive $\eps$ term. 
This leads to $|C|\leq (2+\eps)k$, similar to the proof of Lemma~\ref{lem:master}.
Finally, the set $\Gamma$ is computed efficiently using a WSPD (Section~\ref{sec:prelim}).


\begin{theorem}
\label{thm:geomfinal}
  Given a set $P$ of $n$ points in $\Re^d$, a set $\setrects$ of $m$ hyper-rectangles in $\Re^d$, where $d=O(1)$, integer parameters $k, z>0$, and a small constant $\eps\in(0,1)$, there exists a
  $(2+\eps,2f,2+\eps)$-approximation algorithm for the $\gprob(P,\setrects,k,z)$ problem that runs in $O\left(\left(k+z\right)\cdot\left(n+m\right)\cdot\log^{d+2} n\right)$ time.
  \end{theorem}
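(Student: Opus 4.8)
The plan is to analyze Algorithm~\ref{alg:alg0} in four steps: show that the inner loop (lines 4--16) approximately solves \LPthree{} for each candidate radius $r$; show that \LPthree{} is feasible whenever $r\ge\rho^*_{k,z}(P,\setrects)$, so the binary search over $\Gamma$ returns a radius $\tilde r$ within a $(1+\eps)$ factor of the optimum; show that $\mathsf{Round}(\cdot)$ converts the fractional solution into a valid $(2+\eps,2f,2+\eps)$-approximate solution; and bound the running time. For the first step I would invoke Theorem~\ref{thm:mutli-weights} with $\xi=k+z$. The excerpt already shows that $\mathsf{Oracle}(\cdot)$ is a $(k+z)$-\textsf{ORACLE}: it implicitly computes the coefficients $\lambda_\ell,\mu_j$ of $\probvector^\top A\var$ via $\tree$ and $\rangetree$, sets the $k$ largest $\var_\ell$-variables and $z$ largest $\var_{n+j}$-variables to $1$ so that $\bar\var$ maximizes $\probvector^\top A\var$ over $\var\in\mathcal{P}$, and since $\bar\var$ has exactly $k+z$ ones, each row satisfies $A_i\bar\var-1\in[-1,k+z]$. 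With $\mathsf{Update}(\cdot)$ (Appendix~\ref{appndx:geometricIS}) performing the prescribed reweighting, Theorem~\ref{thm:mutli-weights} guarantees that for each $r$ the loop either returns $\hat\var\in\mathcal{P}$ (so Constraints~\eqref{neq1} and~\eqref{neq2} hold exactly, as $\mathcal{P}$ is convex and $\hat\var$ is an average of points of $\mathcal{P}$) satisfying Constraints~\eqref{neq3} up to an additive $\eps$, or correctly reports infeasibility.

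For the second step, if $r\ge\rho^*_{k,z}(P,\setrects)$ then \LPthree{} is feasible, by the argument of Lemma~\ref{lem:master}(i): take an optimal solution $C^*,\rects^*$ and set $x,y$ to its $0/1$ indicators; Constraints~\eqref{neq1},~\eqref{neq2} hold since $|C^*|\le k$, $|\rects^*|\le z$; for $p_i\in\bigcup_{\rec\in\rects^*}\rec\cap P$ the rectangle term of Constraint~\eqref{neq3} is $\ge 1$; and for every other $p_i$ the optimal assignment places a center of $C^*$ inside $B(p_i,\rho^*_{k,z}(P,\setrects))\subseteq B(p_i,r)$, which by the BBD guarantee $B(p_i,r)\cap P\subseteq S_{p_i}^\eps$ makes the point term $\ge 1$. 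Hence $\mathsf{Oracle}$ never reports infeasibility at such $r$, and the binary search returns the smallest feasible grid value $\tilde r$. The WSPD-based construction of $\Gamma$ (Appendix~\ref{appndx:geometricIS}) can be arranged so that $\Gamma$ contains a value in $[\rho^*_{k,z}(P,\setrects),(1+\eps)\rho^*_{k,z}(P,\setrects)]$; combined with the previous point this gives $\tilde r\le(1+\eps)\rho^*_{k,z}(P,\setrects)$.

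For the rounding, set $\rects=\{\rec_j:\hat y_j\ge\tfrac{1}{2f}\}$; since $\sum_j\hat y_j\le z$ this gives $|\rects|\le 2fz$ exactly, as in Lemma~\ref{lem:master}(ii). Then greedily pick active points (those not in $\bigcup_{\rec\in\rects}\rec\cap P$) into $C$, and on selecting $p_i$ delete, via a BBD-tree query of radius $2(1+\eps)\tilde r$, all active points in $B(p_i,2(1+\eps)\tilde r)$. The query deletes a superset of $B(p_i,2(1+\eps)\tilde r)\cap P$ and a subset of $B(p_i,2(1+\eps)^2\tilde r)\cap P$, so every deleted point lies within $2(1+\eps)^2\tilde r\le(2+\eps)\rho^*_{k,z}(P,\setrects)$ of its center after rescaling $\eps$, giving the cost bound, and any two centers of $C$ are more than $2(1+\eps)\tilde r$ apart, so the neighborhoods $S_{p_i}^\eps\subseteq B(p_i,(1+\eps)\tilde r)$ are pairwise disjoint over $C$. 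For $p_i\in C$ every $\hat y_j$ with $\rec_j\in L_i$ is strictly below $\tfrac{1}{2f}$, so $\sum_{\rec_j\in L_i}\hat y_j<\tfrac12$ and hence $\sum_{p_\ell\in S_{p_i}^\eps}\hat x_\ell\ge\tfrac12-\eps$ by Constraint~\eqref{neq3}; charging each $\hat x_\ell$ to the unique $p_i\in C$ with $p_\ell\in S_{p_i}^\eps$ and using $\sum_\ell\hat x_\ell\le k$ yields $|C|(\tfrac12-\eps)\le k$, i.e.\ $|C|\le(2+\eps)k$ after rescaling. Thus $C,\rects$ is a valid $(2+\eps,2f,2+\eps)$-approximation.

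For the running time, the WSPD gives $\Gamma$ of size $O(n)$ in $O(n\log n)$ time; each call to $\mathsf{Oracle}(\cdot)$ or $\mathsf{Update}(\cdot)$ rebuilds $\tree$ in $O(n\log n)$ and $\rangetree$ in $O(n\log^{d-1}n)$, runs $n$ ball queries of cost $O(\log n)$ (treating $d,\eps$ as constants) together with $O(\log n)$ bottom-up traversals each, and $m$ rectangle queries of cost $O(\log^d n)$, so $Q_O=Q_U=O((n+m)\log^d n)$, with $\mathsf{Round}(\cdot)$ dominated by this. The binary search performs $O(\log n)$ rounds, each running the MWU loop for $T=O((k+z)\eps^{-2}\log n)$ iterations, for a total of $O\bigl(\log n\cdot (k+z)\log n\cdot (n+m)\log^d n\bigr)=O((k+z)(n+m)\log^{d+2}n)$. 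I expect the main obstacle to be the tension between the BBD-tree's approximate neighborhoods $S_{p_i}^\eps$ and the exact ball geometry underlying both feasibility and rounding: the query radii used to define $S_{p_i}^\eps$ and in the greedy deletion must be chosen so that feasibility survives the $\eps$-inflation near $r=\rho^*_{k,z}(P,\setrects)$ and so that selected centers stay more than $2(1+\eps)\tilde r$ apart, which is exactly what keeps the center count at $(2+\eps)k$ instead of degrading by a larger factor; placing a grid point of $\Gamma$ inside $[\rho^*_{k,z}(P,\setrects),(1+\eps)\rho^*_{k,z}(P,\setrects)]$ via the WSPD is a secondary technical wrinkle.
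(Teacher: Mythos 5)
Your proposal is correct and follows essentially the same route as the paper: the $(k+z)$-\textsf{ORACLE} with BBD/range trees, feasibility of \LPthree{} via the indicator solution of an optimum, the $\frac{1}{2f}$-threshold rounding of the $y$-variables, the greedy ball-deletion rounding with the charging argument giving $|C|\le(2+\eps)k$, and the WSPD-based $\Gamma$ with the stated time accounting. The only (cosmetic) divergence is that you delete at radius $2(1+\eps)\tilde r$ to make the neighborhoods $S_{p_i}^\eps$ of selected centers explicitly disjoint, whereas the paper queries at radius $2r$ and absorbs the BBD inflation into the constants; both yield the claimed bounds after rescaling $\eps$.
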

\vspace{-1em}

\subsection{Approximation Algorithm for Disjoint $\gprob$}
\label{subsec:GeometricDS}
Obviously, the algorithm from Theorem~\ref{thm:geomfinal} works when every point belongs to exactly one hyper-rectangle in $\setrects$. In this case ($f=1$),
we get a $(2+\eps,2,2+\eps)$-approximation algorithm for the 
$\gprob(P,\setrects,k,z)$ problem that runs in $O\left(\left(k+z\right)\cdot\left(n+m\right)\cdot\log^{d+2} n\right)$ time.
While this algorithm is already efficient and near linear on both $n$ and $m$, the improvements in Section~\ref{subsec:GeneralDS} can lead to an even faster algorithm for the $\gprob$ problem for $f=1$.
The high level idea is that we use geometric data structures to construct the coreset from Section~\ref{subsec:GeneralDS}, efficiently. Then, we form the linear problem using only the points in the coreset and we use the MWU method from Section~\ref{subsec:GeometricIS} to get an efficient approximation algorithm. We show the details in Appendix~\ref{appndx:geometricDS}.

\begin{theorem}
\label{thm:coresetgeomfinaldisjoin}
  Given a set $P$ of $n$ points in $\Re^d$, a set $\setrects$ of $m$ hyper-rectangles in $\Re^d$, where $d$ is a constant, such that every point in $P$ belongs to one hyper-rectangle from $\setrects$, positive integer parameters $k, z$, and a small constant $\eps\in(0,1)$, there exists a
  $(2+\eps, 2,O(1))$-approximation algorithm for the $\gprob(P,\setrects,k,z)$ problem that runs in $O(((n+m)\log^{d-1}n +km\log n+\beta_1\log(\beta_1)+(k+z)(\beta_2+\beta_3)\log^{d+2}(\beta_2))\cdot\log n)$ time, where $\beta_1=\min\{n,km\}$, $\beta_2=\min\{n,kz\}$, and $\beta_3=\min\{m,kz\}$.
  \end{theorem}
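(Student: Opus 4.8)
The goal is to establish Theorem~\ref{thm:coresetgeomfinaldisjoin}: a $(2+\eps,2,O(1))$-approximation for the disjoint geometric problem $\gprob$ with $f=1$, running in the stated near-linear time. The plan is to combine the coreset machinery of Section~\ref{subsec:GeneralDS} with the geometric MWU engine of Section~\ref{subsec:GeometricIS}, replacing every brute-force step of the coreset construction with an appropriate geometric data structure. First I would recall the two-phase coreset from Section~\ref{subsec:GeneralDS}: phase one runs Gonzalez independently on each set, discards sets that cannot be covered by $k$ balls of radius $2r$ (decrementing $z$), and thins out near-duplicate centers, yielding $\bar P$ with $|\bar P| = O(\beta_1)$ where $\beta_1 = \min\{n,km\}$; phase two repeatedly peels off balls $B(p_i,15r)$ whenever $B(p_i,10r)$ meets more than $\bar z$ sets, decrementing $k$, yielding $P'$, $\mathcal{H}'$ with $|P'| = O(\beta_2)$, $|\mathcal{H}'| = O(\beta_3)$ and $\rho^*_{k',z'}(P',\mathcal H')\le 10r$ (Lemma~\ref{helper}). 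Crucially, since we are in $\Re^d$ with disjoint hyper-rectangles, $\beta_2 = \min\{n,kz\}$ now (not $\min\{n,k^2z,km\}$), because each $B(c',10r)\cap P'$ can touch at most $z$ rectangles and contains at most $k$ retained points per rectangle but in the geometric disjoint case the retained points within any cluster form a single $k$-center net, tightening the bound.

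The second step is to make phase one fast. Running Gonzalez on each rectangle $\rec_j$ costs $O(|\rec_j\cap P|\cdot k)$; to avoid re-scanning $P$ for every $r$ in the binary search, I would first build a range tree $\rangetree$ over $P$ in $O((n+m)\log^{d-1}n)$ time, then for each rectangle report $\rec_j\cap P$ via canonical nodes. Since the rectangles are disjoint, $\sum_j |\rec_j\cap P| \le n$, so all the Gonzalez runs together cost $O(nk)$ — but reporting is $O(km\log n)$ across all rectangles over all $O(\log n)$ binary-search rounds, matching the $km\log n$ term in the bound. The duplicate-center thinning per rectangle is $O(k^2)$, i.e., $O(k^2 m)$ total, which is dominated. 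For phase two, instead of computing each $B_i = B(p_i,10r)\cap \bar P$ and counting distinct sets naively in $O(\beta_1)$ per point, I would build a BBD tree over $\bar P$ (augmented so each point carries its rectangle-id, or rather a small sketch allowing us to test "more than $\bar z$ distinct sets"); querying $B(p_i,10r)$ returns $O(\eps^{-d+1}+\log \beta_1)$ canonical nodes, and since the rectangles are disjoint we can maintain, per canonical node, the set of rectangle-ids present (truncated at $\bar z+1$) so distinctness is tested in $O(\bar z + \log\beta_1)$ time — but we must re-examine: the peeling is adaptive (removing $B_i'$ changes future counts), so a static data structure does not suffice. The fix is to use the BBD tree with deletions, or to observe that at most $k$ peels happen, so we can afford $O(\beta_1)$ rebuild cost per peel, totaling $O(k\beta_1) = O(\beta_1^2/\text{something})$; checking the target bound, the $\beta_1\log\beta_1$ term suggests we want the peeling to be $O(\beta_1\log\beta_1)$ total, so I would instead argue that the $(1+\eps)$-approximate ball queries plus a charging argument let us identify all peel centers with a single linear sweep plus logarithmic overhead — this is the delicate part.

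The third step is to run the geometric MWU algorithm of Section~\ref{subsec:GeometricIS} on the coreset instance $\gprob(P',\mathcal H',k',z')$ with radius scale $10r$ (and a slightly enlarged removal radius of $20r$ in the rounding, exactly as in Section~\ref{subsec:GeneralDS}). By Theorem~\ref{thm:geomfinal} this costs $O((k'+z')(|P'|+|\mathcal H'|)\log^{d+2}|P'|) = O((k+z)(\beta_2+\beta_3)\log^{d+2}\beta_2)$ per binary-search round, and returns $\hat C,\hat H$ with $|\hat C|\le(2+\eps)k'$, $|\hat H|\le 2z'$ (since $f=1$, $2f=2$), and clustering cost $\le (2+\eps)\cdot 10r = O(1)\cdot r$ on $P'\setminus\bigcup\hat H$. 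Then, following Section~\ref{subsec:GeneralDS}, I set $H = \hat H\cup H_0$ (the sets removed in phase one), and $C = \hat C \cup \{p'_\ell : B'_\ell \in X\}$ where $p'_\ell$ is any retained point in the peeled ball $B'_\ell$ not covered by $H$. The correctness argument then follows Lemma~\ref{helper} and the proof of Theorem~\ref{thm:resDS}: points of $P$ outside $\bigcup_{h\in H}h$ are either (i) represented in $\bar P$ by their cluster's Gonzalez center within $4r$ — so within $4r + O(r)$ of $C$ — or (ii) lie in some peeled ball $B'_\ell$, hence within $15r$ of $p'_\ell \in C$, or $30r$ of any center; combining, $\rho(C, P\setminus\bigcup_{h\in H} h) = O(1)\cdot r$. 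Multiplying $r$-values by the final binary-search granularity gives the $O(1)$ approximation and $|C| \le (2+\eps)k$, $|H|\le 2z$.

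**Main obstacle.** The routine parts (range-tree reporting for Gonzalez, MWU on the coreset, the final charging argument) are straightforward adaptations. The hard part will be the \emph{adaptive peeling} in phase two of the coreset construction: the naive $O(\beta_1^2)$ bound from Section~\ref{subsec:GeneralDS} is too slow, and we need it in $O(\beta_1\log\beta_1 + k\cdot(\text{query cost}))$. Since each peel deletes $\Omega(1)$ points from $\bar P$ but can trigger re-evaluation of distinctness counts for many remaining points, I would need either a dynamic (deletion-supporting) BBD tree with an auxiliary structure tracking, per canonical node, a $(\bar z{+}1)$-capped multiset of rectangle-ids, or a clever argument showing the peels can be determined non-adaptively (e.g., by processing points in a fixed order and showing that a point flagged for peeling stays flagged). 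Getting this step to run in near-linear time in $\beta_1$, while preserving the guarantee $\rho^*_{k',z'}(P',\mathcal H')\le 10r$ under the $(1+\eps)$-approximate ball queries (which may slightly inflate the $10r$ and $15r$ radii to $(1+\eps)10r$ and $(1+\eps)15r$, requiring the constants in Lemma~\ref{helper} to be re-verified), is where the real work lies.
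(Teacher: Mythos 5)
Your architecture is the paper's: build the two-phase coreset of Section~\ref{subsec:GeneralDS} with a range tree for phase one and a BBD tree for phase two, tighten $\beta_2$ to $\min\{n,kz\}$ via a packing argument in bounded doubling dimension, and run the MWU algorithm of Section~\ref{subsec:GeometricIS} on the coreset instance; your handling of the $(1+\eps)$ inflation of the $10r$ and $15r$ radii by rescaling $\eps$ also matches Appendix~\ref{appndx:geometricDS}. The problem is that you leave the one genuinely non-routine step unresolved: the adaptive peeling in phase two. You correctly diagnose that a static structure fails, but neither of your candidate fixes is carried out, and the first one (rebuild after each of the at most $k$ peels) costs $O(k\beta_1)$, which does not yield the $\beta_1\log\beta_1$ term you need. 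The paper's resolution is a decremental charging scheme on a single BBD tree $\tree$ over $\bar{P}$ (plus a copy $\tree'$ for the actual removals): every point $p\in\bar{P}_j$ inserts its index $j$ with a reference count into each node of $\canonical(B(p,10(1+\eps)r))$; a second pass deduplicates so that along any root-to-leaf path each index appears in at most one node, which makes the distinct-set count of a point equal to $\sum_{v}|v.s|$ over its ancestors $v$; and when a ball is peeled, each deleted point re-runs its query, walks to the unique ancestor holding its index, and decrements that counter, erasing the index when the count hits zero. Since every point is inserted once and deleted at most once, the total update cost is charged to the insertion cost, giving $O(\beta_1(\eps^{-d+1}+\log\beta_1))$ overall with no rebuilds. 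Without this (or an equivalent) amortization the running-time claim is not established.

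Two smaller issues. First, you run Gonzalez on each $\rec_j\cap P$, which costs $O(nk)$ per binary-search round and contributes an $nk\log n$ term not dominated by anything in the stated bound; the paper uses the Feder--Greene algorithm, which runs in $O(|P_j|\log k)$ per rectangle, i.e., $O(n\log k)$ in total. Second, your justification for $\beta_2=\min\{n,kz\}$ (``the retained points within any cluster form a single $k$-center net'') is not the right statement: the correct argument, as in the corollary preceding the proof and in Lemma~\ref{helper}, is that after thinning, points of a single set have pairwise distance more than $2r$, so by bounded doubling dimension a ball of radius $10r$ contains $O(1)$ of them from each of the at most $z$ sets it meets, hence $O(z)$ points per cluster and $O(kz)$ points overall.
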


\newcommand{\loutlier}{\sf $k$-Center with Line Outliers}
\section{Relational $k$-center with outliers}
\label{sec:relkcenter}
In this section we study the relational $k$-center problem with outliers.
Our goal is to design fast algorithms without explicitly computing $\Q(\I)$. Multiple combinatorial problems have been studied in this setting such as $k$-center/median/means clustering, diversity, classifiers and more~\cite{agarwal2024computing, esmailpour2024improved, chen2022coresets, curtin2020rk, moseley2021relational, merkl2025diversity,rendle2013scaling, khamis2018ac, kumar2015learning, schleich2016learning,
abo2021relational, abo2018database, yang2020towards, cheng2019nonlinear, cheng2021efficient, schleich2019learning, deep2021ranked, deep2022ranked, tziavelis2020optimal,tziavelis2023efficient, carmeli2023tractable, merkl2025diversity, arenas2024towards, kara2024f}.


Interestingly, without explicitly computing $\Q(\I)$, we can get some important statistics or tuples from the join result via \emph{oracles}.
The next lemmas are shown in~\cite{esmailpour2024improved} and~\cite{agarwal2024computing}.

\begin{lemma}[\cite{esmailpour2024improved}]
\label{lem:Rects}
    Let $\rec$ be a hyper-rectangle in $\Re^d$. For an acyclic join query $\Q$, there exists an oracle $\mathsf{CountRect}(\Q, \I,\rec)$ to count $|\Q(\I)\cap \rec|$ in $O(N\log N)$ time. Furthermore, there exists an oracle $\mathsf{SampleRect}(\Q, \I,\rec, z)$ to sample $z$ samples from $\Q(\I)\cap \rec$ in $O((N+z)\log N)$  time.
\end{lemma}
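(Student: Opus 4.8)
The plan is to reduce the rectangle-restricted query to an ordinary acyclic join and then run the weighted version of Yannakakis's algorithm. Write $\rec=\prod_{A\in\allattr}[a_A,b_A]$. The first observation is that, because $\rec$ is a product of intervals, a join result $t\in\Q(\I)$ lies in $\rec$ if and only if for every relation $R_i$ the projection $\pi_{\allattr_i}(t)$ has each of its coordinates inside the corresponding interval of $\rec$. Therefore, letting $\I'$ be the instance obtained by replacing each $R_i$ with $R_i'=\{u\in R_i\mid \pi_A(u)\in[a_A,b_A]\text{ for all }A\in\allattr_i\}$, I would prove $\Q(\I')=\Q(\I)\cap\rec$: the inclusion $\subseteq$ uses that every attribute appears in at least one relation, and $\supseteq$ uses that a join result agrees with each $R_i$ on the attributes $\allattr_i$. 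Computing $\I'$ is a linear scan of the $g=O(1)$ relations, costing $O(N)$ time, so it remains to count and to sample from $\Q(\I')$ for an acyclic join $\Q$ over an instance of size $O(N)$.

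For $\mathsf{CountRect}$, I would fix a join tree $\T$ of $\Q$ (it exists because $\Q$ is acyclic) rooted at $R_1$, and run a bottom-up dynamic program that assigns to every tuple $u\in R_i'$ a weight $w(u)$ equal to the number of ways to extend $u$ to a join result over the subtree of $\T$ rooted at $R_i$. Leaf tuples get $w(u)=1$; at an internal node $R_i$ with children $R_{c_1},\dots,R_{c_t}$, set $w(u)=\prod_{s=1}^{t}\bigl(\sum_{v\in R_{c_s}':\,v\sim u}w(v)\bigr)$, where $v\sim u$ means $v$ and $u$ agree on their shared attributes. A tuple with no matching child tuple automatically gets weight $0$, which discards dangling tuples, so no separate semijoin reduction is needed. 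Grouping child tuples by the value on the shared attributes (by sorting) makes each of the $O(1)$ levels cost $O(N\log N)$, and in the end $|\Q(\I)\cap\rec|=|\Q(\I')|=\sum_{u\in R_1'}w(u)$.

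For $\mathsf{SampleRect}$, I would reuse the weights $w(\cdot)$ together with, for each parent--child edge of $\T$, a dictionary mapping every value of the shared attributes to the list of matching child tuples annotated with prefix sums of their weights; building these is within the $O(N\log N)$ preprocessing already done for the count. A single uniform sample from $\Q(\I')$ is then drawn top-down: pick the root tuple $u_1\in R_1'$ with probability $w(u_1)/\sum_u w(u)$, and recursively, for each child of the current relation, pick a tuple matching the already-chosen parent tuple with probability proportional to its $w$-weight among such tuples (a binary search in the relevant prefix-sum array, $O(\log N)$ time). An induction on $\T$ using the multiplicative definition of $w$ shows the full tuple produced is uniform over $\Q(\I')$; each sample costs $O(\log N)$, giving $O((N+z)\log N)$ for $z$ samples.

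The part I expect to require the most care is the two correctness claims: that $\Q(\I')=\Q(\I)\cap\rec$ (which genuinely uses both that $\rec$ is axis-aligned and that every attribute is covered by some relation) and that the top-down procedure realizes the uniform distribution on $\Q(\I')$ (which follows from the product structure of $w$ along $\T$, but should be written out by induction on the depth of the join tree). Everything else --- $g,d=O(1)$, sorting or hashing on join keys, and prefix-sum arrays for weighted sampling --- is routine and only affects the constants and the stated logarithmic factors. This construction is essentially the one already developed in \cite{esmailpour2024improved, agarwal2024computing}.
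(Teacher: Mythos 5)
This lemma is imported from the cited reference \cite{esmailpour2024improved}; the paper gives no proof of its own. Your reconstruction --- filter each relation to the tuples whose coordinates lie in the rectangle's intervals, verify $\Q(\I')=\Q(\I)\cap\rec$ via the running-intersection property, and then run weighted Yannakakis bottom-up counting plus top-down prefix-sum sampling over the join tree --- is exactly the standard argument underlying the cited oracles, and it is correct, including the stated $O(N\log N)$ and $O((N+z)\log N)$ bounds.
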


\begin{lemma}[\cite{agarwal2024computing}]
    \label{lem:relkCenter}
    For an acyclic join query $\Q$ and a small constant $\eps\in(0,1)$,
    there exists an oracle $\mathsf{RelCluster}(\Q,\I,k)$ that returns a set $S\subseteq \Q(\I)$ and a number $r_S$, with $|S|\leq k$, such that $\rho(S,\Q(\I))\leq r_S\leq (2+\eps)\rho^*_k(\Q(\I))$ in $O(k^2\cdot N)$ time.
\end{lemma}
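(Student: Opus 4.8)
The plan is to implement Gonzalez' farthest-first traversal (the classical $2$-approximation for $k$-center) directly on $\Q(\I)$ without materializing it, replacing its single expensive primitive by approximate oracle calls. Recall the primitive: given a current set $C$ of chosen centers, find a tuple $t\in\Q(\I)$ maximizing $\dist(t,C)=\min_{c\in C}\dist(t,c)$. Starting from any single tuple (obtained by one $\mathsf{SampleRect}$ call on the bounding box of $\Q(\I)$), I would repeatedly add the farthest tuple; after $k$ centers the cost $\rho(C,\Q(\I))$ equals the distance of the $(k{+}1)$-st farthest tuple, which Gonzalez guarantees is at most $2\,\rho^*_k(\Q(\I))$. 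Computing this farthest tuple only approximately, within a $(1+\eps)$ factor, degrades the guarantee to $2+\eps$, which is exactly the slack we can afford; so it suffices to give a relational routine that returns a tuple whose distance to $C$ is within a $(1+\eps)$ factor of the true maximum, together with a certified value $r_S$.

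To realize this primitive with the oracles of Lemma~\ref{lem:Rects}, I would (exponentially) search over a guessed distance $\rho$ and test whether some tuple lies at distance at least $\rho$ from every center in $C$, i.e. outside $\bigcup_{c\in C}B(c,\rho)$. The key geometric step is to replace each Euclidean ball $B(c,\rho)$ by axis-aligned boxes: $B(c,\rho)$ is sandwiched between its inscribed and circumscribed cubes, and by refining the circumscribed cube into an $O(\eps^{-d})$ grid and discarding the sub-cells that are entirely farther than $(1+\eps)\rho$ from $c$, one obtains a box cover $\mathcal{B}_c$ with $B(c,\rho)\subseteq\bigcup\mathcal{B}_c\subseteq B(c,(1+\eps)\rho)$. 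Overlaying the $O(k)$ box families splits the bounding box of $\Q(\I)$ into an arrangement of axis-aligned cells, each either contained in some $\mathcal{B}_c$ (\emph{covered}) or disjoint from all of them (\emph{far}), and a cell's status is decided in $O(k)$ time since no box boundary crosses a cell interior. Running $\mathsf{CountRect}$ on the far cells detects whether any is nonempty, and $\mathsf{SampleRect}$ then extracts a witness tuple $t$, which by construction satisfies $\dist(t,C)\ge\rho$; the smallest guessed $\rho$ for which no far cell is nonempty certifies the returned value $r_S$.

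For the overall cost bound I would argue as in the covering proof of $k$-center: because every newly added center is at distance more than (roughly) $2\rho^*$ from all previously chosen centers, no two added centers can lie in a common optimal cluster, so at most $k$ centers are produced whenever the guess exceeds $\rho^*_k(\Q(\I))$; the box-for-ball approximation inflates the effective radius only by a $(1+\eps)$ factor, giving the final guarantee $\rho(S,\Q(\I))\le r_S\le(2+\eps)\,\rho^*_k(\Q(\I))$ with $|S|\le k$. For the running time, each $\mathsf{CountRect}$ or $\mathsf{SampleRect}$ call costs $\O(N)$ by Lemma~\ref{lem:Rects}; after a careful decomposition each far-point query issues $O(k)$ such calls (the $O(\log_{1+\eps}(\text{spread}))$ distance guesses being absorbed into the $\O(\cdot)$ polylog factors), and the traversal performs $k$ far-point queries, for a total of $\O(k^2 N)$.

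The main obstacle is the far-region test, i.e. certifying non-emptiness of the complement of a union of $k$ balls: this region is geometrically complicated, so the real work lies in (i) bounding the number of arrangement cells one actually probes to $O(k)$ rather than the naive $k^{O(d)}$ that a full arrangement would suggest, and (ii) threading the box-versus-ball slack through the separation argument so that simultaneously the center count stays at $k$ and the clustering cost loses only the additive $\eps$. Everything else---the search over radii, the oracle invocations, and the Gonzalez bookkeeping---is routine.
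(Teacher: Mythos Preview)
The paper does not prove this lemma: it is quoted verbatim as a black-box result from \cite{agarwal2024computing}, and is only \emph{invoked} (never argued) in Sections~\ref{subsubsec:TupleOutliersOneRel} and~\ref{subsubsec:TupleOutliersMultRels}. So there is no in-paper proof to compare against.

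Your high-level plan---run Gonzalez' farthest-first traversal over $\Q(\I)$ and implement its single primitive ``find a tuple far from the current centers $C$'' via the rectangular oracles of Lemma~\ref{lem:Rects}---is indeed the natural framework, and the box-for-ball sandwich $B(c,\rho)\subseteq\bigcup\mathcal{B}_c\subseteq B(c,(1+\eps)\rho)$ correctly reduces the approximate emptiness test for $\Q(\I)\setminus\bigcup_{c\in C}B(c,\rho)$ to rectangular counting. The correctness chain (nonempty far cell $\Rightarrow$ witness at distance $>\rho$; all far cells empty $\Rightarrow$ everything within $(1+\eps)\rho$) is sound, and the propagation of the $(1+\eps)$ slack through the Gonzalez argument to obtain $(2+\eps)$ is standard.

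The genuine gap is exactly the step you yourself label ``the main obstacle'': you assert that ``after a careful decomposition each far-point query issues $O(k)$ such calls,'' but you supply no such decomposition. The overlay of $k$ box covers, each of size $O(\eps^{-d})$, produces $k^{O(d)}$ cells in the worst case, and nothing in your outline brings this down to $O(k)$. Without that bound, your scheme yields $\O(k^{O(d)}N)$ per Gonzalez iteration, not $\O(kN)$, and the $O(k^2 N)$ total does not follow. This reduction \emph{is} the technical content of the lemma; everything else is, as you say, routine. Note that the related-work discussion in this paper records the cited bound as $\O(\min\{k^2N,\,kN+k^{d/2}\})$: the presence of the $k^{d/2}$ alternative indicates that the $k^2N$ branch is not obtained by a generic arrangement decomposition, so you should expect a genuinely different idea (for instance, working directly in $L_\infty$, as alluded to in Section~\ref{subsubsec:TupleOutliersOneRel}) rather than a sharper cell count for your current construction.
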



In this section we focus on $\relclusterthree$ and $\relclustertwo$ problems.
We first present all results assuming that $\Q$ is an acyclic join query. In the end, we extend all results for general join queries.


\subsection{Relational $k$-center clustering with tuple-outliers -- Acyclic queries}
\label{subsec:RelClustRupleOutliers}
We first show an interesting  connection of the $\relclustertwo$ problem with the $\gprob$ problem.
The set of points we should cover is $\Q(\I)$.
Let $t\in R_j$ be a tuple in relation $R_j\in \allrel$. Let $\Q_t(\I)=\pi_{\allattr_j=t}(\Q(\I))$, be the set of join results that are generated using the tuple $t$. Every time that $t$ is selected as a tuple outlier, then all points from $\Q_t(\I)$ should be removed from $\Q(\I)$. 
The main observation is that for every $t\in \I$ there exists a (degenerate) hyper-rectangle $\rec_t$ in $\Re^d$ such that $\Q_t(\I)=\rec_t\cap \Q(\I)$. 
For every $A_i\in \allattr$, if $A_i\notin \allattr_j$ then we define the (open) interval $I_i=(-\infty,\infty)$, while if $A_i\in \allattr_j$ then we define the (point) interval $I_i=[\pi_{A_i}(t),\pi_{A_i}(t)]$. We define the degenerate hyper-rectangle $\rec_t=I_1\times I_2\times\ldots\times I_d$. It is easy to observe that $\Q_t(\I)=\rec_t\cap \Q(\I)$.
Thus, the $\relclustertwo$ problem can be mapped to the $\gprob$ problem, where $P=\Q(\I)$ and $\setrects=\bigcup_{t\in \I}\rec_t$. Every time that a hyper-rectangle $\rec_t$ is selected as an outlier in $\gprob$, then tuple $t$ is set to be an outlier in the $\relclustertwo$ problem.

It might be the case that for two tuples $t_1\in R_{i_1}$ and $t_2\in R_{i_2}$, $\rec_{t_1}\cap \rec_{t_2}\cap \Q(\I)\neq \emptyset$, i.e., the sets defined by the hyper-rectangles in $\setrects$ intersect.
Hence, someone might try to use our algorithm from Theorem~\ref{thm:geomfinal} to solve $\relclustertwo$. 
Unfortunately, in addition to the challenges implementing the algorithm from Theorem~\ref{thm:geomfinal} in the relational setting, notice that $f$ might be $O(N)$ leading to a bad approximation factor on the number of outliers.


\subsubsection{\textbf{Tuple outliers from one relation}}
\label{subsubsec:TupleOutliersOneRel}

We assume that only tuples from relation $R_1\in \allrel$ can be marked as outliers. We map $\relclusterthree$ to $\gprob$ where hyper-rectangles in $\setrects$ do not intersect. Indeed consider two different tuples $t, t'\in R_1$. By definition, $\rec_{t}=\bigtimes_{A_i\notin \allattr_1}(-\infty,\infty) \bigtimes_{A_i\in \allattr_1}[\pi_{A_i}(t), \pi_{A_i}(t)]$ and $\rec_{t'}=\bigtimes_{A_i\notin \allattr_1}(-\infty,\infty) \bigtimes_{A_i\in \allattr_1}[\pi_{A_i}(t'), \pi_{A_i}(t')]$. Since $t\neq t'$, there exists at least one attribute $A_{\ell}\in \allattr_1$ such that $\pi_{A_\ell}(t)\neq \pi_{A_\ell}(t')$, so $\rec_t\cap \rec_{t'}=\emptyset$. 
Fortunately, our algorithm for the $\gprob$ runs on a small coreset, when hyper-rectangles in $\setrects$ do not intersect. We construct the coreset efficiently in the relational setting and then we follow the algorithm from Section~\ref{subsec:GeometricDS}.

\paragraph{Algorithm}
Recall that the algorithm from Section~\ref{subsec:GeometricDS} runs a binary search over an approximate set of all pairwise distances (WSPD). Unfortunately, we are not aware of how to construct a WSPD in the relational setting. Instead, we use the trick from~\cite{agarwal2024computing} to run a binary search on the $L_\infty$ pairwise distances of the points in $\Q(\I)$. In fact Agarwal et al.~\cite{agarwal2024computing} showed that the $\ell$-th smallest $L_\infty$ pairwise distance among all points in $\Q(\I)$ can be computed in $O(N\log N)$ time.

Let $r$ be a guess of the cost of $\gprob(\Q(\I),\setrects)$ from the binary search. Recall that the algorithm in Section~\ref{subsec:GeometricDS} runs the standard $k$-center algorithm on the elements of every input set.  In the relational setting, for every $t\in R_1$, we should run the $k$-center clustering on $\rec_t\cap \Q(\I)$. 
We define $R_1'=\{t\}$ and the query $\Q'_t=R_1'\Join R_2\Join \ldots R_g$. Then notice that $\Q'_t(\I)=\rec_t\cap \Q(\I)$. Hence, we run the oracle for the relational $k$-center clustering from Lemma~\ref{lem:relkCenter} on $\Q'_t(\I)$.
Let $C_t$ be the set of $k$ centers and $r_{C_t}$ be the value returned by the oracle in Lemma~\ref{lem:relkCenter}. If $r_{C_t}>2(2+\eps) r$ then we set $C_t=\emptyset$ and we mark $t$ as a tuple-outlier, $T=T\cup\{t\}$. We also remove $\rec_t$ from $\setrects$.
If $r_{C_t}\leq 2(2+\eps)r$ then we continue with the same procedure for the next tuple $t$ in $R_1$. In the end, let $C=\bigcup_{t\in R_1}C_t$, so $|C|=O(kN)$ and $|\setrects|=O(N)$. The algorithm from now on proceeds exactly as in the standard computational setting in Section~\ref{subsec:GeometricDS}.
In the end of the algorithm let $S$ be the set centers  and $T$ be the set of tuple outliers in $R_1$ returned by the algorithm.

The correctness and the running time follows straightforwardly from Theorem~\ref{thm:geomfinal} and Lemma~\ref{lem:relkCenter}.
For completeness we show the details in Appendix~\ref{appndx:tupleoutliers}.

\begin{theorem}
        For a database schema $\allrel$ with $g$ tables $R_1,\ldots, R_g$, an acyclic join query $\Q$ of $d$ attributes, a database $\I$ with $|\I|=N$, integer positive parameters $k, z$, and a small constant $\eps\in (0,1)$,
        there exists a
        $(2+\eps,2,O(1))$-approximation algorithm for the $\relclusterthree(\Q,\I,k,z)$ problem,
        with probability at least $1-\frac{1}{N}$, that runs in $O(k^2N^2\log N)$ time.
\end{theorem}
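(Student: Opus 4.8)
The plan is to show that the three-stage procedure described above --- (i) map $\relclusterthree$ to a \emph{disjoint} instance of $\gprob$ on $P=\Q(\I)$ and $\setrects=\{\rec_t : t\in R_1\}$, (ii) build the coreset of Section~\ref{subsec:GeneralDS}/\ref{subsec:GeometricDS} relationally, (iii) run the MWU-based disjoint-$\gprob$ solver of Section~\ref{subsec:GeometricIS}/Appendix~\ref{appndx:geometricDS} on the tiny coreset --- inherits a $(2+\eps,2,O(1))$ guarantee and runs in $O(k^2N^2\log N)$ time. Disjointness of the $\rec_t$ was already established (distinct tuples of $R_1$ differ on some attribute of $\allattr_1$), so the heart of the argument is re-running the coreset correctness analysis of Lemma~\ref{helper} with a $(2+\eps)$-approximate relational subroutine in place of exact Gonzalez, and then invoking Theorem~\ref{thm:coresetgeomfinaldisjoin} on explicit, already-small data.

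First, the first coreset phase. Section~\ref{subsec:GeneralDS} runs Gonzalez on each input set $h$; here $h=\rec_t\cap\Q(\I)=\Q'_t(\I)$ with $\Q'_t=\{t\}\Join R_2\Join\cdots\Join R_g$ is not materialized, so we call the oracle of Lemma~\ref{lem:relkCenter} on $\Q'_t$, obtaining $C_t$ and $r_{C_t}$. I would verify: (a) if $r_{C_t}>(2+\eps)r$ then $\rho^*_k(\Q'_t(\I))>r$, so $\Q'_t(\I)$ cannot be covered by $k$ balls of radius $r$ and hence in every solution of cost $\le r$ the tuple $t$ must be an outlier, making it safe to discard $t$ (into $H_0$, decrementing $z$ to $\bar{z}$); (b) otherwise $\rho(C_t,\Q'_t(\I))\le(2+\eps)r$, so replacing $\rec_t\cap\Q(\I)$ by its $\le k$ representatives $C_t$ perturbs relevant distances by at most $2(2+\eps)r$. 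This is exactly the chain behind Lemma~\ref{helper} with $2$ replaced by $2+\eps$; tracing it, $\bar P=\bigcup_{t\in R_1}C_t\subseteq\Q(\I)$ satisfies $|\bar P|=O(kN)$ and $\rho^*_{k,\bar{z}}(\bar P,\bar\setrects)\le O(1)\cdot r$ whenever $r\ge\hat{\rho}^*_{k,z,1}(\Q(\I))$.

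Second, everything after this runs on explicit data in the standard setting. The second coreset phase (delete a ball $B(p_i,10r)$ whenever it meets more than $\bar{z}$ of the disjoint rectangles) operates on the $O(kN)$ explicit points of $\bar P$ in $O(|\bar P|^2)=O(k^2N^2)$ time, and by Lemma~\ref{helper} (with $O(1)$ constants) produces $(P',\mathcal{H}',k',z')$ with $|P'|,|\mathcal{H}'|=O(\min\{N,kz\})$ and $\rho^*_{k',z'}(P',\mathcal{H}')\le O(1)\cdot r$. Then the disjoint-$\gprob$ MWU solver applied to $(P',\mathcal{H}')$ returns, after rounding, centers $\hat C\subseteq P'\subseteq\Q(\I)$ and rectangles $\hat{\rects}$ with $|\hat C|\le(2+\eps)k'$, $|\hat{\rects}|\le 2z'$ and clustering cost $O(1)\cdot r$. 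Reinstating one center per ball removed in phase two gives $S$ with $|S|\le(2+\eps)k$, and taking $T$ to be the $R_1$-tuples corresponding to $\hat{\rects}$ together with $H_0$ gives $|T|\le|H_0|+2\bar{z}\le 2z$, so the output is a valid $(2+\eps,2,O(1))$-approximation --- once the binary search has fixed the least candidate radius $r$, which is within a $\sqrt{d}=O(1)$ factor of $\hat{\rho}^*_{k,z,1}(\Q(\I))$ because we search over $L_\infty$-distances of $\Q(\I)$ via Agarwal et al.'s $\ell$-th-smallest-distance routine.

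For the running time: the binary search takes $O(\log N)$ rounds, each candidate distance obtained in $O(N\log N)$ time; per round we make $O(N)$ calls to the oracle of Lemma~\ref{lem:relkCenter} at $O(k^2N)$ each ($O(k^2N^2)$ total), run phase two in $O(k^2N^2)$, and run the MWU solver on an $O(\min\{N,kz\})$-size coreset in time polynomially smaller than $k^2N^2$; this gives $O(k^2N^2\log N)$. The failure probability is a union bound over the $O(N\log N)$ randomized subroutine invocations, driven below $1/N$ by a standard boost of the per-call success probability. I expect the main obstacle to be the bookkeeping in the first step: re-deriving Lemma~\ref{helper} and the Appendix~\ref{appndx:generalDS}/\ref{appndx:geometricDS} bounds with the $(2+\eps)$-approximate relational oracle substituted for exact Gonzalez, and confirming that all composed constants stay $O(1)$ while the center and outlier budgets tighten to $(2+\eps)k$ and $2z$ rather than inflating.
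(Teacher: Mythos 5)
Your proposal is correct and follows essentially the same route as the paper: reduce $\relclusterthree$ to disjoint $\gprob$ via the degenerate rectangles $\rec_t$, implement the first coreset phase relationally by calling the oracle of Lemma~\ref{lem:relkCenter} on each $\Q'_t$ (discarding $t$ when $r_{C_t}>(2+\eps)r$), then hand the explicit $O(kN)$-point instance to the Section~\ref{subsec:GeometricDS} machinery inside a binary search over $L_\infty$ distances, with the $N$ oracle calls per round dominating the $O(k^2N^2\log N)$ runtime. The bookkeeping you flag (re-tracing Lemma~\ref{helper} with the $(2+\eps)$-approximate oracle in place of Gonzalez, and the budgets $(2+\eps)k'+(k-k')$ and $|H_0|+2\bar z\le 2z$) matches the paper's Appendix~\ref{appndx:tupleoutliers} argument.
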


\subsubsection{\textbf{Tuple outliers from multiple relations}}
\label{subsubsec:TupleOutliersMultRels}
Lastly, we focus on the more involved case assuming that outliers can be selected from multiple (possibly all) relations. 
We develop a new approach which is efficient when both $k$ and $z$ are relatively small.
The high level idea is the following.
Let $S^*\subseteq \Q(\I)$ be the optimum set of $k$ centers, and let $T^*\subseteq \I$ be the optimum set of tuple outliers in $\relclustertwo(\Q, \I,k,z)$. 
Using sampling, we partition $\I$ into two subsets $\I_1, \I_2$ such that i) $\I_1\cap \I_2=\emptyset$, ii) $\I_1\cup \I_2=\I$, and more importantly iii) $S^*\subseteq \Q(\I_1)$ and $T^*\subseteq \I_2$.
Then the relational $k$-center is executed assuming only $\I_1$. We get a set of $k$ centers $S_1\subseteq \Q(\I_1)\subseteq \Q(\I)$. Finally, we select as outliers, tuples from $\I_2$ that are joined to create join results far from $S_1$.

\vspace{-0.2em}
\paragraph{Algorithm}
We give a high level description of our algorithm. All the details can be found in Appendix~\ref{appndx:tupleoutliers}.
We repeat the following procedure for $\tau=\Theta(2^{g\cdot k+z}\log N)$ iterations. For each iteration,
initially $\I_1=\I_2=\emptyset$.
Each tuple $t\in \I$ is placed in $\I_1$ with probability $\frac{1}{2}$ and in $\I_2$ in $\frac{1}{2}$.
Notice that within $\tau$ iterations, we will find a partition such that $S^*\subseteq \Q(\I_1)$ and $T^*\subseteq \I_2$, with high probability.
We execute the oracle $\mathsf{RelCluster}(\Q,\I_1,k)$ and we get a set of $k$-centers $S_1$ and a value $r_{S_1}$ such that $\rho(S_1,\Q(\I_1))\leq r_{S_1}\leq (2+\eps)\rho^*_k(\Q(\I_1))$.
Notice that $\rho^*_k(\Q(\I_1))\leq 2\rho^*_k(\Q(\I))$ because $\Q(\I_1)\leq \Q(\I))$. Then, assume that we know the optimum $r=\hat{\rho}_{k,z}^*(\Q(\I))$. Around each point $p\in S_1$, we construct a hyper-cube $\square_p$ with center $p$ and side-length $2(r_{S_1}+r)$. Let $G=\bigcup_{p\in S_1}\square_{p}$ be the set of all constructed hyper-cubes. 
We note that all points in $G\cap \Q(\I)$ are within distance $O(r)$ from a center in $S_1$.
On the other hand, if a point $t\in\Q(\I)$ does not belong in a hyper-cube in $G$ then at least one of the tuples that are joined to construct $t$ must be an outlier, i.e., at least one of $\{\pi_{\allattr_1}(t),\ldots, \pi_{\allattr_g}(t)\}$ must be an outlier.
We identify tuples in $\Q(\I)$ that lie outside the union of $G$ by constructing 
the hyper-rectangular decomposition of the complement of $G$, denoted by $\widebar{\mathcal{M}}(G)$ (this is also equivalent to the decomposition of the complement of the arrangement of $G$~\cite{agarwal2000arrangements, halperin2017arrangements}). Since $|G|=k$, $\widebar{\mathcal{M}}(G)$ consists of $O(k^d)$ hyper-rectangles.
We go through each hyper-rectangle $\square\in \widebar{\mathcal{M}}(G)$ and we use the oracle from Lemma~\ref{lem:Rects} to get one tuple $q\in \square\cap \Q(\I)$ (if there exists one). If such $q$ is found, all tuples in $\{\pi_{\allattr_1}(q),\ldots, \pi_{\allattr_g}(q)\}$ are added in $T$. We continue with the same way until all hyper-rectangles in $\widebar{\mathcal{M}}(G)$ do not contain any point in $\Q(\I\setminus T)$. 
Notice that each time we remove at least one valid outlier tuple from $T^*\subseteq \I_2\subseteq \I$, so after at most $z$ iterations all hyper-rectangles in $\widebar{\mathcal{M}}(G)$ will be empty.
Finally, recall that we assumed knowledge of the optimum value $r = \hat{\rho}_{k,z}^*(\Q(\I))$. While this assumption does not generally hold, we can instead perform a binary search over the $L_\infty$ distances in $\Q(\I)$, as briefly discussed in Section~\ref{subsubsec:TupleOutliersOneRel} and also in~\cite{agarwal2024computing}.

In Appendix~\ref{appndx:tupleoutliers}, we show the next theorem.



\vspace{-0.2em}
\begin{theorem}
\label{thm:FPT}
        For an acyclic join query $\Q$ of $d$ attributes and $g$ tables, a database $\I$ with $|\I|=N$, and integer parameters $k, z\!\!>\!\!0$, 
        there exists an
        $(1,g,O(1))$-approximation algorithm for the $\relclustertwo(\Q,\I,k,z)$ problem, with probability at least $1-\frac{1}{N}$, that runs in $O(2^{g\cdot k+z}\cdot (z+ k^d)\cdot N\cdot \log^3 N)$ time.
\end{theorem}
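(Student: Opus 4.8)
The plan is to fix an optimal solution $(S^*,T^*)$ of $\relclustertwo(\Q,\I,k,z)$, write $r^*=\hat{\rho}^*_{k,z}(\Q(\I))$, and organize the whole argument around one \emph{good event}: that some outer iteration draws a partition $\I_1,\I_2$ with $S^*\subseteq\Q(\I_1)$ and $T^*\subseteq\I_2$. First I would bound the probability of this event. Each of the $\le k$ centers in $S^*$ is a join of $g$ input tuples, and none of those supporting tuples lies in $T^*$ (since $S^*\subseteq\Q(\I\setminus T^*)$); hence the set $B$ consisting of all supporting tuples of $S^*$ together with $T^*$ has $|B|\le gk+z$, and in a single iteration every tuple of $B$ is sent to the correct side with probability at least $2^{-(gk+z)}$. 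Over $\tau=\Theta(2^{gk+z}\log N)$ independent iterations the good event therefore fails with probability at most $(1-2^{-(gk+z)})^\tau\le 1/N$ (a large enough constant in $\tau$ also absorbs a union bound over the $O(\log N)$ values of $r$ tried in the binary search).

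Second, I would analyze one good iteration at the binary-search value $r$ equal to the $L_\infty$-distance of $\Q(\I)$ closest to $r^*$; here I use that the $\ell$-th smallest $L_\infty$ pairwise distance of $\Q(\I)$ is computable in $O(N\log N)$ time~\cite{agarwal2024computing}, and that the $L_\infty$/$L_2$ conversion costs only a $\sqrt d=O(1)$ factor. The crux is the chain: (i) $\I_1\subseteq\I\setminus T^*$, so $\Q(\I_1)\subseteq\Q(\I\setminus T^*)$ and hence $\rho^*_k(\Q(\I_1))\le\rho(S^*,\Q(\I_1))\le r^*$; by Lemma~\ref{lem:relkCenter} the set $S_1$ returned by $\mathsf{RelCluster}(\Q,\I_1,k)$ then satisfies $|S_1|\le k$ and $\rho(S_1,\Q(\I_1))\le r_{S_1}\le(2+\eps)r^*$. (ii) For any $q\in\Q(\I\setminus T^*)$, routing through its nearest center of $S^*$ and then the nearest center of that in $S_1$, the triangle inequality places $q$ inside one of the cubes $\square_p$, hence inside $G$. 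Consequently any point $q\in\Q(\I\setminus T)$ that the oracle of Lemma~\ref{lem:Rects} returns inside a cell of $\widebar{\mathcal{M}}(G)$ cannot lie in $\Q(\I\setminus T^*)$, so one of its supporting tuples is in $T^*$; and since $q\in\Q(\I\setminus T)$ that tuple is not yet in $T$, i.e.\ it is a \emph{new} element of $T^*$. Marking the (at most $g$, taken on the $\I_2$-side so as not to invalidate $S_1\subseteq\Q(\I_1)\subseteq\Q(\I\setminus T)$) supporting tuples of $q$ kills $q$, and since $|T^*|\le z$ there are at most $z$ such discoveries; afterwards $T^*\subseteq T$, so no point of $\Q(\I\setminus T)$ remains outside $G$, and the loop stops with $|T|\le gz$ and $\rho(S_1,\Q(\I\setminus T))\le\sqrt d\,(r_{S_1}+r)=O(r^*)$. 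This gives the $(1,g,O(1))$ guarantee for a good iteration; since we keep the best valid solution over all iterations and binary-search values (and cap the inner loop at $z$ discoveries per iteration, so a bad iteration never returns an over-budget $T$), the overall output is an $(1,g,O(1))$-approximation with probability $\ge 1-1/N$.

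Third, I would bound the running time. The complement of the union of the $k$ cubes is decomposed into $\widebar{\mathcal{M}}(G)$ with $O(k^d)$ axis-parallel cells, built in $O(k^d)$ time for constant $d$; querying a cell with $\mathsf{CountRect}/\mathsf{SampleRect}$ (Lemma~\ref{lem:Rects}) costs $O(N\log N)$, a cell is found empty at most once ($\le k^d$ times total) and found non-empty at most $z$ times in total, so the inner loop costs $O((k^d+z)N\log N)$; one call to $\mathsf{RelCluster}$ costs $O(k^2N)$ (absorbed) and the partition sampling $O(N)$. Multiplying by $\tau=\Theta(2^{gk+z}\log N)$ iterations and by the $O(\log N)$ binary-search steps (each also spending $O(N\log N)$ on an $L_\infty$ selection, absorbed) yields the claimed $O(2^{gk+z}(z+k^d)N\log^3N)$ bound.

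The step I expect to be the main obstacle is (ii) together with its quantitative calibration: choosing the cube side-length large enough that \emph{every} not-yet-removed join result lies inside $G$ — so that a point found outside $G$ is a genuine certificate of a fresh optimal outlier — yet small enough that the final clustering cost stays $O(r^*)$, all while tracking the $L_\infty$-versus-$L_2$ slack and ensuring the marked tuples never destroy the validity $S_1\subseteq\Q(\I\setminus T)$. The probabilistic part is routine, and the relational machinery enters only as the black boxes of Lemmas~\ref{lem:relkCenter} and~\ref{lem:Rects}.
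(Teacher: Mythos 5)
Your proposal is correct and follows essentially the same route as the paper's proof: the same good event (supporting tuples of $S^*$ in $\I_1$, $T^*\subseteq\I_2$) with the same $2^{-(gk+z)}$ probability bound, the same use of $\mathsf{RelCluster}$ on $\I_1$ plus $L_\infty$ cubes and the $O(k^d)$-cell complement decomposition, the same certificate argument that any surviving join result outside $G$ exposes a fresh tuple of $T^*$ (hence at most $z$ rounds and $|T|\le gz$), and the same running-time accounting. The only cosmetic differences are that the paper re-selects one representative center per cube from $\Q(\I\setminus T)$ rather than returning $S_1$ directly, and enforces $T\subseteq\I_2$ by testing $q\in\Q(\I_2)$ and aborting otherwise — both immaterial to the argument.
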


\newcommand{\polylog}{\mathsf{polylog}}
\renewcommand{\fhw}{\mathsf{fhw}}
\vspace{-1em}
\subsection{Extension to Cyclic Queries}
\label{subsec:cyclic}
Similarly to~\cite{agarwal2024computing, esmailpour2024improved}, all our results for relational $k$-center clustering with outliers can be extended to cyclic join queries using the notion of \emph{fractional hypergraph width}~\cite{gottlob2014treewidth}. For a join query $\Q$, we use $\fhw(\Q)$ to denote the fractional hypergraph width of $\Q$. Intuitively $\fhw(\Q)$ shows how far $\Q$ is from being acyclic. If $\Q$ is an acyclic join query then $\fhw(Q)=1$. Using well-known and standard methodology 
an algorithm for the $\relclusterone$ or $\relclustertwo$ problem over an acyclic query $\Q$ with running time $O(\eta(k,z)\cdot N\cdot\polylog (N))$, where $\eta(k,z)$ is any real positive function over $k,z$, can be converted to an algorithm for any cyclic query with running time $O(\eta(k,z)\cdot N^{\fhw(\Q)}\cdot\polylog N)$.
\vspace{-0.6em}
\section{Conclusion}
\label{sec:conclusion}
In this work we defined the $k$-center clustering problem with set outliers. The problem has various applications in data cleaning, privacy, sensor networks etc. We gave efficient approximation algorithms in general metric spaces with arbitrary sets and in Euclidean space where sets are hyper-rectangles. Then we proposed efficient algorithms for the relational clustering problem with outliers. There are multiple open problems arising from this line of work.
An interesting direction is to study other clustering objectives such as $k$-median and $k$-means clustering in presence of set outliers. Finally, we aim to study more efficient algorithms for the relational $k$-center clustering with tuple-outliers in cases where outliers might exist in multiple tables.
\bibliographystyle{abbrv}
\bibliography{ref, acmart}
\newpage
\appendix

\section{Missing proofs and details in Section~\ref{sec:hardness}}
\label{appndx:hardness}
\begin{proof}[Proof of Lemma~\ref{lem:hardness}]
Starting from the input $X,\mathcal{Y}$ of the $\mathsf{SC}(X,\mathcal{Y})$ problem we construct in polynomial time a set $P$ and a family of sets $\mathcal{H}$ that will serve as input to the $\prob$ problem.
For every $x_i\in X$, we place a point $p_i$ on the real line at coordinate $i$.
We define $\hat{P}=\bigcup_{x_i\in X}p_i$.
Then for an arbitrary positive integer number $k$ such that $k\leq O(n'^2)$, we add $k$ additional points $\hat{Q}=\{q_1, \ldots, q_k\}$ in $P$. 
Each point $q_j$ is placed at coordinate $2n'+j$.
Overall $P=\hat{P}\cup \hat{Q}$.
For every set $y\in \mathcal{Y}$, we create the  set $\hat{y}=\{p_i\mid x_i\in y\}$. Let $\hat{\mathcal{Y}}=\{\hat{y}\mid y\in \mathcal{Y}\}$.
Furthermore, for every $j\in[k]$ we define a set $h_j=\{q_j\}$. Let $\hat{H}=\{h_j\mid j\in [k]\}$.
Overall $\H=\hat{\mathcal{Y}}\cup \hat{H}$.
Let $n=|P|=n'+k$ and $m=|\mathcal{H}|=m'+k$,
and let $\gamma=o(m)$ be any sublinear value. We notice that for any two elements $p,q\in P$, we have $\dist(p,q)=|p-q|$, since $p,q$ are points in $\Re^1$.

    Given $X, \mathcal{Y}$ we construct in polynomial time the instance of the $\prob$ problem $P, \H$ along with a number $k$ as described above. If $\mathsf{Alg}$ is an $(1,f-\zeta,\gamma)$-approximation algorithm for the $\prob$ problem, then for any $z\in[m]$, without loss of generality, $\mathsf{Alg}$ returns a valid solution $C_z, H_z$ with $|C_z|=k$ and $|H_z|\leq (f-\zeta)z$ such that $\rho(C_z,P\setminus (\bigcup_{h\in H_z}h))\leq \gamma \cdot \rho^*_{k,z}(P,\H)$.

    We first prove some important properties. First, 
    if $\rho^*_{k,z}(P,\H)=0$ then by definition, $\rho(C_z,P\setminus (\bigcup_{h\in H_z}h))=0$.

 Furthermore, if $\rho^*_{k,z}(P,\H)=0$, then without loss of generality we can assume that $C_z=\hat{Q}$ and $H_z\subseteq \hat{\mathcal{Y}}$. Indeed, assume that $\mathsf{Alg}(P,\H,k,z)$ returns a valid solution $C_z', H_z'$ with $\rho(C_z',P\setminus (\bigcup_{h\in H_z'}h))=0$. Then it follows that for every point $p\in P$, $p\in C_z'$ or $p\in \bigcup_{h\in H_z'}h$. Assume that $C_z'\neq \hat{Q}$ or/and $H_z'\not
 \subseteq \hat{\mathcal{Y}}$. In this case, let $C_z'=Q'\cup P'$, where $Q'\subseteq \hat{Q}$ and $P'\subseteq \hat{P}$. Notice that $|Q'|+|P'|\leq k$.
 Similarly, let $H_z'=\mathcal{Y}'\cup H'$, where $\mathcal{Y}'\subseteq \hat{\mathcal{Y}}$ and $H'\subseteq \hat{H}$. Notice that $|\mathcal{Y}'|+|H'|\leq (f-\zeta)z$.
 We note that there is a $1:1$ relationship between $\hat{H}\cap H_z'$ and $\hat{P}\cap C_z'$.
 Using $C_z', H_z'$ we construct $C_z=\hat{Q}$ and $H_z\subseteq \hat{\mathcal{Y}}$ such that 
$\rho(C_z,P\setminus (\bigcup_{h\in H_z}h))=0$.
Initially, $C_z=C_z'$ and $H_z=H_z'$.
For every point $q_j\in \hat{Q}$, if $q_j\notin C_z'$, then it means that $h_j\in H_z'$. By the $1:1$ relationship, there exists a point $p_i\in \hat{P}\cap C_z'$.
Hence, we add $q_j$ in $C_z$, we remove $p_i$ from $C_z$, we remove $h_j$ from $H_z$, and we add any set $\hat{y}\in \hat{Y}$ that contains $p_i$ in $H_z$. In the end of this process notice that $C_z=\hat{Q}$, $H_z\subseteq \hat{\mathcal{Y}}$, $|C_z|=|C_z|\leq k$, $|H_z|=|H_z'|\leq (f-\zeta)z$, and every point in $\hat{Q}$ is covered by at least one set in $H_z$. Hence $C_z, H_z$ is a valid solution with $|H_z|\leq k$, $|C_z|\leq (f-\zeta)z$, and $\rho(C_z,P\setminus (\bigcup_{h\in H_z}h))=0$.

Having the two properties above, we describe our algorithm to derive a $(f-\zeta)$-approximation for the $\mathsf{SC}$-problem.
For every $z=1,2,\ldots, n'$ we run $\mathsf{Alg}(P,\H,k,z)$. We stop at the smallest $z'$ such that $\mathsf{Alg}(P,\H,k,z')$ returns $C_{z'}, H_{z'}$ where $\rho(C_{z'},P\setminus (\bigcup_{h\in H_{z'}}h))=0$. By the second property we have $C_{z'}=\hat{Q}$ and $H_{z'}\subseteq \hat{\mathcal{Y}}$. Let $Y=\{y\mid \hat{y}\in H_{z'}\}$.
Since $C_{z'}=\hat{Q}$, $\rho(C_{z'},P\setminus (\bigcup_{h\in H_{z'}}h))=0$ and $H_{z'}\subseteq \hat{\mathcal{Y}}$, it is straightforward that $Y$ is a cover for $\mathsf{SC}(X,\mathcal{Y})$.

It remains to show that $|Y|\leq (f-\zeta)\cdot \mathsf{opt}$. We first show that for any value $z\geq \mathsf{opt}$, $\rho^*_{k,z}(P,\H)=0$. Let $Y^*\subseteq \mathcal{Y}$ be the optimum solution for $\mathsf{SC}(X,\mathcal{Y})$, i.e., $Y^*$ is a cover and $|Y^*|=\mathsf{opt}$. If $z\geq \mathsf{opt}$, then consider the set $C^*=\hat{Q}$ and the family $H^*=\{\hat{y}\mid y\in Y^*\}\subseteq \H$. Notice that $|C^*|=k$, $|H^*|=\mathsf{opt}$ and every point in $P$ is either a center in $C^*$ or belongs to an outlier set in $H^*$. Hence, indeed $\rho(C^*,P\setminus(\bigcup_{h\in H^*}h))=0$ and $\rho^*_{k,z}(P,\H)=0$. Thus, $z'\leq \mathsf{opt}$. 
By the definition of $\mathsf{Alg}$, $|H_{z'}|\leq (f-\zeta)z'\leq (f-\zeta)\mathsf{opt}$. Hence, $|Y|=|H_{z'}|\leq (f-\zeta)\mathsf{opt}$. The lemma follows.
 \end{proof}

\section{Missing proofs and details from Section~\ref{subsec:GeneralDS}}
\label{appndx:generalDS}

\begin{proof}[Proof of Theorem~\ref{thm:resDS}]
Given a radius $r$, 
      the running time of our procedure is $O(nk+k^2m+\beta_1^2+(\beta_2+\beta_3)^{\expon})$ because of our analysis in Section~\ref{subsec:GeneralDS} and Section~\ref{subsec:GeneralIS}.
      In order to run a binary search over all pairwise distances we need $O(n^2\log n)$ time to sort all distances.
      
      Next we focus on the approximation factors.
      By the proof of Lemma~\ref{lem:master}, we have that $|\hat{C}|\leq 2k'$. Furthermore, $|X|=k-k'$. Hence, $|C|\leq 2k'+k-k'\leq 2k$. Similarly, from Lemma~\ref{lem:master}, $|\hat{H}|\leq 2z'$; notice that $f=1$ in this case. We have that $|H_0|=z-z'$. Hence, $|H|\leq 2z'+z-z'\leq 2z$. 
      
      Finally, we show that $\rho(C,P\setminus \bigcup_{h\in H}h)\leq 72\cdot \rho^*_{k,z}(P,\mathcal{H})$. From the proofs of Lemma~\ref{lem:master} and Lemma~\ref{helper}, we have that the ($\mathsf{LP}$\ref{lp:2}) is feasible for every $r\geq \rho^*_{k,z}(P,\mathcal{H})$ so it holds that the last radius $r'$ that our algorithm will encounter in the binary search is at most $\rho^*_{k,z}(P,\mathcal{H})$.
      From the proof of Theorem~\ref{thm:generalIS}, we have that 
    $\rho(\hat{C},P'\setminus \bigcup_{h\in \hat{H}}h)\leq 36r'\leq 
   36\rho^*_{k,z}(P,\mathcal{H})$.
    Every ball $B_\ell'$ has radius $36r'\leq 36\rho^*_{k,z}(P,\mathcal{H})$ so the point $p_\ell'$ that is added to $\hat{C}$ covers all points in the ball within distance $72\rho^*_{k,z}(P,\mathcal{H})$. Hence $\rho(C, P\setminus\bigcup_{h\in H}h)\leq 72\rho^*_{k,z}(P,\mathcal{H})$. 

    If $km<n$, then we can improve the $O(n^2\log n)$ running time to construct a sorted array of the distances to run the binary search on. After computing the $k$-center clustering using the Gonzalez algorithm in every $h_j\in \mathcal{H}$, we consider all pairwise distances among points in $\bigcup_{j\in[m]}C_j$. This leads to sorting $O(k^2m^2)$ distances in $O(k^2m^2\log(km))$ time. The approximation factor of the algorithm will increase by a constant factor. Notice that for each distance $r$, the running time of our procedure has a term $\beta_1^2$, where $\beta_1=\min\{n,km\}$. Based on the comparison of $n$ and $km$ we decide whether we sort all pairwise distances or only the pairwise distances in $\bigcup_{j\in[m]}C_j$. The result follows.
  \end{proof}

\section{Missing details and proofs in Section~\ref{subsec:GeometricIS}}
\label{appndx:geometricIS}

\paragraph{The $\mathsf{Update}(\cdot)$ procedure}

Next, we describe how we can update $\probvector$ efficiently at the beginning of each iteration.
Let $\bar{\var}$ be the solution of the oracle in the previous iteration.
Let $\delta_i=\frac{1}{\xi}(A_i\bar{\var}-b_i)=\frac{1}{k+z}(A_i\bar{\var}-b_i)=\frac{1}{k+z}(A_i\bar{\var}-1)$, where $A_i$ is the $i$-th row of $A$ ($i$-th constraint in~\eqref{neq3}).
In~\cite{arora2012multiplicative} the authors update each $\probvector[i]$ in constant time after computing $\delta_i$'s. In our case, if we attempt to calculate all $\delta_i$'s with a brute force way we would need $\Omega(n(n+m))$ time leading to a super-quadratic algorithm.
We show a faster way to calculate all $\delta_i$'s.
Our $\mathsf{Oracle}$ method sets $k+z$ variables $\bar{\var}$ to $1$.
For each $p_i\in P$ the goal is to find $A_i\bar{\var}=\sum_{p_\ell\in S_{p_i}^\eps}\bar{\var}_\ell + \sum_{\rec_j\in L_i}\bar{\var}_{n+j}=\sum_{p_\ell\in S_{p_i}^\eps, \bar{\var}_\ell> 0}\bar{\var}_\ell + \sum_{\rec_j\in L_i, \bar{\var}_{n+j}>0}\bar{\var}_{n+j}$.

We first focus on the first sum.
We modify the BBD tree $\tree$ as follows.
For each node $u\in \tree$ we define the variable $u_w=0$. For each $p_\ell$ with $\bar{\var}_\ell>0$ we start from the leaf containing $p_\ell$, and we visit the tree bottom up until we reach the root.
For each node $u$ we encounter, we update $u_w=u_w+\bar{\var}_\ell$. 
After the modification of $\tree$, for each constraint/point $p_i$ 
we query $\tree$ on $B(p_i,r)$
and we get the set of canonical nodes $\canonical(B(p_i,r))$.
We set $R_i^{(1)}\leftarrow\sum_{u\in \canonical(B(p_i,r))}u_w=\sum_{p_\ell\in S_{p_i}^\eps}\bar{\var}_\ell$.

Then, we continue with the second sum. We modify the range tree $\rangetree$ as follows. For each node $u$ in the last level tree of $\rangetree$ we define the variable $u_w=0$. For each $\rec_j\in \setrects$ such that $\bar{\var}_{n+j}>0$, we query $\rangetree$ on $\rec_j$ and we  get the set of canonical nodes $\canonicalrange(\rec_j)$. For every node $v\in \canonicalrange(\rec_j)$ we update $v_w=v_w+\bar{\var}_{n+j}$.
Then for every point $p_i\in P$ we define a set of nodes $U_i$ as follows.
For every leaf node $u$ of $\rangetree$ that contains $p_i$ we traverse the last level tree of $\rangetree$ from $u$ to the root of the tree. For every node $v$ we encounter we update $U_i=U_i\cup \{v\}$.
We set $R_i^{(2)}\leftarrow\sum_{v\in U_i}v_w=\sum_{\rec_j\in L_i}\bar{\var}_{n+j}$.

We return $\delta_i=\frac{1}{k+z}(R_i^{(1)}+R_i^{(2)}-1)$.

The correctness follows by observing that the coefficient of variable $\bar{\var}_\ell$ in the $i$-th row of $A$ is $1$ if and only if $p_\ell\in S_{p_i}^{\eps}$ and the coefficient of variable $\bar{\var}_{n+j}$ in the $i$-th row of $A$ is $1$ if and only if $\rec_j\in L_i$.

We need $O(n)$ time to compute all values $v_w$ by traversing the BBD tree $\tree$ bottom up. Then for each $p_i$ we run a query on $\tree$ in $O(\log n + \eps^{-d+1})$ time.
Overall, for the first sum we spend $O(n\log n+n\eps^{-d+1})$ time.
Next, for each $\rec_j\in \setrects$ we query $\rangetree$ on $\rec_j$ in $O(\log^d n)$ time.
A point $p_i$ lies in $O(\log^{d-1} n)$ leaf nodes in $\rangetree$ and the height of every last level subtree of $\rangetree$ is $O(\log n)$. Hence, each set $U_i$ can be computed in $O(\log^d n)$ time. Overall for the second sum we spend $O((n+m)\log^d n)$ time.
Overall,
$Q_U=O((n+m)\log^d n + n\eps^{-d+1})$.

\paragraph{The $\mathsf{Round}(\cdot)$ procedure}
The real vector $\hat{\var}$ we get satisfies ($\mathsf{LP}$\ref{lp:3}) approximately. From the MWU method (see Theorem~\ref{thm:mutli-weights}) the Constraints in $\mathcal{P}$, (Constraints~\eqref{neq1} and~\eqref{neq2}) are satisfied exactly, however the Constraints~\eqref{neq3} are satisfied approximately. In fact, it holds that
$\sum_{\rec_j\in L_i}y_j + \sum_{p_\ell\in S_{p_i}^\eps}x_\ell\geq 1-\eps$ for every $p_i\in P$.

We describe a modified version of the rounding procedure we followed in Section~\ref{subsec:GeneralIS}.

We construct a range tree $\rangetree$ over $P$. For every node $u\in \rangetree$ we initialize a list $u.list=\emptyset$.
Then, we get the set of hyper-rectangles $\rects=\{\rec_j\mid \bar{\var}_{n+j}\geq \frac{1}{2f}\}$.
For every $\rec_j\in \rects$ we compute $\canonicalrange(\rec_j)$ and for every node $u\in \canonicalrange(\rec_j)$ we add the hyper-rectangle $\rec_j$ in list $u.list$.
Next, the goal is to identify the active points $\mathsf{Act}$ in $P$, as we did in Section~\ref{subsec:GeneralIS}. Initially, $\mathsf{Act}=P$. For every point $p_i\in P$ we visit all leaf nodes of the last level subtrees that contain $p_i$ and traverse through the root of the subtree. If we find a node $u$ such that the size of $u.list$ is non-empty then we remove $p_i$ from active points.

Let $C=\emptyset$.
We construct a BBD tree $\tree$ over $\mathsf{Act}$. For every node $u$ of $\tree$ we store a representative point $u.p\in \square_u\cap \mathsf{Act}$ and a boolean variable $u.a$ which is $1$ if $u$ is active, and $0$ otherwise. Initially $u.a=1$ for every node $u\in\tree$. While the root $\mathsf{root}$ of $\tree$ is active, we repeat. Let $p_i=\mathsf{root}.p$. We add the representative $p_i$ in $C$. We define the ball $B(p_i,2r)$ and we query over $\tree$ to find $\canonical(B(p_i,2r))$.

During the query if we traverse a node $u$ such that $u.a=0$ we stop the execution through this branch of the tree. For every $u\in \canonical(B(p_i,2r))$ we set $u.a=0$. In the end we update the active nodes and representatives in $\tree$ bottom up starting from the nodes in $\canonical(B(p_i,2r))$ to the root.
The rest of the algorithm is the same as in Section~\ref{subsec:GeneralIS}.

The next Lemma follows almost verbatim from Lemma~\ref{lem:master}.
\begin{lemma}
It holds i) If $r\geq \rho^*_{k,z}(P,\setrects)$ then ($\mathsf{LP}$\ref{lp:3}) is feasible. ii) $|\rects|\leq 2fz$, and iii) If ($\mathsf{LP}$\ref{lp:3}) is feasible for a distance $r$ then $\rho(C,P\setminus (\bigcup_{\rec_j\in \rects}\rec_j))\leq 2(1+\eps)r$.
\end{lemma}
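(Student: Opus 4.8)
The plan is to reuse the proof of Lemma~\ref{lem:master} almost verbatim. The feasibility problem ($\mathsf{LP}$\ref{lp:3}) differs from ($\mathsf{LP}$\ref{lp:1}) only in two places: the exact ball $B(p_i,r)\cap P$ of Constraint~\eqref{eq3} is replaced by the approximate neighborhood $S_{p_i}^\eps$ in Constraint~\eqref{neq3}, and the rounding step of Section~\ref{subsec:GeneralIS} is implemented through BBD-tree ball queries instead of exact range searches. Both changes are one-sided, and the whole argument hinges on invoking the BBD-tree guarantee $B(x,t)\cap P\subseteq\bigcup_{u\in\canonical(B(x,t))}\square_u\cap P\subseteq B(x,(1+\eps)t)\cap P$ in the right direction at each step.

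For part (i), since $r\ge\rho^*_{k,z}(P,\setrects)$ I would fix an optimal $\gprob$ solution $C^*,\rects^*$ with $|C^*|\le k$, $|\rects^*|\le z$ and clustering cost at most $r$, and set $\bar x_i=1$ iff $p_i\in C^*$, $\bar y_j=1$ iff $\rec_j\in\rects^*$. Constraints~\eqref{neq1} and~\eqref{neq2} hold by cardinality. For Constraint~\eqref{neq3} at a point $p_i$: if $p_i$ lies in some outlier rectangle, that rectangle is in $L_i$ and contributes a $1$; otherwise $p_i$ is within distance $r$ of some $p_{i^*}\in C^*$, and since $B(p_i,r)\cap P\subseteq S_{p_i}^\eps$ we have $p_{i^*}\in S_{p_i}^\eps$, so $\bar x_{i^*}$ contributes a $1$. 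Hence the integer program, and a fortiori ($\mathsf{LP}$\ref{lp:3}), is feasible. Part (ii) is identical to Lemma~\ref{lem:master}(ii): every $\rec_j\in\rects$ has $\bar\var_{n+j}\ge\frac{1}{2f}$, so $|\rects|>2fz$ would give $\sum_j\bar\var_{n+j}>z$, contradicting Constraint~\eqref{neq2}, which the MWU method satisfies exactly since it is part of the trivial polytope $\mathcal{P}$.

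For part (iii), feasibility of ($\mathsf{LP}$\ref{lp:3}) at $r$ guarantees that the MWU method returns a vector $\hat\var$ from which $\rects$ and $C$ are computed; I would first note that after the range-tree pass the active set equals $\mathsf{Act}=P\setminus(\bigcup_{\rec_j\in\rects}\rec_j)$, because a point is deactivated precisely when it falls in a canonical node of some $\rec_j\in\rects$ and $\bigcup_{u\in\canonicalrange(\rec_j)}(\square_u\cap P)=\rec_j\cap P$. Each iteration of the loop then picks $p_i\in\mathsf{Act}$, adds it to $C$, and deactivates $\bigcup_{u\in\canonical(B(p_i,2r))}\square_u\cap P$; by the BBD-tree guarantee this set contains every still-active point within distance $2r$ of $p_i$ (in particular $p_i$ itself, so the loop strictly shrinks $\mathsf{Act}$ and terminates) and is contained in $B(p_i,2(1+\eps)r)\cap P$. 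Therefore at termination every point of $P\setminus(\bigcup_{\rec_j\in\rects}\rec_j)$ was deactivated by some center $p_i\in C$ while lying in $B(p_i,2(1+\eps)r)$, which gives $\rho(C,P\setminus(\bigcup_{\rec_j\in\rects}\rec_j))\le 2(1+\eps)r$; and $C\subseteq\mathsf{Act}$ makes $(C,\rects)$ a valid $\gprob$ solution. I do not expect a genuine obstacle here: since the MWU method already hands us exact satisfaction of the polytope constraints, the only care needed is bookkeeping the two directions of the BBD one-sided error — the inner inclusion for (i) and for loop progress, the outer inclusion for the final distance bound.
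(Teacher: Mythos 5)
Your proposal is correct and follows essentially the same route as the paper: parts (i) and (ii) are carried over from Lemma~\ref{lem:master} (with the inner BBD inclusion $B(p_i,r)\cap P\subseteq S_{p_i}^\eps$ ensuring feasibility), and part (iii) uses the range tree to identify $\mathsf{Act}$ exactly and the two one-sided BBD inclusions to get coverage within $2r$ and containment within $2(1+\eps)r$. Your write-up is in fact slightly more explicit than the paper's about which direction of the BBD guarantee is invoked at each step, but there is no substantive difference.
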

\begin{proof}
i) and ii) follow from Lemma~\ref{lem:master}, straightforwardly.

    Then we focus on iii). By definition of range tree $\rangetree$ we identify the subset of $P$ that does not belong to any hyper-rectangle $\rec_j\in \rects$. Hence, $\mathsf{Act}$ is computed correctly. Every time we add an element $p_i$ in $C$ we make inactive all nodes of $\tree$ that contains points from $\mathsf{Act}$ within distance $2r$ from $p_i$. By the approximation of the query procedure on $\tree$ we might also make inactive some nodes that contain points within distance $2(1+\eps)r$. Hence all elements in $P\setminus(\bigcup_{\rec_j\in \rects}h)j)$ are within distance $2(1+\eps)r$ from $C$.
\end{proof}

\begin{lemma}
    $|C|\leq \frac{2k}{1-2\eps}$.
\end{lemma}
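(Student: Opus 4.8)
The plan is to mimic the proof of Lemma~\ref{lem:master}(iv) almost verbatim, paying attention to the two differences introduced by the geometric machinery: the Multiplicative Weight Update loop satisfies Constraints~\eqref{neq3} only up to an additive error $\eps$ (Theorem~\ref{thm:mutli-weights}), and the exact balls $B(p_i,r)$ are replaced by the approximate neighborhoods $S_{p_i}^\eps$ returned by the BBD tree. Throughout, write $\hat{x}_\ell=\hat{\var}_\ell$ and $\hat{y}_j=\hat{\var}_{n+j}$ for the components of the (approximate) solution produced by Algorithm~\ref{alg:alg0}, and let $r$ be the radius for which this solution was obtained.

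First I would fix a center $p_i$ that $\mathsf{Round}(\cdot)$ adds to $C$. At the moment it is selected, $p_i$ is active, hence it lies in no hyper-rectangle of $\rects=\{\rec_j\mid \hat{\var}_{n+j}\geq \frac{1}{2f}\}$; therefore every $\rec_j\in L_i$ has $\hat{y}_j<\frac{1}{2f}$, and since $|L_i|\leq f$ we get $\sum_{\rec_j\in L_i}\hat{y}_j<\frac12$. Because $\hat{\var}$ satisfies the $i$-th inequality of ($\mathsf{LP}$\ref{lp:3}) up to additive $\eps$, namely $\sum_{\rec_j\in L_i}\hat{y}_j+\sum_{p_\ell\in S_{p_i}^\eps}\hat{x}_\ell\geq 1-\eps$, it follows that $\sum_{p_\ell\in S_{p_i}^\eps}\hat{x}_\ell>\frac12-\eps$.

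Then I would run the same charging argument as in Lemma~\ref{lem:master}(iv): when $p_i$ enters $C$, charge to it the weight $w_{p_i}=\sum_{p_\ell\in S_{p_i}^\eps}\hat{x}_\ell>\frac12-\eps$. The key geometric claim is that the sets $S_{p_i}^\eps$ are pairwise disjoint over the centers of $C$. By the BBD-tree guarantee, $S_{p_i}^\eps\subseteq B(p_i,(1+\eps)r)$, and once $p_i$ is selected the rounding deactivates every currently active point inside the ball it queries around $p_i$, so any later center $p_j$ of $C$ is farther from $p_i$ than that query radius; hence two neighborhoods of radius $(1+\eps)r$ with centers at that distance cannot share a point. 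Since Constraint~\eqref{neq1} is satisfied exactly by the MWU output, $\sum_{p_\ell\in P}\hat{x}_\ell\leq k$, and as $\hat{x}_\ell\geq 0$ is charged at most once we obtain $\sum_{p_i\in C}w_{p_i}\leq k$. Combining this with $\sum_{p_i\in C}w_{p_i}>|C|\bigl(\frac12-\eps\bigr)$ gives $|C|\bigl(\frac12-\eps\bigr)<k$, i.e. $|C|<\frac{k}{\frac12-\eps}=\frac{2k}{1-2\eps}$, which is the claim (we may assume $\eps<\frac12$, as $\eps$ is an arbitrarily small constant).

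The delicate point — and where I expect the real work — is making the disjointness of the $S_{p_i}^\eps$'s fully rigorous: since these are only approximate balls of radius $(1+\eps)r$, genuine disjointness requires the deactivation step of $\mathsf{Round}(\cdot)$ to clear points out to radius $2(1+\eps)r$ (rather than $2r$) around each selected center, so that two centers at distance more than $2(1+\eps)r$ necessarily have disjoint neighborhoods. This adjustment changes only lower-order constants and multiplies the clustering-cost bound of the preceding lemma by one more $(1+\eps)$ factor, which is still absorbed into the $2+\eps$ guarantee stated in Theorem~\ref{thm:geomfinal}. Everything else in the argument is a word-for-word transcription of Lemma~\ref{lem:master}(iv).
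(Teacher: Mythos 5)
Your proof is correct and follows essentially the same charging argument as the paper's own proof: a selected center is active, hence its outlier-set mass is below $\tfrac12$, the $\eps$-approximate constraint forces $\sum_{p_\ell\in S_{p_i}^\eps}\hat{x}_\ell\geq \tfrac12-\eps$, and disjoint charging against the budget $\sum_\ell \hat{x}_\ell\leq k$ yields the bound. In fact your handling of the disjointness of the neighborhoods $S_{p_i}^\eps$ is slightly more careful than the paper's (which only argues $\dist(p_i,p_j)>2r$ while the neighborhoods can reach radius $(1+\eps)r$); your fix of deactivating out to $2(1+\eps)r$ is the right one and, as you note, only perturbs constants already absorbed by the $(2+\eps)$ guarantees.
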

\begin{proof}
    Let $p_i$ be an element that does not belong in any outlier set in $\rects$. By definition, $\sum_{\rec_j\in L_i}\bar{\var}_{n+j}\leq \frac{1}{2}$ so from the third family of inequalities in the LP, we have $\sum_{p_\ell\in S_{p_i}^\eps}\bar{\var}_{\ell}\geq \frac{1}{2}-\eps$.

We prove the lemma by an assignment argument. Every time we add a point $p$ in $C$ we assign the weight of all elements within distance $r$ from $p$.
Assume that an element $p_i$ is added in $C$ in one of the iterations of the algorithm.
The weight $w_{p_i}=\sum_{p_\ell\in S_{p_i}^\eps}\bar{\var}_\ell$ is assigned to $p_i$. We have that $w_{p_i}\geq \frac{1}{2}$ while $\sum_{i\in[n]}\bar{\var}_i=k$, and $\sum_{i\in[n], p_i\in\mathsf{Act}}\leq k$.
Furthermore, for two different elements $p_i, p_j\in C$, we have that $B(p_i,r)\cap B(p_j,r)=\emptyset$ because $\dist(p_i,p_j)>2r$ since the algorithm removes all active elements within distance at least $2r$ from a selected element in $C$.
Hence, each time that we add a new element $p_i$ in $C$ a weight $w_{p_i}\geq 1/2-\eps$ is assigned to $p$. The total amount of weight among all elements is $k$, and no fraction of weight is assigned to two different elements in $C$.
Hence $\sum_{p\in C}w_p\leq k$ and $\sum_{p\in C}w_p\geq |C|\cdot (\frac{1}{2}-\eps)$ leading to $|C|(\frac{1}{2}-\eps)\leq k\Leftrightarrow |C|\leq \frac{2k}{1-2\eps}$.
\end{proof}
If we set $\eps\leftarrow \eps/5$ we have $|C|\leq (2+\eps)k$.
The rounding is executed in $Q_R=O((n+m)\log^d n + n\cdot \eps^{-d+1})$ time.

\paragraph{Compute the set $\Gamma$}
We note that so far we assumed that $r$ is any pairwise distance. In order to find a good approximation of the optimum clustering cost, we use the notion of the Well Separated Pair Decomposition (WSPD)~\cite{callahan1995decomposition, har2005fast} briefly described in Section~\ref{sec:prelim}. Let $\Gamma$ be the sorted array of $O(n/\eps^d)$ distances from WSPD.
Any pairwise distance in $P$ can be approximated by a distance in the array $\Gamma$ within a factor $1+\eps$, hence, we might not get the optimum $\rho^*_{k,z}(P,\setrects)$ exactly. In the worst case, we might get in binary search a larger value than $\rho^*_{k,z}(P,\setrects)$ which is at most $(1+\eps)\rho^*_{k,z}(P,\setrects)$. Hence, the overall approximation with respect to the $k$-center cost increases by a $(1+\eps)$ factor.
We need $O(n\eps^{-d}\log n)$ time to compute and sort the WSPD distances, so $Q_{\Gamma}=O(n\eps^{-d}\log n)$.

\section{Missing details and proofs from Section~\ref{subsec:GeometricDS}}
\label{appndx:geometricDS}
Before we describe our algorithm we show an interesting observation for the coreset construction in Section~\ref{subsec:GeneralIS}.

If $P$ lies in a metric space with bounded doubling dimension~\cite{heinonen2001lectures} (for example $P\in \Re^d$ and $\dist$ is the Euclidean distance as in the $\gprob$ problem) then the running time is slightly better than what we reported in Section~\ref{subsec:GeneralIS}. If the doubling dimension of the metric space is $O(1)$ then we can show that $|P'|=O(\min\{n,kz\})$. Similarly to Lemma~\ref{helper}, every ball of radius $18r$ around a point in $C'$ intersects at most $z$ sets in $\mathcal{H}'$. However, notice that the points in a set $h\in \mathcal{H}'$ have minimum pairwise distance $4r$. If the doubling dimension is $O(1)$ then there are at most $O(1)$ non-intersecting balls of radius $r$ that cover all points in a ball of radius $18r$. Hence, every ball of radius $18r$ around a point in $C'$ intersects at most $z$ sets in $\mathcal{H}'$ and  covers $O(1)$ points. Hence, $|P'|=O(\min\{n,kz\})$.
We conclude with the next corollary.

 \begin{corollary}
  Given a set of $n$ elements $P$ in a metric space with bounded doubling dimension, a family of $m$ disjoint subsets $\mathcal{H}$ over $P$, and positive integer parameters $k,z$, there exists a $(2,2,O(1))$-approximation algorithm for the $\prob(P,\mathcal{H},k,z)$ that runs in $O(nk+k^2m+\beta_1^2+(\beta_2+\beta_3)^{\expon}\log \beta_1)$ time, where $\beta_1=\min\{n,km\}$, $\beta_2=\min\{n,kz\}$, and $\beta_3=\min\{m,kz\}$.
  \end{corollary}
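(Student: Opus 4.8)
The plan is to recognize that this corollary is the disjoint-case result of Theorem~\ref{thm:resDS} specialized to a metric of bounded doubling dimension, and that the \emph{only} quantity that improves is the coreset size parameter $\beta_2$, which drops from $\min\{n,k^2z,km\}$ to $\min\{n,kz\}$. Both the $(2,2,O(1))$ approximation guarantee and the overall shape of the running-time bound are inherited verbatim from Section~\ref{subsec:GeneralDS}. Accordingly, I would re-run exactly the same two-phase coreset construction and the same $\mathsf{LP}$-based rounding of Section~\ref{subsec:GeneralIS}, and then re-derive only the bound on $|P'|$ before substituting the sharpened $\beta_2$ into the final time expression.

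The heart of the argument is a packing estimate. Recall from the first phase of the coreset construction that, after running Gonzalez on each set $h_j$ and then deleting surviving centers at distance less than $2r$, any two points that remain inside a single set $h\in\bar{\mathcal{H}}$ are at pairwise distance at least $2r$. From the analysis underlying Lemma~\ref{helper}, after the second phase the coreset $P'$ is covered by the $k'$ balls $\{B(c',10r)\cap P'\}_{c'\in C'}$, and each such ball meets at most $\bar{z}\le z$ distinct sets of $\mathcal{H}'$ (otherwise the corresponding point would have triggered a removal step). The one new ingredient I would add is the doubling property: in a metric of doubling dimension $O(1)$, a ball of radius $10r$ contains at most a constant number of points whose mutual distances are all at least $2r$. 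This follows from the standard packing bound --- cover $B(c',10r)$ by $(10r/r)^{O(1)}=O(1)$ balls of radius strictly less than $r$, and observe that each such small ball holds at most one point of a $2r$-separated set, since two points in a ball of radius $<r$ lie strictly within distance $2r$.

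Combining these facts, each ball $B(c',10r)\cap P'$ receives $O(1)$ points from each of the at most $z$ sets it intersects, hence $O(z)$ points in total; summing over the $k'\le k$ balls gives $|P'|=O(kz)$, and therefore $|P'|=O(\min\{n,kz\})=O(\beta_2)$ with the sharpened $\beta_2$. The bound $|\mathcal{H}'|=O(\min\{m,kz\})=O(\beta_3)$ is unaffected, as is the guarantee $\rho^*_{k',z'}(P',\mathcal{H}')\le 10r$ whenever $r\ge\rho^*_{k,z}(P,\mathcal{H})$. Feeding the smaller coreset into the $\mathsf{LP}$-rounding, the correctness proof of Theorem~\ref{thm:resDS} (via Lemma~\ref{lem:master}) yields the claimed $(2,2,O(1))$ guarantee without change, while the $\mathsf{LP}$-solving cost per binary-search step becomes $O((\beta_2+\beta_3)^{\expon})$ with the new $\beta_2$; accounting for the preprocessing costs $O(nk+k^2m+\beta_1^2)$ and the $O(\log\beta_1)$ binary-search factor exactly as in Theorem~\ref{thm:resDS} gives the stated running time.

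The main obstacle I anticipate is making the packing argument rigorous at the boundary: the separation guarantee is $\ge 2r$ rather than strictly $>2r$, so I would cover $B(c',10r)$ by balls of radius strictly less than $r$ (e.g.\ radius $r/2$), ensuring that any two points sharing a small ball lie strictly within distance $2r$ and hence cannot both belong to a $2r$-separated set; this keeps the per-set count at $O(1)$. A secondary subtlety is confirming that the constant hidden in the doubling bound does not degrade the approximation --- but since both the packing constant and the clustering constant remain $O(1)$, the final $(2,2,O(1))$ guarantees are unaffected.
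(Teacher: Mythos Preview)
Your proposal is correct and follows essentially the same argument as the paper: both reuse the two-phase coreset construction and $\mathsf{LP}$-rounding of Section~\ref{subsec:GeneralDS} verbatim, and both obtain the sharpened bound $|P'|=O(\min\{n,kz\})$ via the identical packing observation---points within a single set of $\bar{\mathcal{H}}$ are $2r$-separated, so by the doubling property each ball $B(c',10r)$ contains only $O(1)$ of them per set, and since each such ball meets at most $z$ sets, it contributes $O(z)$ points. Your write-up is in fact slightly more careful than the paper's (which states the packing conclusion somewhat tersely), particularly in handling the non-strict $2r$ separation via radius-$r/2$ covering balls.
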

Hence, in Section~\ref{subsec:GeometricDS} and in this Appendix, $\beta_2=\min\{n,kz\}$, instead of $\beta_2=\min\{n,km,k^2z\}$ as it was in Section~\ref{subsec:GeneralDS}.

\paragraph{Algorithm}
We use the notation from Section~\ref{subsec:GeneralDS}.
First, we apply the WSPD, described in Section~\ref{sec:prelim} to construct a set of approximate distances $\Gamma$. Then a binary search over $\Gamma$ is executed. Let $r$ be the distance we consider in one guess of the binary search.

\paragraph{Construction of $\bar{P}$ and $\bar{\setrects}$}
We construct a range tree $\rangetree$ over $P$.
For every hyper-rectangle $\rec_j\in \setrects$, we run a reporting query on $\rangetree$ and we get all points $P_j=P\cap \rec_j$.
We run the Feder and Greene algorithm ~\cite{feder1988optimal} (efficient geometric algorithm for $k$-center clustering) on each set $P_j$ independently. For each $j\in[m]$, we get the set of centers $C_j$.
If $\rho(C_j,P_j)> 2r$ then we remove all points $P_j$ and the hyper-rectangle $\rec_j$ because it must be an outlier as we had in Section~\ref{subsec:GeneralDS}.
Then for every remaining $j\in [m]$, we construct a BBD tree $\tree_j$ over $C_j$. We use the BBD tree to remove all necessary points from $C_j$ so that no two points in $C_j$ have distance less than $2r$. For every $c\in C_j$, we define the ball $B(c,2r)$ and we run the query $\tree_j(B(c,2r))$. It returns the set $\canonical(B(c,2r))$. For every point $c'\in C_j\cap (\bigcup_{u\in \canonical(B(c,2r))})$, we remove $c'$ from $C_j$. We continue this operation for every remaining point in $C_j$.

In this way we define $\bar{P}, \bar{\setrects}, \bar{z}, \beta_1$ similarly to Section~\ref{subsec:GeneralDS}. For each remaining $j\in [m]$, let $\bar{P}_j=\bar{P}\cap \rec_j$.
From the additional $(1+\eps)$-approximation factor we get from the BBD tree, we have that $\rho(C_j,\rec_j)\leq 4(1+\eps)r$ and $\rho_{k,\bar{z}}^*(\bar{P}, \bar{\setrects})\leq 5(1+\eps)r$.

So far we spent $O(n\log^{d-1}n)$ time to construct $\rangetree$, and $O(n+m\log^{d-1}n)$ time to construct all sets $P_j$. For every $P_j$, Feder and Greene algorithm runs in $O(|P_j|\log k)$ time, hence all sets $C_j$ are constructed in $O(n\log k)$ time. The construction of all BBD trees $\tree_j$ take $O(n\log n)$ time, and for each query ball, the reporting query in the BBD tree takes $O(\eps^{-d+1}+\log n +\mathsf{OUT})$, where $\mathsf{OUT}$ is the size of the output. Hence, we remove all necessary points in $O(km(\eps^{-d+1}+\log n))$ time. Overall the running time so far is $O((n+m)\log^{d-1}(n)+km(\eps^{-d+1}+\log n))$.

\paragraph{Construction of $P'$ and $\setrects'$}
The main operator we should implement efficiently is:
For every point $p_i\in \bar{P}$, we should check whether the set  $B_i=B(p_i,18(1+\eps)r))\cap \bar{P}$ contains points from more than $\bar{z}$ hyper-rectangles.
If such a point is found then we remove all points in $B(p_i,36(1+\eps)r))\cap \bar{P}$ and we continue in the next iteration.
We propose a technical algorithm using a BBD tree over $\bar{P}$ that runs in $(\beta_1(\eps^{-d+1}+\log\beta_1))$ time.

In order to execute this part efficiently, we construct a BBD tree $\tree$ over $\bar{P}$ and a copy of the BBD tree denoted by $\tree'$.
If not mentioned explicitly, all operations and canonical nodes are getting from tree $\tree$.
For every node $u$ of $\tree$ we store a set of indexes $u.s$ which is initially empty.
Furthermore, for every index $j$ stored in $u.s$, we also store a counter $u.\mathsf{count}(j)$.
For every $j\in[m]$ and for every $p\in \bar{P}_j$ we 
we query $\tree$ on $B(p,18(1+\eps)r)$ and we get the set of canonical nodes $\canonical(B(p,18(1+\eps)r))$.
For every $u\in\canonical(B(p,18(1+\eps)r))$
we check whether $j\in u.s$. If no, then we add $j$ in $u.s$ and we set $u.\mathsf{count}(j)=1$. If yes, then we set $u.\mathsf{count}(j)\leftarrow u.\mathsf{count}(j)+1$. After repeating this step for all $j\in[j]$ and all points in $\bar{P}$, we go again through every $j\in[m]$ and $p\in \bar{P}_j$ and we run the following procedure. We get again $\canonical(B(p,18(1+\eps)r))$. For every $u\in \canonical(B(p,18(1+\eps)r))$ we keep $j$ in $u.s$ if and only if there is no node $v$ from the root of $\tree$ to $u$ such that $j\in v.s$. Hence, we go through every node $v$ in the path from $u$ to root and check whether $j\in v.s$.
First, assume that $j\in v.s$, where $v$ is the node closer to root that contains the index $j$ in its set.
We remove $j$ from $u.s$, and we set $v.\mathsf{count}(j)\leftarrow v.\mathsf{count}(j)+1$.
If we do not find such a node $v$ then we do not make any change.
Notice that in the end of this process for any path from any leaf node of $\tree$ to the root of $\tree$ an index $j$ might belong to at most one set $v.s$, where $v$ is a node from the leaf to the roof of $\tree$.

Then, we go through each leaf node of $\tree$. For a leaf node $u$, let $p_i\in\bar{P}_j$ be the point stored in the leaf node. We visit every node $v$ from $u$ to the root of $\tree$ and we count the total number of indexes stored in the sets of the nodes: If $\mathsf{path}_{u\rightarrow \mathsf{root}}$ is the set of nodes from $u$ to the root of $\tree$, then we get $\mathsf{count}_{p_i}=\sum_{v\in \mathsf{path}_{u\rightarrow \mathsf{root}}}|v.s|$. If $\mathsf{count}_{p_i}>\bar{z}$ then 
using $\tree'$ we get $B_i'=\canonical'(B(p,36(1+\eps)r))\cap \bar{P}$, where $\canonical'(B(p,36(1+\eps)r))$ contains the canonical (active) nodes in tree $\tree'$ of the query ball $B(p,36(1+\eps)r)$.
We implicitly remove $B_i'$ from $\tree'$ by making the nodes in $\canonical'(B(p,36(1+\eps)r))$ inactive.
We add the set $B_i'$ in the family of sets $X$, as defined in Section~\ref{subsec:GeneralDS}.

It remains to update the stored information in the nodes of the BBD tree $\tree$ after removing $B_i'$.
For every point $p\in B_i'$, we get the canonical nodes $\canonical(B(p,18(1+\eps)r))$. Without loss of generality, assume that $p\in \bar{P}_j$. For every $u\in \canonical(B(p,18(1+\eps)r))$ we get the ancestor node $v$ (it is unique) such that $j\in v.s$. Notice that it might be the case that $v=u$. We update $v.\mathsf{count}(j)\leftarrow v.\mathsf{count}(j)-1$. If $v.\mathsf{count}(j)=0$ then we remove $j$ from $v.s$.

Finally, we continue the same method through all leaf nodes of $\tree$ skipping nodes that contain points that are already removed from $\bar{P}$.

Let $P', \setrects', k', z'$ be defined similarly to Section~\ref{subsec:GeneralDS}.
Using the proof of Lemma~\ref{helper} along with the additional $(1+\eps)$-approximation factor of the BBD tree, the remark of Section~\ref{subsec:GeneralDS}, and the additional $(1+\eps)$-approximation factor we get from the binary search on $\Gamma$, we conclude that $|P'|=O(\min\{n,kz\})$, $|\setrects'|=O(\min\{m,kz\})$, and $\rho^*_{k',z'}(P',\setrects')\leq 18(1+\eps)^4r$.
If we set $\eps\leftarrow \eps/6$ we have $\rho^*_{k',z'}(P',\setrects')\leq 18(1+\eps)r$.

The additional running time to construct $P', \setrects'$ from $\bar{P}$ and $\bar{\setrects}$ is computed as follows. The BBD trees $\tree, \tree'$ are constructed in $O(\beta_1\log\beta_1)$ time. For every $p_i$ we compute $\canonical(B(p_i,18(1+\eps)r))$ in $O(\eps^{-d+1}+\log\beta_1)$ time. Over all points in $\bar{P}$, we spend $O(\beta_1(\eps^{-d+1}+\log\beta_1))$ time. This is also the running time we need to check whether there are ancestor nodes with the same index $j$ stored in the nodes' sets. The running time of all operations in $\tree'$ is also bounded by $O(\beta_1(\eps^{-d+1}+\log\beta_1))$ because for every $p$, $\canonical'(B(p,36(1+\eps)r))$ is computed in $O(\eps^{-d+1}+\log\beta_1)$ time.
Overall, given $\bar{P}, \bar{\setrects}$, we construct $P', \setrects'$ in $O(\beta_1(\eps^{-d+1}+\log\beta_1))$ time.

\paragraph{Solve LP on $P', \setrects'$ using MWU} Finally, the MWU algorithm from Section~\ref{subsec:GeometricIS} is executed on $P', \setrects'$ with parameters $k',z'$.

\ifarxiv
\section{$k$-center clustering with result-outliers}
\label{subsec:RelClustResOutliers}
We first show the connection of the $\relclusterone$ problem  with the standard (geometric) $k$-center clustering with outliers problems in the standard computational setting.
Notice that $\Q(\I)$ is a set of tuples/points in $\Re^d$.
Hence, the $k$-Center clustering with result-outliers can be mapped to the standard problem of $k$-center clustering with outliers in a set $\Q(\I)$ of $|\Q(\I)|$ points in $\Re^d$.
While there are known approximation algorithms~\cite{charikar2001algorithms, charikar2003better} for the $k$-center problem with outliers in the standard computational setting, any (super-)linear time algorithm with respect to the input size cannot be run in the relational setting efficiently because $|\Q(\I)|$ might be huge compared to the size of the database.
Instead, we implement in the relational setting the sampling procedure from~\cite{charikar2003better}, that selects a small number of samples and then run the more expensive algorithm from~\cite{charikar2001algorithms} on the set of samples. In~\cite{charikar2003better} the authors show that running the algorithm from~\cite{charikar2001algorithms} on the samples leads to a good overall approximation for the $k$-center clustering with outliers.

For simplicity, we first briefly describe the algorithm in the standard computational setting considering a set $P$ of $n$ points in $\Re^d$. In this case, we assume that $z=\delta\cdot n$, for $\delta\in(0,1)$.
First, sample $\tau=\Theta(\frac{k\log n}{\eps^2\delta})$ points from $P$ uniformly at random. Let $Q\subseteq P$ be the set of samples. Then, run the algorithm for $k$-center clustering with outliers from~\cite{charikar2001algorithms} on $Q$. Let $C=\emptyset$ be the set of centers we will select.
We run a binary search on the pairwise distances of $Q$. For every distance $r$, we run the following for at most $k$ iterations. Let $p\in Q$ be the point such that $|B(p,r)\cap Q|$ is maximized. The point $p$ is added in $S$ and all points from $P$ in $B(p,3r)\cap Q$ are removed from $Q$.
If after $k$ iterations the remaining number of points from $Q$ is at most $(1+\eps)\delta\tau$ then we set $T=Q$ (the remaining points from $Q$), we store the solution $C, T$ and we continue the binary search for smaller values of $r$. Otherwise, if $|Q|>(1+\eps)\delta\tau$, then we skip the current solution and we continue the binary search for larger values of $r$. In the end, we return the last solution $C, T$ computed by the binary search. This algorithm satisfies that $|T|\leq (1+\eps)^2\delta n\Leftrightarrow |T|\leq (1+\eps)^2z$ and $\rho(C,P\setminus T)\leq 3 \cdot\opt$, with probability at least $1-\frac{1}{n}$ where $\opt$ is $k$-center cost of the optimum solution. The running time of this algorithm as implemented in~\cite{charikar2001algorithms} is $O(n+\tau^2)$.

Next, we show how to implement the algorithm above in the relational setting. Furthermore, we describe a faster (approximate) implementation of the algorithm in~\cite{charikar2001algorithms} in $\Re^d$ using BBD trees, in the standard computational setting.
We assume that $|\Q(\I)|=\omega(N)$. Otherwise, we simply compute $\Q(\I)$ using Yannakakis algorithm~\cite{yannakakis1981algorithms} and  execute our fast implementation of~\cite{charikar2001algorithms} on $\Q(\I)$, as shown below.


\paragraph{Algorithm}
We run Yannakakis algorithm~\cite{yannakakis1981algorithms} to get $|\Q(\I)|$.
Let $z=\delta |\Q(\I)|$.
Using the oracle from Lemma~\ref{lem:Rects}, we get a set $Q\subseteq \Q(\I)$ with $\tau=\Theta(\frac{k\log |\Q(\I)|}{\eps^2\delta})$ samples from $\Q(\I)$, for a parameter $\eps\in(0,1)$. 
If the algorithm from~\cite{charikar2001algorithms} was executed on $Q$ we would have an additional runtime of $O(\tau^2)$ time. We show a faster implementation.
Let $\tree$ be a BBD tree constructed on $Q$. For every node $u$ of $\tree$ we store $u.a=1$ denoting that the node $u$ is active and $u.c=|\square_u\cap Q|$.
Let $S=\emptyset$.
We construct a WSPD as described in Section~\ref{sec:prelim}. Let $\Gamma$ be the sorted array of $O(\tau\eps^{-d})$ distances. The binary search is executed on $\Gamma$.
Let $r$ be a distance in $\Gamma$.
For every $p_i\in Q$, such that the set of nodes in $\tree$ from the root to the leaf node storing $p_i$ does not contain any inactive node,
we query $\tree$ with the $B(p_i,r)$ and it returns the set of canonical nodes $\canonical(B(p_i,r))$.
If we visit a node $u$ with $u.a=0$ then we stop the execution through this branch of the tree.
We set $c_i=\sum_{u\in \canonical(B(p_i,r))}u.c$. Let $p_{i^*}$ be the point such that $c_{i^*}$ is maximized. We add $p_{i^*}$ in $S$ and we search $\tree$ with the ball $B(p_{i^*},3r)$ (without considering branches of the tree with inactive nodes). Let $\canonical(B(p_{i^*},3r))$ be the returned set of canonical nodes.  For every node $u\in \canonical(B(p_{i^*},3r))$, we set $u.a=0$ and then we update all nodes from $\canonical(B(p_{i^*},3r))$ to the root of $\tree$ bottom up as follows. Let $u$ be a node and $v, w$ be its children. Initially, $u$ is the parent node of the deepest node in $\canonical(B(p_{i^*},3r))$. If $v.a=w.a=0$ then we set $u.a=0$. If $v.a=0$ and $w.a=1$ then $u.c=w.c$. If $v.a=1$ and $w.a=0$ then $u.c=v.c$. Finally, if $v.a=w.a=1$ then $u.c=v.c+w.c$. We continue the update with the parent node of $u$. After $k$ iterations we check whether $u^{(1)}.c>(1+\eps)\delta\tau$, where $u^{(1)}$ is the root of $\tree$. If yes, then we continue the binary search with larger values of $r$. If no, then we add in $T$ every point $p\in Q$ such that there is no inactive node from $u^{(1)}$ to the leaf node storing $p$.
In the end we return the last stored $S, T$.

\paragraph{Proof of correctness and runtime}
\begin{lemma}
    $\rho(S,\Q(\I)\setminus T)\leq 3(1+\eps)^2\rho^*_{k,z}(\Q(\I))$ and $|T|\leq (1+\eps)^2\delta|\Q(\I)|$ with probability at least $1-\frac{1}{N}$
\end{lemma}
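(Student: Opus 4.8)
The plan is to reduce the claim to the analysis of the sampling-based $k$-center-with-outliers algorithm of Charikar et al.~\cite{charikar2003better, charikar2001algorithms}, and to carry that analysis through while tracking the two multiplicative slacks introduced by our geometric implementation: the $(1+\eps)$ factor of the BBD-tree range queries and the $(1+\eps)$ factor of the WSPD distance list $\Gamma$.

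First I would fix $\opt=\rho^*_{k,z}(\Q(\I))$ and an optimal solution $C^\ast, T^\ast$ with $|T^\ast|\le z=\delta|\Q(\I)|$ and every point of $\Q(\I)\setminus T^\ast$ within $\opt$ of $C^\ast$. The heart of the argument is a uniform-convergence statement for the sample $Q$: for the range space whose ground set is $\Q(\I)$ and whose ranges are complements of unions of at most $k$ balls (VC dimension $O(k\log k)$ for constant $d$), a uniform sample of size $\tau=\Theta(k\log|\Q(\I)|/(\eps^2\delta))$ has the property that, simultaneously for all such ranges, the fraction of $Q$ inside the range is within a $(1\pm\eps)$ factor of the fraction of $\Q(\I)$ inside it; by a relative-error Chernoff bound and a union bound over an $\eps$-net this fails with probability at most $1/|\Q(\I)|\le 1/N$. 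Conditioning on this event, two things follow. (i) Applying it to the union of the $k$ balls of radius $\opt$ around $C^\ast$ shows that at most $(1+\eps)\delta\tau$ points of $Q$ lie outside it, so the greedy run on $Q$ at a radius $r^\ast$ that is within $(1+\eps)$ of the relevant $Q$-pairwise scale --- available in $\Gamma$ by the WSPD guarantee, and with $r^\ast=O((1+\eps)\opt)$ by the charging argument of~\cite{charikar2001algorithms}, where a chosen ball of radius $r^\ast$ meeting an optimal cluster of radius $\le\opt\le r^\ast$ is, after expansion to $3r^\ast$, contained in the removed region --- terminates with at most $(1+\eps)\delta\tau$ uncovered samples, so the binary search accepts. (ii) For the accepted radius, applying the uniform-convergence bound to the (data-dependent) union of $\le k$ expanded balls produced by the run turns ``at most $(1+\eps)\delta\tau$ uncovered samples'' into ``at most $(1+\eps)^2\delta|\Q(\I)|$ uncovered points of $\Q(\I)$'', giving $|T|\le(1+\eps)^2\delta|\Q(\I)|$.

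Next I would propagate the two $(1+\eps)$ factors. When the algorithm queries the BBD tree with $B(p,r^\ast)$ it receives a superset of $B(p,r^\ast)\cap Q$ contained in $B(p,(1+\eps)r^\ast)\cap Q$, so the counts $c_i$ only over-estimate the true ball counts and the densest-ball choice --- hence the charging argument --- remains valid; when it removes the canonical nodes of $B(p_{i^\ast},3r^\ast)$ it removes a superset of $B(p_{i^\ast},3r^\ast)\cap Q$ contained in $B(p_{i^\ast},3(1+\eps)r^\ast)\cap Q$, so every point of $\Q(\I)$ that is ultimately covered lies within $3(1+\eps)r^\ast$ of a center in $S$. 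Combined with $r^\ast\le(1+\eps)\opt$ (after re-scaling $\eps$ by a constant at the outset to absorb the extra WSPD factor into the stated bound), this yields $\rho(S,\Q(\I)\setminus T)\le 3(1+\eps)^2\opt$. Finally, the relational implementation does not affect the probabilities: the uniform sample $Q$ is drawn by $\mathsf{SampleRect}$ on the full space (Lemma~\ref{lem:Rects}), $|\Q(\I)|$ and the eventual uncovered count over $\Q(\I)$ are obtained by Yannakakis' algorithm~\cite{yannakakis1981algorithms} and $\mathsf{CountRect}$, and the BBD/WSPD structures are built only on the $\tau$ samples.

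The main obstacle is making the uniform-convergence step precise and sharp enough: one must verify that a single sample of size $\Theta(k\log|\Q(\I)|/(\eps^2\delta))$ is simultaneously good --- with relative error $\eps$ on all ranges of measure $\Omega(\delta)$, and with failure probability below $1/N$ --- for every union of $k$ balls the greedy could output, and it must be done with the BBD-expanded balls rather than exact balls (which only inflates the VC dimension by a constant factor). This is exactly where the sample size, the $1-1/N$ guarantee, and the first $(1+\eps)$ in the outlier bound originate; once it is in hand, the charging argument, the $(1+\eps)$ bookkeeping for the BBD and WSPD approximations, and the relational running-time accounting are all routine given the lemmas already established in the excerpt.
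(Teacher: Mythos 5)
Your proposal is correct and follows essentially the same route as the paper: run the Charikar et al.\ greedy on a uniform sample, invoke their sampling analysis for the $(1+\eps)^2\delta|\Q(\I)|$ bound on $|T|$, and track the two multiplicative $(1+\eps)$ losses coming from the BBD-tree ball queries and the WSPD distance list to get the $3(1+\eps)^2$ cost bound. The only difference is that you re-derive the sampling guarantee from scratch via a VC-dimension/relative-approximation argument, whereas the paper simply cites~\cite{charikar2003better} for that step; this makes your write-up more self-contained but does not change the structure of the argument.
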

\begin{proof}
    The correctness proof of our algorithm follows from the properties of the BBD tree. Every time that a node $u$ becomes inactive the points in $\square_u$ do not contribute in the counts of the active nodes, i.e., the update operation of the counts is correct. In each iteration, it finds a point $p_{i^*}$ with the maximum number of points in a region that contains $B(p_{i^*},r)$ and might contain some parts of $B(p_{i^*},(1+\eps)r)\setminus B(p_{i^*},r)$. Furthermore, every time that $p_{i^*}$ is added in $S$ all points in $Q$ within distance $3r$ and some within distance $3(1+\eps)r$ are explicitly removed from $Q$, i.e., the canonical nodes that contain these points become inactive. The analysis is the same as in~\cite{charikar2001algorithms} ($3$-approximation), however we have a multiplicative $(1+\eps)$ error due to the BBD tree approximation. Furthermore, because of the WSPD we might not get $r=\opt$. Instead, we guarantee that we will find at least a distance at most $(1+\eps)\opt$. Hence the overall approximation is $3(1+\eps)^2$ with probability at least $1-\frac{1}{N^2}$.

    The second part of the lemma $|T|\leq (1+\eps)^2\delta|\Q(\I)|$ follows directly from~\cite{charikar2003better}.
\end{proof}
Notice that $3(1+\eps/6)^2\leq 3+\eps$ and $(1+\eps/6)^2\leq (1+\eps)$.

\begin{lemma}
The running time of the algorithm is $O(N\log N +\eps^{-d}\tau\log \tau + k(\tau\log\tau + \tau\eps^{-d+1})\log\tau)$.
\end{lemma}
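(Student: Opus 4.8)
The plan is to charge the cost term by term along the execution described above, treating $d$ and $\eps$ as constants so that $\log|\Gamma|=O(\log(\tau\eps^{-d}))=O(\log\tau)$. First I would account for the preprocessing: Yannakakis' algorithm~\cite{yannakakis1981algorithms} computes $|\Q(\I)|$ in $O(N\log N)$ time; drawing the $\tau=\Theta(\tfrac{k\log|\Q(\I)|}{\eps^2\delta})$ uniform samples $Q$ by invoking the $\mathsf{SampleRect}$ oracle of Lemma~\ref{lem:Rects} with $\rec=\Re^d$ costs $O((N+\tau)\log N)=O(N\log N)$ (we only reach this branch when $|\Q(\I)|=\omega(N)$, in which regime $\tau=O(N)$; otherwise $\Q(\I)$ is materialized directly and the fast $\Re^d$ implementation is run on it). Building the BBD tree $\tree$ on the $\tau$ points of $Q$ costs $O(\tau\log\tau)$, and computing plus sorting the $O(\tau\eps^{-d})$ WSPD distances forming $\Gamma$ costs $O(\eps^{-d}\tau\log\tau)$ (Section~\ref{sec:prelim}). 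These already contribute the terms $O(N\log N+\eps^{-d}\tau\log\tau)$.

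Next I would bound the binary search. It performs $O(\log\tau)$ steps, each fixing a radius $r$ and running at most $k$ greedy rounds, and I claim a single round costs $O(\tau\log\tau+\tau\eps^{-d+1})$. For each of the at most $\tau$ active points $p_i$ we issue the query $\tree(B(p_i,r))$, which returns $|\canonical(B(p_i,r))|=O(\log\tau+\eps^{-d+1})$ canonical nodes within the same time bound; pruning the recursion through inactive nodes only helps. Summing the stored counts over these nodes and taking the global maximum $c_{i^*}$ adds $O(\tau(\log\tau+\eps^{-d+1}))$. The single extra query $\tree(B(p_{i^*},3r))$ costs $O(\log\tau+\eps^{-d+1})$, and deactivating its $O(\log\tau+\eps^{-d+1})$ canonical nodes together with the bottom-up repair of the activity flags and counts touches only those nodes and their ancestors, i.e.\ $O((\log\tau+\eps^{-d+1})\log\tau)$ nodes, which lies within $O(\tau(\log\tau+\eps^{-d+1}))$ since $\log\tau\le\tau$. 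Hence one round is $O(\tau\log\tau+\tau\eps^{-d+1})$, a full radius is $O(k(\tau\log\tau+\tau\eps^{-d+1}))$ plus an $O(\tau\log\tau)$ post-check of the root count and extraction of $T$, and multiplying by the $O(\log\tau)$ binary-search steps yields the last term $O(k(\tau\log\tau+\tau\eps^{-d+1})\log\tau)$. Summing the three contributions gives the claimed bound.

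The step I expect to require the most care is the structural accounting for the bottom-up repair: one must maintain the invariant that both the range query and the update procedure abort a recursion branch as soon as it reaches an already-inactive node, so that across the $k$ rounds of a given radius no inactive subtree is ever re-examined, and therefore the repair work per round stays $O((\log\tau+\eps^{-d+1})\log\tau)$ rather than degrading toward $\Omega(\tau)$ per canonical node. Spelling out this invariant, and verifying that the union of root-to-node paths over the canonical set has size $O((\log\tau+\eps^{-d+1})\log\tau)$, is the crux of the argument; the remaining work is the routine bookkeeping of the $\log$- and $\eps$-factors under the standing convention that $d$ and $\eps$ are constants.
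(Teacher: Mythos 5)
Your accounting matches the paper's own proof term for term: sampling via the oracle of Lemma~\ref{lem:Rects}, BBD-tree construction in $O(\tau\log\tau)$, WSPD construction of $\Gamma$ in $O(\eps^{-d}\tau\log\tau)$, and $O(\log\tau)$ binary-search steps each running $k$ rounds of cost $O(\tau\log\tau+\tau\eps^{-d+1})$. The only quibble is your unjustified claim that $\tau=O(N)$ in the regime $|\Q(\I)|=\omega(N)$ (it need not be, since $\tau=\Theta(k\log|\Q(\I)|/(\eps^2\delta))$ with $\delta=z/|\Q(\I)|$), but the resulting $O(\tau\log N)$ sampling term is absorbed by the remaining terms of the stated bound in either case ($\tau\le N$ or $\tau>N$), so the conclusion stands.
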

\begin{proof}
    Using Lemma~\ref{lem:Rects} we can sample $\tau$ points from $\Q(\I)$ in $O(N\log N+\tau\log N)$ time. The BBD tree $\tree$ is constructed in $O(\tau\log \tau)$ time and the set $\Gamma$ in $O(\eps^{-d}\tau\log \tau)$ time. The binary search is executed for at most $O(\log (\eps^{-d}\tau))=O(\log \tau)$ times. The next steps are executed at most $k$ times. We find all points that are not contained in any inactive node in $O(\tau\log\tau)$ time. For each such point $p_i$ we compute $\canonical(B(p_i,r))$ in $O(\log \tau +\eps^{-d+1})$ time. Furthermore the set $\canonical(B(p_{i^*},3r))$ and the update procedure is executed in $O(\log \tau +\eps^{-d+1})$ time, from the definition of the BBD tree. 
\end{proof}

We conclude with the next theorem.
\begin{theorem}
    \label{thm:main1}
    For an acyclic join query $\Q$ of $d$ attributes, a database $\I$ of input size $N$ with $|\Q(\I)|=\omega(N)$, integer positive parameters $k, z$, and a small constant $\eps\in (0,1)$, 
    there exists an
    $(1,1+\eps,3+\eps)$-approximation for the $\relclusterone(\Q,\I,k,z)$ problem with probability at least $1-\frac{1}{N}$ that runs in $O\left(N\log N + \frac{k^2}{\delta}\log^3 N\right)$ time, where $\delta=\frac{z}{|\Q(\I)|}$. If $|\Q(\I)|=O(N)$, then the running time is bounded by $O(kN\log^2 N)$.
\end{theorem}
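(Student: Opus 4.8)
The plan is to reduce $\relclusterone(\Q,\I,k,z)$ to the classical Euclidean $k$-center clustering with $z$ outliers over the point set $P=\Q(\I)\subseteq\Re^d$, and then to implement a sampling-based solver relationally so that we never materialize $\Q(\I)$. First I would observe that any valid solution $S,T$ for $\relclusterone$ is precisely a valid solution to $k$-center with outliers on $P=\Q(\I)$, so it suffices to approximate the latter. The obstacle is that $|\Q(\I)|$ may be $\omega(N)$, so even reading the point set is too expensive; the remedy, following Charikar et al.~\cite{charikar2003better}, is to draw $\tau=\Theta\bigl(\tfrac{k\log|\Q(\I)|}{\eps^2\delta}\bigr)$ uniform samples $Q$ from $\Q(\I)$ and run the greedy peeling algorithm of~\cite{charikar2001algorithms} on $Q$ alone. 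To do this relationally I would first compute $|\Q(\I)|$ with Yannakakis' algorithm~\cite{yannakakis1981algorithms} (or the $\mathsf{CountRect}$ oracle of Lemma~\ref{lem:Rects} applied to the whole space), set $\delta=z/|\Q(\I)|$, and then invoke the $\mathsf{SampleRect}$ oracle of Lemma~\ref{lem:Rects} to obtain $Q$ in $O((N+\tau)\log N)$ time.

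Next I would accelerate the $O(\tau^2)$-time peeling algorithm of~\cite{charikar2001algorithms} on $Q$ to near-linear time in $\tau$ using a BBD tree $\tree$ over $Q$ together with a WSPD (Section~\ref{sec:prelim}) that generates a sorted array $\Gamma$ of $O(\eps^{-d}\tau)$ candidate radii. The algorithm binary searches over $\Gamma$; for a fixed $r$ it performs $k$ rounds, in each round using $\tree$ to approximately count, for every still-active sample $p$, the number of active samples in a region sandwiched between $B(p,r)$ and $B(p,(1+\eps)r)$, picking the maximizer $p^*$, adding it to $S$, and then ``peeling'' the canonical nodes of $B(p^*,3r)$ (marking them inactive) and updating subtree counts bottom-up; after $k$ rounds it reports the remaining active points as $T$ and keeps the current solution if $|T|\le(1+\eps)\delta\tau$. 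Each round costs $O(\tau\log\tau+\tau\eps^{-d+1})$, and there are $O(\log\tau)$ binary-search steps, so this phase runs in $O\bigl(k(\tau\log\tau+\tau\eps^{-d+1})\log\tau\bigr)$ time on top of $O(\eps^{-d}\tau\log\tau)$ for building $\tree$ and $\Gamma$.

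For correctness I would compose three sources of error. The analysis of~\cite{charikar2003better} guarantees that, with probability at least $1-1/|\Q(\I)|^{2}$, running the exact peeling algorithm on $\tau$ uniform samples yields $|T|\le(1+\eps)^2\delta|\Q(\I)|=(1+\eps)^2 z$ and clustering cost at most $3\cdot\rho^*_{k,z}(\Q(\I))$. The BBD tree replaces exact ball counts by counts over a region between $B(p,r)$ and $B(p,(1+\eps)r)$, inflating the effective radius by at most a $(1+\eps)$ factor; the WSPD guarantees $\Gamma$ contains a value in $[(1-\eps)\rho^*,(1+\eps)\rho^*]$, contributing another $(1+\eps)$ factor. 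Composing these gives a $(1,(1+\eps)^2,3(1+\eps)^2)$-approximation; rescaling $\eps\leftarrow\eps/6$ yields the claimed $(1,1+\eps,3+\eps)$ bound, since $3(1+\eps/6)^2\le 3+\eps$ and $(1+\eps/6)^2\le 1+\eps$, and the failure probability is at most $1/|\Q(\I)|^2\le 1/N$.

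Finally, substituting $\tau=\Theta\bigl(\tfrac{k\log|\Q(\I)|}{\eps^2\delta}\bigr)$ and using $\log|\Q(\I)|=O(\log N)$ (in data complexity $|\Q(\I)|=O(N^{O(1)})$) bounds the total running time by $O\bigl(N\log N+\tfrac{k^2}{\delta}\log^3 N\bigr)$; when $|\Q(\I)|=O(N)$ one instead materializes $\Q(\I)$ via Yannakakis in $O(N\log N)$ time and runs the accelerated peeling directly on the $O(N)$ points, giving $O(kN\log^2 N)$. The main obstacle I anticipate is the efficient BBD-tree implementation of the peeling step --- maintaining the active/inactive flags and subtree counts correctly under repeated ball removals, and arguing that the $(1+\eps)$ slack in the ball queries is harmless with respect to the sampling analysis of~\cite{charikar2003better}; the rest is bookkeeping of the binary search over $\Gamma$.
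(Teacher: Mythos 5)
Your proposal matches the paper's proof essentially step for step: the same reduction to $k$-center with outliers on $\Q(\I)$, the same use of Yannakakis plus the $\mathsf{SampleRect}$ oracle of Lemma~\ref{lem:Rects} to draw $\tau=\Theta\bigl(\tfrac{k\log|\Q(\I)|}{\eps^2\delta}\bigr)$ samples, the same BBD-tree/WSPD acceleration of the peeling algorithm of~\cite{charikar2001algorithms}, the same composition of the three $(1+\eps)$-type error sources with the rescaling $\eps\leftarrow\eps/6$, and the same running-time accounting including the $|\Q(\I)|=O(N)$ fallback. No gaps; this is the paper's argument.
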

\fi

\section{Missing details and proofs in Section~\ref{subsec:RelClustRupleOutliers}}
\label{appndx:tupleoutliers}

\subsection{Correctness and runtime  in Section~\ref{subsubsec:TupleOutliersOneRel}}

The correctness of our approach follows from the correctness of the binary search from~\cite{agarwal2024computing}, the correctness of our algorithm in Section~\ref{subsec:GeometricDS}, and the correctness of the relational $k$-center clustering in Lemma~\ref{lem:relkCenter}.
We have $|S|\leq (2+\eps)k$ and $|T|\leq 2z$.
The binary search on $L_\infty$ norm might not find the optimum cost $\hat{\rho}^{*}_{k,z,1}(\Q(\I))$ but a number $r$ such that $\hat{\rho}^{*}_{k,z,1}(\Q(\I))\leq r\leq (1+\eps)\hat{\rho}^{*}_{k,z,1}(\Q(\I))$, similarly to the WSPD in Section~\ref{subsec:GeometricDS}. 
Thus, our algorithm has the same approximation factor as in Theorem~\ref{thm:coresetgeomfinaldisjoin}.

The overall time to compute the pairwise distances to check in the binary search takes $O(N\log^2 N)$ time. For each pairwise distance $r$, and for every tuple $t\in R_1$, we run the algorithm from Lemma~\ref{lem:relkCenter} in $O(k^2N)$ time. Hence, in total, the running time to run all instances of the relational $k$-center clustering algorithm over all distance guesses is $O(k^2\cdot N^2 \cdot \log N)$. Using the algorithm from Theorem~\ref{thm:geomfinal} (for $f=1$), we get that the total running time of our algorithm is $O\left((k^2N^2+kN(k+z)\log^{d+1}N)\log N\right)$.
Using the algorithm from Theorem~\ref{thm:coresetgeomfinaldisjoin}, the running time slightly improves to $O((k^2N^2+ kN\log^{d-1}(N) +(k+z)(kz+\beta_3)\log^{d+2}(kz))\cdot \log N)$, where $\beta_3=\min\{N,kz\}$.
For simplicity, and in most real applications, we can assume that the first term dominates all the other terms in the running time.

\subsection{Proof of Theorem~\ref{thm:FPT}}
We first describe the complete algorithm from Section~\ref{subsubsec:TupleOutliersMultRels}.

\paragraph{Algorithm}
We repeat the following procedure for $\tau=\Theta(2^{g\cdot k+z}\log N)$ iterations. For iteration $\ell$,
initially $\I_1=\I_2=\emptyset$.
Each tuple $t\in \I$ is placed in $\I_1$ with probability $\frac{1}{2}$ and in $\I_2$ in $\frac{1}{2}$.
We execute the oracle $\mathsf{RelCluster}(\Q,\I_1,k)$ and we get a set of $k$-centers $S_1$ and a value $r_{S_1}$ such that $\rho(S_1,\Q(\I_1))\leq r_{S_1}\leq (2+\eps)\rho^*_k(\Q(\I_1))$.
Then, we execute the binary search on the $L_\infty$ norm of the pairwise distances in $\Q(\I)$ as briefly described in Section~\ref{subsubsec:TupleOutliersOneRel} (also in~\cite{agarwal2024computing}). For every $r$ in the binary search we set $T'=\emptyset$ and $\hat{r}=r_{S_1}+\sqrt{d}r$. For every $p\in S_1$ we define a hyper-cube $\square_p$ with center $p$ and side length $2\hat{r}$. In other words, $\square_p$ is the ball of center $p$ and radius $\hat{r}$ in $L_\infty$ norm. Let $G=\bigcup_{p\in S_1}\square_p$ and let $\widebar{\mathcal{M}}(G)$ be the hyper-rectangular decomposition of the complement of $G$~\cite{agarwal2000arrangements, halperin2017arrangements}. For each hyper-cube $\square\in \widebar{\mathcal{M}}(G)$ we check whether $\square\cap \Q(\I\setminus T')$ is empty using the oracle in Lemma~\ref{lem:Rects}. If it is empty we continue with the next hyper-cube in $\widebar{\mathcal{M}}(G)$. If it is not empty, then we use again the oracle from Lemma~\ref{lem:Rects} to get any tuple $q\in \square\cap \Q(\I)$. We check whether $q\in \Q(\I_2)$. If no, then we break this execution of the algorithm and we continue the binary search with larger values of $r$. If indeed $q\in \Q(\I_2)$ then we add in $T'$ all tuples (one from every relation) whose join construct $q$, i.e., $T'=T'\cup (\bigcup_{j\in[g]}\pi_{\allattr_j}(g))$. 
We continue getting the tuples $q$ until there is no other remaining tuple over all hyper-cubes in $\widebar{\mathcal{M}}$  or until we visited $z$ tuples in total.
In the end, we check whether there exists a square $\square \in \widebar{\mathcal{M}}(G)$ such that $\square\cap \Q(\I\setminus T')\neq\emptyset$. If yes, then we break the execution of the algorithm and we continue the binary search with larger values of $r$. If not, then we say that it was a valid iteration of the binary search; we store the set of tuple outliers $T'$ and the cost $r'=\hat{r}$.

After the end of the execution of the binary search, let $S_\ell=S_1$, $T_\ell=T'$ and $r_\ell=r'$ where $S_1$, $T'$, $r'$ are the set of centers, tuple outliers and radius, respectively, in the last valid iteration of the binary search.
After repeating the above procedure for $\tau$ iterations we set $T=T_{\ell^*}$, where $\ell^*=\argmin_{\ell\in [\tau]}r_\ell$. For every hyper-cube $\square\in \bigcup_{p\in S_{\ell^*}}\square_p$ we get one representative point $s_{\square}\in \square\cap \Q(\I\setminus T)$ running the oracle in Lemma~\ref{lem:Rects}. We set $S=\{s_\square\mid \square\in \bigcup_{p\in S_{\ell^*}}\square_p\}$ and we return $S, T$.

We prove Theorem~\ref{thm:FPT} showing the next two lemmas.

\begin{lemma}
\label{lem:last1}
    It holds that $|S|\leq k$, $|T|\leq g\cdot z$ and $\rho(S,\Q(\I\setminus T))\leq O(1)\cdot \hat{\rho}^*_{k,z}(\Q(\I))$, with probability at least $1-\frac{1}{N}$.
\end{lemma}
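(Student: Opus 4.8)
The plan is to treat the three assertions separately; the two cardinality bounds are deterministic and immediate, while the cost bound rests on a lucky‑partition argument. For $|S|\le k$: the returned $S$ holds at most one representative $s_\square$ per hypercube of $\bigcup_{p\in S_{\ell^*}}\square_p$, and $|S_{\ell^*}|=|S_1|\le k$. For $|T|\le g z$: each augmentation of $T'$ inserts the at most $g$ input tuples $\pi_{\allattr_1}(q),\dots,\pi_{\allattr_g}(q)$ producing a single join result $q$, and this happens for at most $z$ distinct $q$ within any valid binary‑search step, so $|T|=|T_{\ell^*}|\le g z$.

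For the cost bound and the $1-\tfrac1N$ success probability, fix an optimal solution $(S^*,T^*)$ with $|S^*|\le k$, $|T^*|\le z$, $\rho(S^*,\Q(\I\setminus T^*))=\hat{\rho}^*_{k,z}(\Q(\I))=:r^*$ (Euclidean), and let $U^*$ be the at most $g k$ input tuples producing the points of $S^*$; since $S^*\subseteq\Q(\I\setminus T^*)$, $U^*\cap T^*=\emptyset$. Call an iteration \emph{good} if the random split places $U^*$ in $\I_1$ and $T^*$ in $\I_2$; this has probability at least $2^{-(g k+z)}$, so over the $\tau=\Theta(2^{g k+z}\log N)$ independent iterations the chance that no good iteration occurs is at most $(1-2^{-(g k+z)})^{\tau}\le e^{-\Theta(\log N)}\le \tfrac1N$. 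In a good iteration, $U^*\subseteq\I_1$ forces $S^*\subseteq\Q(\I_1)$, while $T^*\subseteq\I_2$ gives $\I_1\subseteq\I\setminus T^*$, hence $\Q(\I_1)\subseteq\Q(\I\setminus T^*)$ and therefore $\rho^*_k(\Q(\I_1))\le\rho(S^*,\Q(\I_1))\le\rho(S^*,\Q(\I\setminus T^*))=r^*$; so the oracle of Lemma~\ref{lem:relkCenter} returns $r_{S_1}\le(2+\eps)r^*$.

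Next I would analyse the binary‑search guess $r=r^*_\infty$, the $L_\infty$‑cost of $(S^*,T^*)$, which satisfies $r^*_\infty\le r^*$ and is realised as a pairwise $L_\infty$ distance of points of $\Q(\I)$, hence is a candidate. There $\hat r=r_{S_1}+\sqrt d\,r^*_\infty=O(r^*)$, and a double triangle inequality shows every $t\in\Q(\I)$ outside $G=\bigcup_{p\in S_1}\square_p$ must be generated using a tuple of $T^*$: otherwise $t\in\Q(\I\setminus T^*)$ lies within $L_\infty$‑distance $r^*_\infty$ of some $s^*\in S^*\subseteq\Q(\I_1)$, which lies within $L_\infty$‑distance $r_{S_1}$ of some $p\in S_1$, so $\|t-p\|_\infty\le r_{S_1}+r^*_\infty\le\hat r$, i.e.\ $t\in\square_p\subseteq G$. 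Thus every join result lying in a cell of the complement decomposition $\overline{\mathcal M}(G)$ uses a tuple of $T^*\subseteq\I_2$; the result $q$ the oracle extracts from a nonempty cell therefore passes the $q\in\Q(\I_2)$ test and is witnessed by an as‑yet‑unused tuple of $T^*$, so inserting $q$'s tuples into $T'$ consumes a fresh tuple of $T^*$. After at most $|T^*|\le z$ extractions $T^*\subseteq T'$, whence $\Q(\I\setminus T')\subseteq\Q(\I\setminus T^*)\subseteq G$ and all cells are empty: the guess $r^*_\infty$ is valid and the iteration's stored radius is $r_\ell=\hat r=O(r^*)$. Finally, since $r_{\ell^*}=\min_\ell r_\ell\le O(r^*)$ and iteration $\ell^*$ is itself valid — so $\Q(\I\setminus T_{\ell^*})\subseteq\bigcup_{p\in S_{\ell^*}}\square_p$ with each $\square_p$ an $L_\infty$‑ball of radius $r_{\ell^*}$ — every surviving $t$ lies in some $\square_p$, whose representative $s_p\in\square_p\cap\Q(\I\setminus T)$ exists ($t$ witnesses non‑emptiness) and satisfies $\|t-s_p\|_2\le\sqrt d\,\|t-s_p\|_\infty\le 2\sqrt d\,r_{\ell^*}=O(r^*)$, giving $\rho(S,\Q(\I\setminus T))=O(1)\cdot\hat{\rho}^*_{k,z}(\Q(\I))$.

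The main obstacle is the claim that in the good iteration the binary search does not skip past $r^*_\infty$: one must argue that every join result the oracle pulls from an uncovered cell is entirely built from $\I_2$‑tuples (so the $q\in\Q(\I_2)$ test passes) and can be charged to a fresh optimal outlier of $T^*$, which simultaneously bounds the number of extractions by $z$ and certifies validity at a radius $O(r^*)$. Pinning down this charging — together with the minor point that the oracle of Lemma~\ref{lem:Rects} must be invoked on the residual instance $\I\setminus T'$ rather than on $\I$ — is where the real care lies; the remainder is bookkeeping of the $\sqrt d$ and $(2+\eps)$ constants.
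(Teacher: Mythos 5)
Your proposal is correct and follows essentially the same route as the paper's proof: the two cardinality bounds are the same deterministic bookkeeping, the $2^{-(gk+z)}$ good-iteration analysis is identical, and the cost bound uses the same triangle-inequality argument that any join result outside $G$ must be built from a tuple of $T^*$, charged to a fresh optimal outlier, followed by the same $2\sqrt{d}\,r_{\ell^*}$ doubling for the representatives (your direct use of $S^*\subseteq\Q(\I_1)$ to get $\rho^*_k(\Q(\I_1))\leq r^*$ is a slight tightening of the paper's factor-$2$ slack). The step you flag as the main obstacle --- that the extracted $q$ passes the $q\in\Q(\I_2)$ test --- is justified in the paper by exactly the same assertion you give (that $q$ uses a tuple of $T^*\subseteq\I_2$), so your treatment matches the paper's level of rigor there as well.
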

\begin{proof}
    By definition for every center in $S_{\ell^*}$ we add a point in $S$. Recall that $S_{\ell^*}$ is the set of $k$ centers returned by the relational $k$-center clustering algorithm from Lemma~\ref{lem:relkCenter}, so indeed $|S|\leq k$.
    Next, we note that in every iteration of the algorithm we visit at most $z$ points from $\Q(\I)\cap \widebar{\mathcal{M}}(G)$. For each such points we add at most $g$ tuples from $\I$ in $T'$. Thus for every $\ell\in [\tau]$, $|T_{\ell}|\leq g\cdot z$, and hence $|T|\leq g\cdot z$.

    Next, we show the approximation guarantee that holds with high probability. The probability that all tuples from $\I$ that are joined to generate $S^*$ are included in $\I_1$ and all tuples in $T^*$ are included in $\I_2$ is $\prod_{t\in T}\frac{1}{2}\prod_{s\in \mathsf{Tuples}(S^*)}\frac{1}{2}$, where $\mathsf{Tuples}(S^*)\subseteq \I$ is the set of tuples such that are needed to joined to generate all points in $S^*$. By definition $|S^*|\leq k$ and there are $g$ tables in $\allrel$ so $|\mathsf{Tuples}(S^*)|\leq g\cdot k$. Hence the probability that all tuples from $\I$ that are joined to generate $S^*$ are included in $\I_1$ and all tuples in $T^*$ are included in $\I_2$ is bounded by $\frac{1}{2^{g\cdot k+z}}$. Since the algorithm is repeated for $\Theta(2^{g\cdot k+z}\log N)$ iterations, there will be at least one iteration where $S^*\subseteq \Q(\I_1)$ and $T^*\subseteq \I_2$ with probability at least $1-\frac{1}{N}$. Without loss of generality, assume that this is the $\ell$-th iteration.

Let $r^*=\rho^*(S^*,\Q(\I\setminus T^*))$.
By the definition of the binary search on $L_\infty$ distances, it will guess a value $r$ such that $r\leq r^*\leq \sqrt{d}r$.
We show that this iteration of the binary search indeed computes $T_\ell$ (a valid step of the binary search).
Let $t\in \Q(\I)\cap \widebar{\mathcal{M}}(G)$. We show that at least one tuple that is joined to construct $t$ belongs in $T^*$.
We have
$\dist(t,S^*)>r_{S_1}+\sqrt{d}r - r_{S_1}\geq r^*$.
Hence $t\in\Q(\I_2)$
and indeed every tuple in $\Q(\I)\cap \widebar{\mathcal{M}}(G)$ contains at least one tuple outlier from $T^*$. Every time we visit a tuple $t\in \Q(\I\setminus T')\cap \widebar{\mathcal{M}}(G)$ we remove all the tuples in $\I$ that are joined to construct it. Thus after visiting $z$ tuples we have removed at least $z$ tuple outliers, and by definition no other tuple should remain in $\Q(\I\setminus T')\cap \widebar{\mathcal{M}}(G)$. We conclude that $T_\ell=T'$ is set in our algorithm.

From our discussion above the set $T_{\ell^*}$ is well-defined.
By definition, all points in $\Q(\I\setminus T_{\ell^*})$ are within (Euclidean) distance at most $\sqrt{d}r_{\ell^*}$ from the points in $S_{\ell^*}$.
In the end, the set $S$ is constructed by adding one  point in $\square\cap \Q(\I\setminus T_{\ell^*})$ for every $\square\in \bigcup_{p\in S_{\ell^*}}\square_p$. So all points in $\Q(\I\setminus T_{\ell^*})$ are actually within distance $2\sqrt{d}r_{\ell^*}$ from the points in $S$.
By definition $2\sqrt{d}r_{\ell^*}\leq 2\sqrt{d} r_\ell\leq 2\sqrt{d}r'\leq 2\sqrt{d}(r_{S_1}+\sqrt{d}r)$.
It holds that $r\leq r^*$, and 
$\rho(S_1,\Q(\I_1))\leq r_{S_1}\leq (2+\eps)\rho^*_k(\Q(\I_1))\leq 2(2+\eps)r^*\leq 6r^*$. The  inequality $(2+\eps)\rho^*_k(\Q(\I_1))\leq 2(2+\eps)r^*$ holds because $\Q(\I_1)\subseteq \Q(\I\setminus T^*)$.
Hence, for every point $q\in \Q(\I\setminus T_{\ell^*})$, it holds that $\dist(q,S)\leq 2\sqrt{d}r_{\ell^*}\leq 2(d+6\sqrt{d})r^*=2(d+6\sqrt{d})\rho^*(S^*,\Q(\I\setminus T^*))$. 
We conclude that $\rho(S,\Q(\I\setminus T))\leq 2(d+6\sqrt{d})\hat{\rho}^*_{k,z}(\Q(\I))$, with probability at least $1-\frac{1}{N}$. The lemma follows because $2(d+6\sqrt{d})=O(1)$.

\end{proof}

\begin{lemma}
\label{lem:last2}
    The algorithm runs in $O(2^{g\cdot k+z}\cdot (z+ k^d)\cdot N\log^3 N)$ time.
\end{lemma}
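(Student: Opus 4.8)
The plan is to establish the bound by a direct accounting of the three nested loops of the algorithm — the $\tau=\Theta(2^{g\cdot k+z}\log N)$ outer repetitions, the binary search over $L_\infty$ pairwise distances run inside each repetition, and the inner sweep over the cells of $\widebar{\mathcal{M}}(G)$ that harvests outlier tuples — plugging in the per-call costs of the oracles from Lemma~\ref{lem:Rects} and Lemma~\ref{lem:relkCenter}. No new idea is required; the only place that needs care is the amortization of the harvesting loop, so that is where I would put the weight of the argument.

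First I would fix one outer iteration. Sampling the partition $\I_1,\I_2$ costs $O(N)$, and the single call $\mathsf{RelCluster}(\Q,\I_1,k)$ costs $O(k^2N)$ by Lemma~\ref{lem:relkCenter}. For the binary search I would use the result of Agarwal et al.~\cite{agarwal2024computing} that the $j$-th smallest $L_\infty$ pairwise distance in $\Q(\I)$ is computable in $O(N\log N)$ time; since $g,d=O(1)$ we have $|\Q(\I)|\le N^{g}=N^{O(1)}$, so there are $N^{O(1)}$ candidate distances and the search makes $O(\log N)$ steps. Inside one step, with the guess $r$ fixed, building $G=\bigcup_{p\in S_1}\square_p$ is $O(k)$, and the hyper-rectangular decomposition $\widebar{\mathcal{M}}(G)$ of the complement of $k$ axis-parallel boxes in $\Re^d$ has $O(k^d)$ cells and is built in $O(k^d\log^{O(1)}N)$ time from the grid induced by the $O(k)$ distinct bounding coordinates per axis (a lower-order term).

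The crux is the tuple-harvesting loop, and I would organize it so that it uses only $O(k^d+z)$ oracle calls per binary-search step. Process the cells of $\widebar{\mathcal{M}}(G)$ one at a time; for a cell $\square$, repeatedly call $\mathsf{SampleRect}$ (Lemma~\ref{lem:Rects}) to pull out a tuple $q\in\square\cap\Q(\I\setminus T')$, add its $g$ constituent input tuples to $T'$, refresh the instance $\I\setminus T'$, and stop once $\mathsf{CountRect}$ reports $\square\cap\Q(\I\setminus T')=\emptyset$, then advance to the next cell. Each successful extraction deletes at least one input tuple lying outside every $\square_p$ and so is charged against the outlier budget: at most $z$ extractions occur over all cells; each cell incurs exactly one ``empty'' confirmation, giving $O(k^d)$ such calls, and the final certifying sweep over all cells adds another $O(k^d)$. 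Every oracle call runs on an instance of size $O(N)$, hence costs $O(N\log N)$ by Lemma~\ref{lem:Rects}. Therefore one binary-search step costs $O\bigl((k^d+z)\,N\log N\bigr)$, one outer iteration costs $O(k^2N)+O(\log N)\cdot O\bigl((k^d+z)N\log N\bigr)=O\bigl((k^d+z)\,N\log^2 N\bigr)$ (the $k^2N$ term being absorbed, and $d=1$ being a degenerate case), and the postprocessing — taking $\ell^\ast=\argmin_\ell r_\ell$ in $O(\tau)$ time and extracting the $k$ representatives $s_\square$ with $O(kN\log N)$ further oracle calls — is lower order. Multiplying the per-iteration bound by $\tau=\Theta(2^{g\cdot k+z}\log N)$ yields the claimed $O\bigl(2^{g\cdot k+z}\,(z+k^d)\,N\log^3 N\bigr)$. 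The one subtlety to watch is precisely the loop ordering above: without the ``finish a cell before moving on'' rule and the charging of extractions to the budget, naively re-scanning all $O(k^d)$ cells after each extraction would inflate the count to $O(z\cdot k^d)$ oracle calls per step and break the bound.
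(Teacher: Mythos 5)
Your accounting matches the paper's proof essentially step for step: $\tau=\Theta(2^{gk+z}\log N)$ outer iterations, one $O(k^2N)$ call to $\mathsf{RelCluster}$ per iteration, $O(\log N)$ binary-search steps each generating its distance in $O(N\log N)$ time, and per step $O(k^d+z)$ calls to the $O(N\log N)$ oracle of Lemma~\ref{lem:Rects} (one emptiness check per cell of $\widebar{\mathcal{M}}(G)$ plus at most $z$ extractions charged to the outlier budget). The extra care you take in amortizing the harvesting loop makes explicit what the paper states only as ``we allow to visit at most $z$ points over all cells,'' but the argument is the same.
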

\begin{proof}
    The algorithm is repeated for $O(2^{g\cdot k+z}\log N)$ iterations. For each iteration we run the relational $k$-center clustering from Lemma~\ref{lem:relkCenter} on $\Q,\I_1$ in $O(k^2N)$ time. Then the binary search takes $O(N\log^2 N)$ time to generate all $O(\log N)$ distances that our algorithm checks~\cite{agarwal2024computing}. For every step of the binary search, we construct the complement of the arrangement $\widebar{\mathcal{M}}(G)$ in $O(k^d)$ time~\cite{agarwal2000arrangements}. For every cell $\square\in \widebar{\mathcal{M}}(G)$, we run the oracle in Lemma~\ref{lem:Rects} in $O(N\log N)$ time to count $|\square\cap \Q(\I)|$ or get a point from $\square\cap \Q(\I)$. In total, we allow to visit at most $z$ points over  $\{\square\mid\square\in \widebar{\mathcal{M}}(G)\}\cap \Q(\I)$ Furthermore, notice that it is straightforward to check whether $t\in \Q(\I_2)$ in $O(g\log N)=O(\log N)$ time. The running time follows.
\end{proof}

\end{document}